\newtheorem{theorem}{Theorem}[]
\newtheorem{corollary}{Corollary}[]
\newtheorem{proposition}{Proposition}[]
\newtheorem{lemma}[]{Lemma}
\newtheorem{rmk}{Remark}[]
\newtheorem{definition}{Definition}
\DeclarePairedDelimiter{\ceil}{\lceil}{\rceil}
\begin{document}

	\title{\fontsize{22.8}{27.6}\selectfont Beamwidth Optimization and Resource Partitioning Scheme for Localization Assisted mm-wave Communications}
	\author{{$^*$}Gourab Ghatak$^{1}$, {$^*$}Remun Koirala$^{2,3,4}$, Antonio De Domenico$^2$, Beno\^{i}t Denis$^2$, Davide Dardari$^4$, Bernard Uguen$^3$, and Marceau Coupechoux$^5$
     \\ \small{ $^1$Indraprastha Institute of Information Technology Delhi (IIIT D), India; $^2$CEA-LETI, Grenoble,
France; $^3$University of Rennes 1-IETR (CNRS UMR 6164), Rennes, France; $^4$DEI, University of Bologna, Cesena, Italy; $^5$LTCI, Telecom Paris, Institut Polytechnique de Paris, France;}}
		\maketitle
 \let\thefootnote\relax\footnote{$^*$These authors have equal contributions in the work. \\ This work has been carried out in the frame of the SECREDAS project, which is partly funded by the European Commission (H2020 EU.2.1.1.7 ECSEL – GA 783119). The work of M. Coupechoux has been carried out at LINCS (http://www.lincs.fr).}
        \vspace{-2.5cm}
	\begin{abstract}
	\vspace{-0.5cm}
We study a \ac{mm-wave} wireless network deployed along the roads of an urban area, to support localization and communication services simultaneously for outdoor mobile users.
In this network, we propose a \ac{mm-wave} initial beam-selection scheme {based on localization-bounds}, which greatly reduces the initial access delay as compared to traditional initial access schemes for standalone \ac{mm-wave} \ac{BS}.
%
%Specifically, to provide initial-access to the users, the base station (BS) tunes the beamwidth of its directional antenna in an iterative manner to precisely locate the user, given tolerable beam-selection error constraints. 
%
%To provide the initial-access to the users, the \ac{mm-wave} \ac{BS} adaptively tunes its beamwidth based on the \sout{iteratively refined} \textcolor{red}{estimated} position of the user, until it reaches a tolerable level of beam-selection error.
%
Then, we introduce a downlink transmission protocol, in which the radio frames are partitioned into three phases, {namely}, initial access, data, and localization, respectively. %The partitioning is controlled by a factor $\beta$, which is adjusted to optimize the localization and communication performances. 
%
%To mathematically analyze the performance of this scheme, we first characterize the beam-selection error and the beam-misalignment error in the localization phase with respect to $\beta$. 
%
%Based on this, we derive the downlink \ac{SINR} coverage probability and  rate coverage probability.
%
We establish a trade-off between the localization and communication performance of \ac{mm-wave} systems, and show how enhanced localization can actually improve the data-communication performance.
Our results suggest that {dense} \ac{BS} deployments enable to allocate more resources to the data phase while still maintaining appreciable localization performance. 
Furthermore, for the case of sparse deployments and large beam dictionary size (i.e., with thinner beams), more resources must be allotted to the localization phase for optimizing the rate coverage.
%Finally we optimize the parameter $\beta$ and the beamwidth of the \ac{mm-wave} antennas, so as to maximize the rate coverage probability of the users.
Based on our results, we provide several system design insights and dimensioning rules for the network operators that will deploy the first generation of \ac{mm-wave} \acp{BS}.
	\end{abstract}
	\vspace*{-1.2cm}

\begin{IEEEkeywords}
\vspace*{-.4cm}
{Millimeter-wave communications, Stochastic Geometry, Localization, Positioning, Initial Access}\end{IEEEkeywords}

\vspace*{-.4cm}

\section{Introduction}
{One of the major catalysts for the development of future mobile networks is the increasing demand for high data-rates.}
Accordingly, due to the large bandwidth available between the 24 GHz and 86 GHz frequency range, \ac{mm-wave} communication is an integral part of the \ac{5G} mobile communication systems~\cite{DeDomenico2017}. 
However, transmissions using high frequencies suffer from large attenuation and sensitivity to blockages~\cite{38.900}.
Nevertheless, owing to the short wavelength of {mm-wave} transmissions, antennas are smaller than those in sub-6~GHz bands. This enables the integration of a larger number of antennas on both transmitter and receiver sides~\cite{larsson2013massive}. Consequently, the higher attenuation can be mitigated using directional beamforming techniques~\cite{rappaport2013millimeter}. {Moreover, the large number of antennas enables the BS to adapt the beam size in order to optimally trade-off the beamforming gain and the coverage area.}
Additionally, the usage of directional beamforming greatly reduces co-channel interference~\cite{ghosh2014millimeter}, which further increases the data-rates.

On the downside, very thin beams raise additional constraints in terms of initial access and coverage~\cite{ghatak2017coverage}, which are challenging in the case of standalone deployment of \ac{mm-wave} \ac{BS}s~\cite{qi2016coordinated}. 
One solution to these challenges in case of a multi-{RAT} network {is to allow} the users to simultaneously receive signals in the {mm-wave} and in the sub-6GHz band, and to use the latter to support the initial access to {the} \ac{mm-wave} systems~\cite{ghatak2017coverage}.
Another approach is first to employ positioning algorithms in order to localize the users with respect to the \acp{BS}, and then, to select the proper \ac{mm-wave} beam to initiate the data transmission~\cite{garcia2016location}. 
In other words, the position and orientation information of the users relative to the \ac{BS} can be used as a proxy for channel information to facilitate beamforming. 
This alleviates the need to undergo an elaborate beam training procedure, which is particularly critical in case of low-latency, high-throughput applications. 
However, this comes at the cost of an increased error in the configuration of transmit and receive beams due to possible inaccuracies in localization~\cite{ghatak2018positioning}. Thus, to completely harness the potential benefits offered by \ac{mm-wave} communications, accurate characterization of the spatial configuration (e.g., the relative distance, \ac{AoD}, and \ac{AoA} between the transmitter and the receiver) is necessary. 

We investigate a \ac{mm-wave} network deployed along the roads of an urban area to support localization and communication services simultaneously. Particularly, we study and optimize a resource partitioning scheme to address jointly localization and communication requirements.
%
%Specifically, we characterize the downlink data-rate coverage probability \sout{of the users in} \textcolor{red}{of} the network and prescribe an algorithm for the selection of the optimal resource-partitioning factor and the optimal beam \sout{from the beam dictionary,} for maximizing the data-rate, while accounting for the inaccuracy in position and orientation estimation\sout{ due to the inadequacy of the localization function that supports this selection}.
\vspace*{-0.3cm}
\subsection{Related Work}
\vspace*{-0.2cm}
The feasibility of providing very high data-rates by operating at mm-wave frequencies is now well established in the literature~\cite{rappaport2013millimeter}. 
Bai {\it et al.}~\cite{bai2015coverage} and Di Renzo~\cite{di2015stochastic} have provided the first works on rate analysis of single-tier and multi-tier mm-wave communications, respectively, in random urban networks. Furthermore, Elshaer {\it et al.}~\cite{elshaer2016downlink} have studied mm-wave systems co-existing with traditional sub-6GHz infrastructure.
However, most of these works either do not consider, or do not fully address the challenges of providing initial access to the mm-wave terminals. 
Ghatak {\it et al.}~\cite{ghatak2017coverage} have also studied networks with co-existing mm-wave and sub-6GHz \acp{RAT}, in which the control signals sent in the sub-6GHz band are used to provide initial access to the mm-wave nodes.
However, they have not provided any algorithm for facilitating the initial access procedure.
In this direction of research, Li {\it et al.}~\cite{li2017design} have studied simple initial access protocols involving hybrid directional beamforming and omni-directional transmissions during the cell-search and random access phases.
According to their finding, the best trade-off between initial access delay and average downlink throughput is obtained using wide beams in the \ac{BS} side and beam-sweeping in the user-side.
The major concern with such protocols remains the high access delays in case of a high number of beams, especially for systems serving low-latency applications.
Recently, Yang {\it et al.}~\cite{yang2018fast} have studied an initial access scheme that substantially reduces the latency with respect to the classical exhaustive and iterative search algorithms.

In the context of user localization, the potential benefits of high-accuracy localization using \ac{mm-wave} beamforming was relatively unexplored until recently~\cite{lemic2016localization}. %In fact, mm-wave beamforming intrinsically allows for both accurate localization and orientation of users with respect to the \ac{BS}~\cite{Destino2017on}.
%
% Lemic {\it et al.}~\cite{lemic2016localization} have shown that localization using mm-wave frequencies is efficient in terms of accuracy, even in the presence of a limited number of anchor nodes.
%
{The initial works in {mm-wave} localization studied how to derive the {CRLB} of the location dependent variables (\textit{e.g.} distance, {AoD}, and {AoA}) considering single \cite{Shahmansoori15} and multiple carriers \cite{Shahmansoori17} using both single \cite{Shahmansoori15} and multipath models~\cite{Shahmansoori17}}.
{Then, the focus was on localization oriented beamforming, considering these theoretical performance bounds.}
{In particular, the authors in \cite{koirala17localization, Koirala18Localization} studied the localization optimal beamforming problem, considering the joint optimization of the {CRLB} of the localization variables for both single and multiple user cases.}
{More recently, in \cite{Kakkavas19}, the authors presented a beamforming strategy to minimize the localization error expressed in the form of the squared position error bound (SPEB).}
From the perspective of joint localization and communication functionalities, Destino {\it et al.} ~\cite{Destino2017on} {and Kumar {\it et al.}\cite{Kumar18On}} have studied the trade-off between communication rate and localization quality in a single user and multi-user mm-wave link respectively.
Typically, the localization performance is characterized by theoretical bounds that model its accuracy~\cite{van2004detection}, whereas, in order to characterize the communication performance, metrics such as user throughput are derived~\cite{li2017design}. 
{Likewise, in \cite{Maschietti17Robust}, the authors present a beam alignment optimization scheme between the transmitter and the receiver considering erroneous position estimations at both ends and scatterers.
In this work, the authors describe a 2-step beam alignment algorithm, firstly at the transmitter independently and then at the receiver following the transmitter's decision.
Recently in \cite{Igbafe19Location}, the authors presented a beam alignment method where, under similar conditions as \cite{Maschietti17Robust}, the transmitter and receiver select the beams in a joint manner, thus outperforming the 2-step method.
Likewise, in \cite{Garcia18Transmitter}, the authors present an iterative localization based beam selection algorithm where the transmitter, in each iteration, selects a refined finer beam based on position and orientation estimation.
The refined beam again improves the estimation and the process continues in a virtuous loop. 
Extending this idea, in \cite{Garcia19Fast}, the authors present the beam selection algorithm at both transmitter and receiving ends.
}
A work that analyses the trade-off between localization and data communication in random wireless networks appeared as a {preliminary} conference version of this paper~\cite{ghatak2018positioning, koirala2018throughput}. 
There, we have used stochastic geometry to derive the \ac{SINR} coverage probability and characterize the data-rate performance during the data service phase given some localization performance during the localization phase. This approach only partially  captures the intricate relation between localization and communication performance, precisely because enhancing localization may improve the downlink data-rate in \ac{mm-wave} systems by reducing beam-selection and misalignment errors.

In this paper, we introduce the notion of effective {data-rate} that not only takes into account the effect of localization error on the downlink data-rate, but also integrates the overhead due to the initial access scheme. {The application of stochastic geometry enables us to formulate the beamwidth optimization and resource-partitioning problem from the perspective of a random user in the network sampled from the distribution of the users. 
%The derived effective rate coverage probability of the typical user thus represents, in an ergodic sense, the fraction of users that achieves a given target rate in bits per second. 
Consequently, the prescribed scheme for optimizing the effective rate-coverage probability of the typical user gives the best {\it expected} beamwidth values and the {\it expected} resource-partitioning parameter for the entire network. This enables us to derive essential system design insights for this network.} The overall contributions are summarized as follows. 
\vspace*{-0.3cm}
\subsection{Contributions and Organization}
\vspace*{-0.2cm}
In this paper, we study a mm-wave network with simultaneous localization and communication services in a one-dimensional (e.g., along roads) scenario. 
%{In a real world deployment of mm-wave BSs along roads, the location of the users from the BSs can vary from a few meters to several kilometers. The stochastic geometry analysis developed for the localization and the data-communication phase gives the network operator an average view of the network performance for both positioning and data-services.}
Particularly, we design and study a downlink transmission scheme where the radio frames are partitioned into initial access, data, and localization phases.
In this paper,
\begin{enumerate}
    % \item We recall the characterize the Bayesian information matrix (BIM) and consequently the \ac{BCRLB} for the joint estimation of the distance, \ac{AoD}, and \ac{AoA} of the \ac{BS}-user link. In contrast to the existing studies, we exploit the distribution of the relative \ac{BS}-user positions given by the stochastic geometry framework, to provide a more accurate \ac{BCRLB} characterization of these parameters. Based on these, we define and derive two performance metrics to evaluate the positioning performance of the system, namely, the beam selection error and the misalignment error.
     
      \item { We recall the {FIM} and consequently the {CRLB} for the joint estimation of the distance and {AoA} of the {BS}-user link. Based on these, we define and derive two {new} performance metrics to evaluate the localization performance of the system, namely, the beam selection error and the misalignment error, which are respectively induced by user's position and orientation estimation errors.}
    %  \item We recall the \ac{FIM} and consequently the \ac{CRLB} for the joint estimation of the distance and \ac{AoA} of the \ac{BS}-user link. Based on these, we define and derive two performance metrics to evaluate the localization performance of the system, namely, the beam selection error and the misalignment error, which are respectively induced by user's position and orientation estimation errors.
    \item With the help of the formulated beam selection and misalignment errors, we design the Tx-Rx best beam pair selection strategy for establishing the initial connection between the \ac{mm-wave} \ac{BS} and the user, which reduces the overhead of beam training significantly as compared to the popular beam sweeping methods~\cite{li2017design}.
    \item  For the data phase, we provide a more accurate characterization of the downlink \ac{SINR} coverage probability as compared to the existing studies~\cite{ghatak2018positioning, koirala2018throughput}, by taking the errors during the localization phase into account. Leveraging on this mathematical characterization, we highlight the non-trivial localization and data-rate trade-off in this system. As an example, it may be intuitive to expect that allocating larger amount of resources to the communication phase increases the data-rate. However, as this results in a shorter localization phase, which leads to less accurate localization of the users, it adversely affects the data-rates. In this work, we optimize this resource partitioning factor (\textit{i.e.}, adapting the resource split between data and localization phases) jointly with the beamwidth of the \ac{mm-wave} \acp{BS} to simultaneously address the localization and communication requirements.
\end{enumerate}
The framework developed in this paper can be used in frame design of an urban mm-wave system, not only for maximization of downlink data-rate, but also to address given data-rate requirements under localization constraints and vice-versa. 
%{We rely on tools from stochastic geometry to give an overview of the system performance across different locations of the network, or across such several networks deployed on different cities.}

The rest of the paper is organized as follows. In Section~\ref{sec:System Model}, we introduce our system model and outline our optimization objectives. Then, in Section~\ref{sec:IA}, we describe the proposed initial access scheme. We provide the performance analysis of the localization and data phases in Section~\ref{sec:LandD}, whereas we discuss the overall system performance in Section~\ref{sec:NRD}. Finally, the paper concludes in Section~\ref{sec:Con}. Table \ref{tab:Param} summarizes the main notations used in this paper.
\vspace{-0.5cm}
\section{System Model}
	\label{sec:System Model}
%	\subsection{Two-Tier Network Model}
\label{sec: Network Model}
    \begin{table*}[!t]
    \small
	\centering
    \caption{Main Notations And System Parameters.} % title of Table
\begin{tabular}{|c  c  c| }
	\hline  Notation& Parameter& Value\\
\hline
    \hline $\xi$, $\lambda$ & BS process and its intensity & $\lambda$ = 5-200 per km. \\  
	\hline {$P_t$}& {Transmitted power from \ac{BS}} &  {30 dBm}\\ 
	\hline  {$\theta_{B/U}$} & {Width of the beams of \ac{BS}/user} & {-}\\
	\hline  {$\theta_{k}$} & {Instance of $\theta_B$} & {-}\\
	\hline  {$\psi/\phi$} & {\ac{AoA}/\ac{AoD}} & {-}\\
    \hline  {$\tau$} & {Delay corresponding to LOS distance between BS and the user} & {-}\\
    \hline  {$\beta$} & {Resource partitioning factor} & {-}\\
	\hline  {$\alpha_{L}$, $\alpha_{N}$}&  {Approximated LOS/NLOS path-loss exponents} &  {2, 4}\\
    %\hline  $\alpha_{SNm}$& NLOS BS path-loss exponent in mm-wave & 8\\
    % \hline  $G_{{ B}}$/$G_{{ U}}$ & mm-wave \ac{BS}/user antenna gain& - \\
   % \hline  $G$ or $G(\theta)$ & Tx/Rx main-lobe directivity gain& - \\
    %\hline  $g$ & Tx/Rx side-lobe directivity gain& - \\
    \hline  ${N_0}$ & Noise power density of the received signal & -174 dBm/Hz\\
    % \hline  $N_0$ & Measurement noise power & -30 dBW\\
    \hline  {$B$} &  {Bandwidth} &  {1 GHz}\\
    \hline  $\sigma_{d}^2/\sigma_{\psi}^2$& Variance of the distance/\ac{AoA} estimation error & -\\
   % \hline  $\sigma_{\psi}^2$& Variance of \ac{AoA} estimation error & -\\
    \hline  {$d_{S}$} &  {BS LOS ball radius} &  {20 m}\\
%    \hline  {$N$} & {Maximum number of beams in the beamforming dictionary} & {-}\\
    \hline 
	\hline 
\end{tabular}
\label{tab:Param}
\end{table*}

\newcommand\scalemath[2]{\scalebox{#1}{\mbox{\ensuremath{\displaystyle #2}}}}

\begin{figure}
\centering
\includegraphics[width=.8\linewidth]{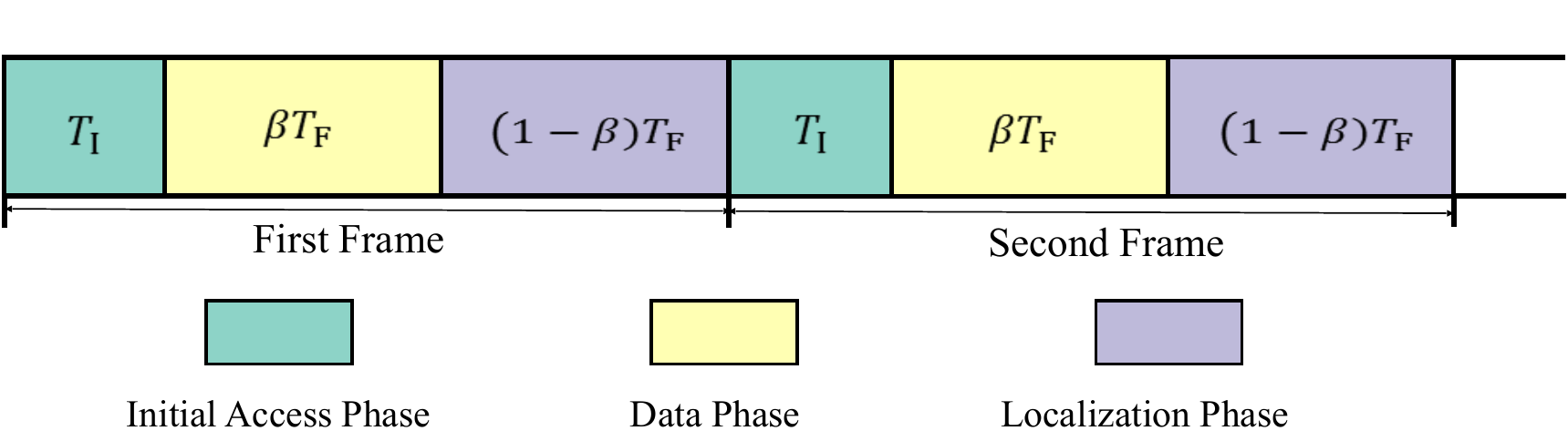}
\caption{The proposed radio frame structure for localization assisted mm-wave communications.}
\label{fig:frame}
\vspace*{-1cm} \end{figure}

We consider a small cell network where multi-RAT \acp{BS} are deployed along the roads to provide high speed data-access to the mobile users by jointly exploiting sub-6GHz and mm-wave bands. 
In this context, we propose a radio {frame structure for} joint communication and localization services, which is illustrated in Fig. \ref{fig:frame}.
Each frame consists of an initial access phase of length $T_{{ I}}$ and a service phase of length $T_{{ F}}$. 
{The access phase enables to establish reliable mm-wave services to the new \acp{UE} that arrive in the system before the start of that frame. This phase is relevant only for the new \acp{UE} and is not repeated for all the \acp{UE}.}
To do so, our approach iteratively increases the resolution of the estimation of the distance and the orientation of the user with respect to the serving \ac{BS}.
The service phase is further partitioned by a factor $\beta$ into a data phase of length $\beta T_{{ F}}$ and a localization phase of length $(1-\beta)T_{{ F}}$.
{{In this paper, we assume that the resources in the initial access and localization phases are perfectly multiplexed across active users, i.e., interference does not affect the localization performance.}}
The service phase is further partitioned by a factor $\beta$ into a data phase of length $\beta T_{{ F}}$ and a localization phase of length $(1-\beta)T_{F}$. The access phase enables to provide reliable mm-wave services to the new \acp{UE} in the system. Accordingly, in this phase, the initial beams at the \ac{BS} and \ac{UE} sides are refined in an iterative manner, until the localization information (ranging and \ac{AoA}$^1$\footnote{$^1${For \ac{AoA} estimation, we can choose one of the popular techniques such as Bartlett technique~\cite{bartlett50periodogram}, Capon technique~\cite{Capon69High} or a subspace based techniques~\cite{schmidt1982signal} (Multiple Signal Classification). For distance estimation, as a basic option, one can simply use \ac{RSSI} based estimation in the \ac{mm-wave} band.}}) reaches a predefined resolution. Then, the data and localization phases follow as depicted in Fig.~\ref{fig:frame}. {Both downlink and uplink are included in our radio frame structure. Specifically, in the initial access and localization phases there is an exchange of downlink and uplink signals to enable precise estimation of the localization parameters. However, in the following, we {analyse} the performance of the data-phase exclusively during downlink communications.} Initially, the \ac{BS} selects the transmit beamwidth $(\theta^*)$ to maximize the effective data rate, which takes into account the localization errors as well, and satisfy the localization service requirements. In the following frames, $\theta^*$ is further adapted to the obtained position and orientation information in order to improve the system performance. {Thus, in the localization phase, the location information of the users are updated. For static users, this information is improved at each subsequent frame. For mobile users, the aim of this phase is to keep a track of the current location so as to facilitate beam-switching if needed.} %{\sout{The optimal transmit beamwidth $(\theta^*)$, which maximizes the data-rate of the \acp{UE} is calculated offline before the service commencement. The characterization of the data-rate takes into account the positioning errors as well. In the initial access phase, the \ac{BS}-\ac{UE} pairs get refined in an iterative manner, until the positioning information of a \ac{UE} is obtained by its serving \ac{BS} with a certain pre-defined resolution. This resolution of positioning is a parameter specified by the device manufacturer that facilitates feasible \ac{mm-wave} service. Once the initial access phase is concluded, the service phase commences with the optimal beamwidth $\theta^*$. In the subsequent frames, the \ac{BS} refines the location information of the \ac{UE} using the service beam (of width $\theta^*$). Thus,}
It is important to note that the beamwidth at the \ac{BS} side that facilitates mm-wave service (in the initial access phase) is different from $\theta^*$. The former is refined in an iterative manner in the initial access phase using the algorithm defined in Section~\ref{sec:IA} to provide access to new \acp{UE}; whereas, the latter is obtained using the framework developed in Section~\ref{sec:LandD}.

%In this section, first we describe the network {geometry} and the propagation model. Then in the subsequent sections, we will characterize the system performance in each of the constituent part of the radio frame. 

\vspace*{-0.5cm}
\subsection{Network Geometry}
\label{sec:NG}
\begin{figure}
    \centering
    \includegraphics[height = 4cm, width = .6 \linewidth]{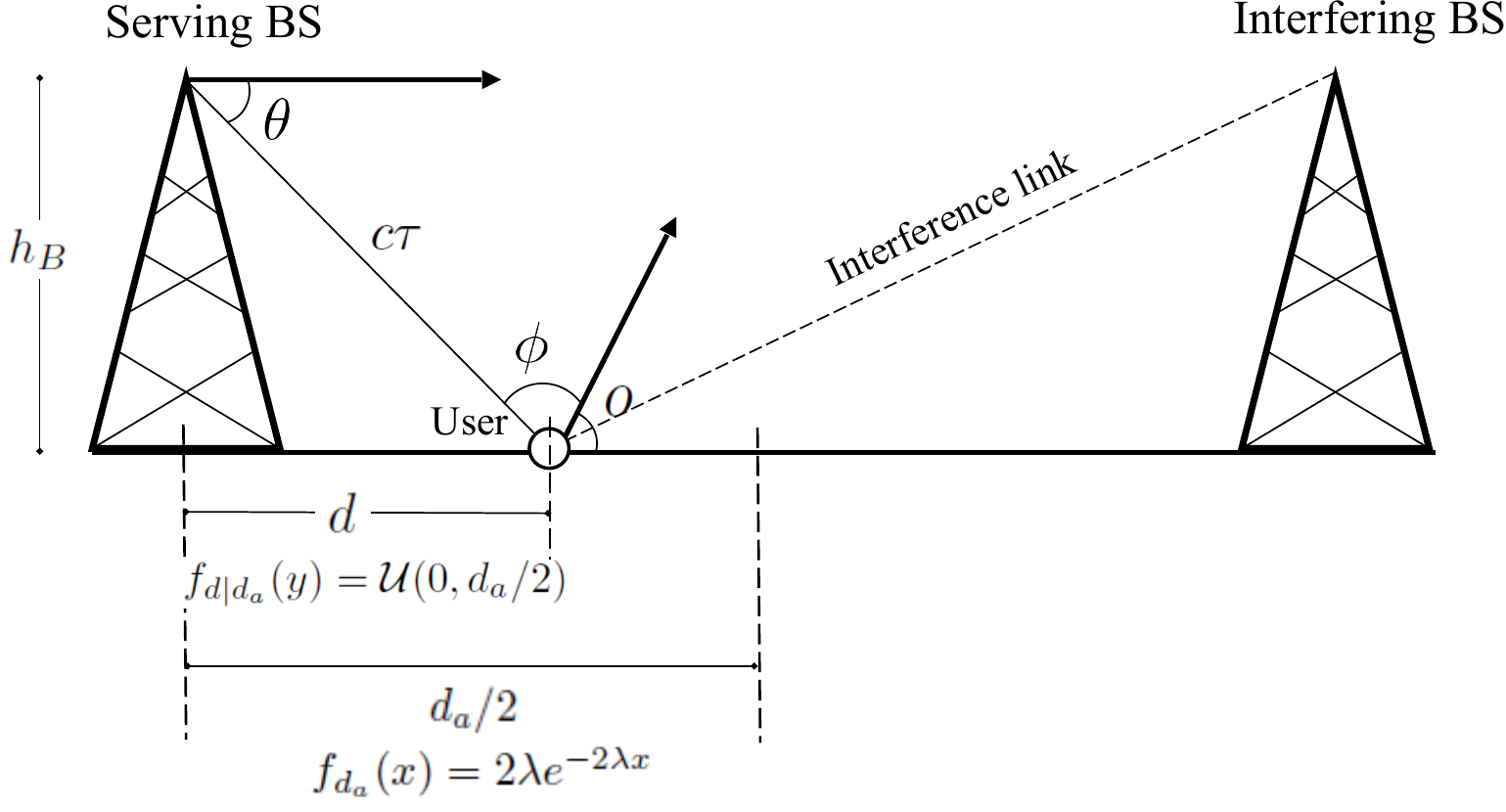}
        \caption{{System model consisting of a serving BS, an interfering BS and a user node at distance $d$ from the serving BS. The figure illustrates the relationship between the \ac{BS} and user positions and the localization variables (distance $d$, \ac{AoD} $\phi$, \ac{AoA} $\psi$ and the user orientation $o$).}} 
    \label{fig:system1}
\vspace*{-1cm} \end{figure}

Let us consider an urban scenario with multi-storied buildings resulting in a dense blocking environment. 
The \acp{BS} deployed along the roads of the city are assumed to be of height $h_{{B}}$ and having a transmit power of $P_{{t}}$. 
Their positions in each street are modeled as points of a one-dimensional \ac{PPP} $\xi$, with intensity $\lambda$ [m$^{-1}$].  {The 1D model assumed in this paper is relevant for the case where the cellular deployment is envisioned to be along roads. As an example, Verizon and AT\&T have both announced plans to deploy 5G infrastructure on lampposts for mobile access~\cite{press3}, respectively. In such scenario, the 1D model assumed in our paper can be utilized by a network operator to derive system design insights and to further fine-tune the deployment parameters.}

%{However, it must also be noted that the methodology developed in this paper can easily be extended to a more general 2D network model, in which case, the coverage of the BSs will be modeled as 2D areas instead of 1D lengths. Accordingly, for the case of 2D networks, the notions developed in this paper e.g., beam selection and misalignment errors, and SINR coverage can be extended by using two variables for representing the locations of the users and solid angles to characterize the beam coverage of the small cells.}

{The users are assumed to be static and located uniformly on the roads with a density $\lambda_{{ U}}$ [m$^{-1}$].
In this regard, it is important to highlight that mobility does not have a large impact on our protocol and performance evaluation methodology. As an example, let us assume vehicular users moving at a speed of 30 km per hour. With 1 ms of frame length, the distance covered by the  user in-between frames is approximately 8 cm, which is considerably small with respect to the coverage area of any beam in the dictionary. When the user speed is very high or the frame length is large, the effect of the mobility on the user localization can be addressed by increasing the variance of the noise in the estimation of the localization variables.}
%In future works, we plan to address both the relation between the noise variance of localization variables and the user speed, and beyond, the possibility of combining our proposed protocol with user tracking algorithms, thus covering even more explicitly dynamic scenarios.}
%

Without the loss of generality, we perform our analysis from the perspective of a \ac{BS} located at the origin and an associated user located at a distance $d$ from the \ac{BS} as illustrated in Fig. \ref{fig:system1}. 
%
%{The user is also assumed to have an orientation $o$ with respect to the reference x-axis.}
%
{The user selects the serving \ac{BS} following a \ac{RSSI} based association.
For a BS located at the origin, the distance from the nearest neighbor (i.e, the closest BS) is given by:
\begin{align}\label{eq:IS-distribution2}
    f_{d_a} (x) = 2\lambda\exp(-2\lambda x). 
\end{align}
Then, assuming that all the BSs have equal transmit power, the coverage area of the BS located at the origin is given by $\frac{d_a}{2}$ on either side of it, where $d_a$ follows the distribution \eqref{eq:IS-distribution2}. Therefore, inside the coverage region of this BS, the location of a random user is uniformly distributed.
Accordingly, the joint probability distribution of the distance $d$ and the coverage area $d_a$ is given by $f_{d_a,d}(x,y) = f_d(y|d_a = x)f_{d_a}(x)$~\cite{chiu2013stochastic}}, where %\textcolor{red}{[To be corrected]}
\begin{align}
f_d(y|d_a=x) = \begin{cases}
x^{-1}; \quad 0 \leq y \leq x \\
0; \quad \text{otherwise}
\end{cases}.
\label{eq:dist}
\end{align}
%In other words, the user is uniformly located in the \ac{BS} coverage area, which in turn is governed by the intensity of deployment $\lambda$. 
Thus, each \ac{BS} is associated with a service coverage area of length $d_{{ a}}$, which is distributed as~\eqref{eq:IS-distribution2}.

{In the following, we denote the user orientation with respect to the reference x-axis as $o$, the \ac{AoA} at the user as $\psi$ and the \ac{AoD} at the \ac{BS} as $\phi$. }
{As depicted in Fig. \ref{fig:system1}, the relation between the position of the BS and the user with the delay $\tau$, \ac{AoD}, \ac{AoA} and the user orientation are:}
%{\begin{subequations}\nonumber
    \begin{align}
            \tau &=\sqrt{d^2+h_B^2}/c,\quad
            \phi &= \cos^{-1} \left({d}/{ \sqrt{d^2+h_B^2}}\right), \quad
            \psi &= \pi - \cos^{-1} \left({d}/{\sqrt{d^2+h_B^2}}\right)-o, \nonumber
            \vspace{-1cm}
        \end{align}
%\end{subequations}
where $c$ is the speed of light.
It must be noted that in our 1D scenario, $\phi$ is dependent directly on $d$.
We also assume that the orientation of the users are unknown, and accordingly, we consider that the distribution of the {initial \ac{AoA} of the user} $f(\psi)$ is uniform between $0$ and $2 \pi$. 

\vspace*{-0.5cm}
\subsection{Millimeter-Wave Beamforming}
Our analysis consists of two parts, the first one involving derivation of \ac{CRLB} for the localization phase and, then, the derivation of the user performance in the data phase. 
For the derivation of \ac{CRLB} for \ac{AoA} estimation, the angular information is derived from the antenna array response, hence we use the {\it \ac{ULA}} model
\cite{Shahmansoori17} with an antenna spacing of {half the carrier wavelength}.
{On the other hand}, in order to simplify the analysis of the data phase, we approximate the beamforming by a \textit{sectorized model}~\cite{hunter2008transmission}, where the transmitted and received beams are divided into two sectors, a {\it main lobe} sector whose antenna gain depends on the beamwidth $\theta$ and a side lobe sector with a fixed gain. {Here, the term main lobe stands for the angular region of the antenna pattern centered around the axis of maximum gain and aperture equal to the half-power beamwidth of the pattern. } {We assume that the BSs do not cater to multiple users or transmit multistream data, simultaneously.} {Accordingly, we assume the existence of a single RF chain with analog beamforming.}

{Accordingly,} in the sectorized model, the antenna gain {at the BS side and user side $G_x(\theta_x)$, where $x \in \{\mbox{B}, \mbox{U}\}$, is given by} \cite{Ghadikolaei15}
\begin{equation}
G_x(\theta_x) =\begin{array}{l}\label{eq:Gain} 
\begin{cases}
\gamma_x(\theta_x)=G_0\frac{2\pi-(2\pi-\theta_x)\epsilon}{\theta_x}, &\mbox{in the main lobe},\\
  g=G_0\epsilon, &\mbox{otherwise}, 
  \end{cases} 
    \end{array} 
\end{equation}
{where $G_0$ is the antenna gain of an equivalent omnidirectional beam (i.e., $\theta_x=2\pi$) and $\epsilon$ is a small positive constant $\ll 1$}.
%\textcolor{magenta}{MC: (3) is not beautiful and confusing. First, we don't understand what is the "main lobe", it is not defined. Second, there is a confusion bw $\theta_x$ with x  in \{B,U\} in (3) and the $\theta_i$ in the figure. Third, G should be a function of a generic angle $\theta$ and may depend then on a parameter that describes the main lobe position.}
In the {\it \ac{ULA} antenna model}, each \ac{BS} and user is assumed to be equipped with mm-wave \ac{ULA} directional antennas consisting of {$M_{B}$} and  {$M_{U}$} antenna elements respectively. Then, the \ac{BS} antenna array response is:
\begin{equation}\label{eq:a_BS}
	\bm a_{B}( {\phi}) = \frac{1}{\sqrt{M_{B}}}\left[1, e^{j \frac{2 \pi \kappa f_{c}}{c} \sin (\phi)}, \cdots,  e^{j (M_{B}-1) \frac{2 \pi \kappa f_{c}}{c} \sin (\phi)} \right],
\end{equation}
where $\kappa$ is the inter-element distance in the antenna system and $f_{{ c}}$ is the center frequency of the mm-wave system. The user antenna response $\bm a_{{ U}}(\psi)$ is simply obtained by replacing $\phi$ with $\psi$ and {$M_{{ B}}$} with {$M_{{ U}}$} in \eqref{eq:a_BS}.

{Let $\bm w{(\theta_{{ B}})} \in \mathbb{C}^{M_B}$ and $\bm w{(\theta_{{ U}})} \in \mathbb{C}^{M_U}$ represent the transmit and receive beamforming vectors. As defined in \cite{yang2018fast}, the width of the beam can be controlled by changing the number of elements $M_B$ and $M_U$ in the antenna array.
%\textcolor{red}{define these vectors as function of the beamwidth as in 
%Fast and Reliable Initial Access with Random Beamforming for mmWave Networks}
Then, the beamforming gains for the BS and the user are given by {$G_{{ B}}(\theta_{{ B}}) = |\bm a_{{ B}}^{H}({\phi}) \bm w{(\theta_{{ B}})}|^2$} and {$G_{{ U}}(\theta_{{ U}}) = |\bm a_{{ U}}^{H}({\psi}) \bm w{(\theta_{{ U}})}|^2$}, respectively.} %\textcolor{red}{I would rather define the antenna gains as function of the precoding and combining matrices instead to save some space later in the document}} % {[BD: $\bm f{(.)}$ (resp. $\bm w{(.)}$) should be a function of the AoD (resp. AoA), but not only of the beamwidth $\theta_{{ B}}$ (resp. $\theta_{{ U}}$), right ? In addition, $\bm a_{{ U}}^{H}(.)$ should be a function of UE orientation, rather than $d$ ? Please double-check and correct if needed...]}
\vspace{-0.5cm}
\subsection{Beam Dictionary}
\begin{figure}
\centering
\subfloat[]
{\includegraphics[height = 3cm, width=7cm]{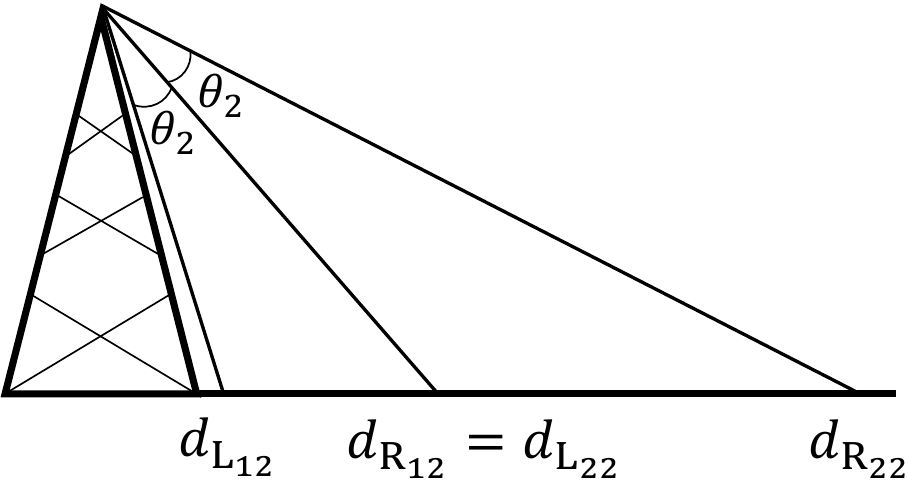}
\label{fig:Sys1}}
\hfil
\subfloat[]
{\includegraphics[height = 3cm, width=7cm]{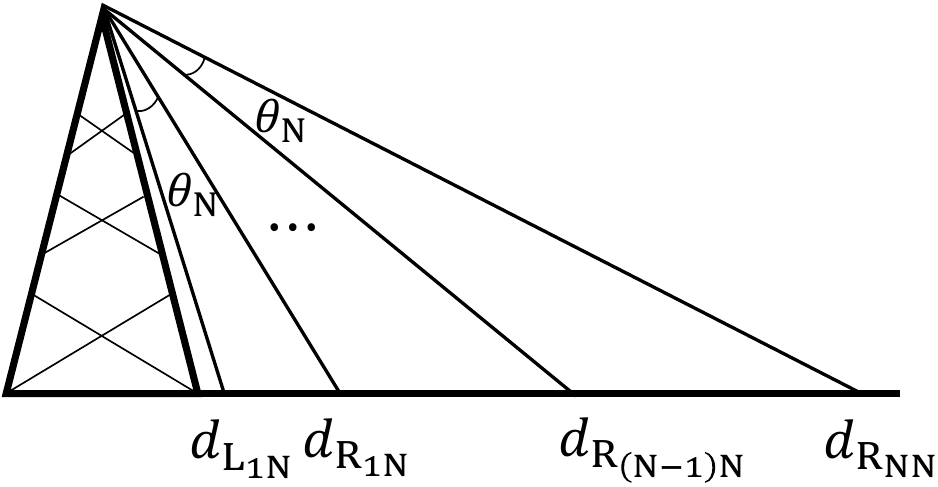}
\label{fig:Sys2}}
\caption{Illustration of the beam dictionary elements in case of (a) $2$ beams and (b) $N$ beams.}
\label{fig:Sys}
\vspace*{-1cm} \end{figure}
We assume that each \ac{BS} designs a sub-6GHz assisted mm-wave beamforming database.
% 
%The user, depending on the coverage area it lies in, associates with the corresponding \ac{BS} \textcolor{red}{This sentence is disconnected from the text here}.
%\footnote{$^{{1}}$In a more simplistic scenario, when the \acp{BS} are assumed to be equi-spaced (e.g., \ac{BS} deployment on top of lamp-posts~\cite{cudak2014experimental}), the coverage area becomes equal for all the \acp{BS}. \textcolor{magenta}{MC: do we need this footnote ? What is the relatioship with the sentence above ?}}
Specifically, each \ac{BS} is capable of having beam dictionaries of different sizes, where each beam dictionary is composed by  {a} set of beams characterized by the same width. 
The size of the beam dictionary denotes the number of beams that characterizes the dictionary.
Furthermore, we assume that the main lobes of different beams of the same dictionary are non-overlapping. Together, the beams of a dictionary provide complete coverage in the geographical coverage area (i.e., the Voronoi cell) of the \ac{BS} as shown in Fig. \ref{fig:Sys}.  %\textcolor{magenta}{MC: on the figure, it seems that you cover only one side of the cell.}
Consequently, the larger the number of beams in the dictionary, the smaller is the beamwidth. It must be noted that for a typical \ac{BS} deployed along the road, the neighbor \acp{BS} on either side may not be located at the same distances from it. As a result, the beam dictionary maintained at the \acp{BS} would contain the cell size information for both the sides of them. Without loss of generality, in what follows, we focus on one side of the typical \ac{BS}.

Let $\theta_1 = \arctan\left(\frac{d_{{ a}}}{h_{{ B}}} \right) - \arctan\left(\frac{d_{{ L}_{11}}}{h_{{ B}}} \right)$ be the beamwidth of the beam that provides total coverage of the area $d_{{ a}}$, where $d_{{ L}_{11}} = d_{{ L}_{12}}= d_{{ L}_{1N}} = 0$ is the starting point of the coverage area (as illustrated in Fig. \ref{fig:Sys}) and $h_{{ B}}$ is the height of the \ac{BS}.
Then, for the beam dictionary size $k$, the beamwidth is defined by $\theta_k = \theta_1/k$.
Now, depending on this beamwidth $\theta_k$ and total number of beams, the left and right boundaries of each main lobe coverage positions of the $j$-th beam ($1 \leq j \leq k$) are {denoted} as $d_{{ L}_{jk}}$ and $d_{{ R}_{jk}}$.
The non-overlapping and adjacent assumption of the beams implies that $d_{{ R}_{jk}} = d_{{ L}_{(j+1)k}}$, $\forall j < k$.

Hence, {we define} the beam dictionary database $\mathcal{DB}$ of a mm-wave \ac{BS} as a lower triangular matrix consisting of all feasible beams for each beam dictionary. 
Each element $\mathcal{DB}_{k,j}$ of $\mathcal{DB}$,  where $j \leq k,$ consists of a tuple $(\theta_k, d_{{ L}_{jk}}, d_{{ R}_{jk}})$ corresponding to the $j$-th beam of the $k$-th beam dictionary. The elements of the tuple indicate respectively a) the width of the beam, b) the left boundary, and c) the right boundary of the main lobe of the beam (according to the \textit{sectorized model}), as illustrated in \eqref{eq:DB}. 
Then, for $k$-th beam dictionary, the $j$-th beam has a coverage area $\mathcal{C}_{jk}  = d_{{ R}_{jk}} - d_{{ L}_{jk}}$.
The steps for designing the beam dictionary at a mm-wave \ac{BS} are:
\begin{enumerate}
\item After being deployed, the new \ac{BS} exchanges inter-BS signals in the sub-6GHz band to discover its geographical location on the street$^2$\footnote{$^2$ Such prior geo-referencing, anyway required for mapping geographical coverage, can also be performed in alternative ways such as the GPS.}, with respect to its neighbouring mm-wave \ac{BS}s$^3$\footnote{$^3$This information can be provided a-priori by the operator during the deployment phase.}.  
%
 %This allows the \acp{BS} to discover the neighboring \ac{BS} locations. s
%
Using this information, a \ac{BS} maps its own geographical coverage area with respect to its neighbors. As all the \acp{BS} are assumed to have the same transmit power, the cell boundaries are midway between two neighboring \acp{BS} as illustrated in Fig. \ref{fig:system1}.  
\item For each value of beam dictionary $k \in \{1, 2, \ldots,N\}$, the BS calculates the coverage areas of the associated beams as $\mathcal{C}_{j,k} = d_{{ R}_{jk}} - d_{{ L}_{jk}}$, where:
\begin{align}\label{eq:dR}
d_{{ R}_{jk}} &= h_{{ B}} \tan\left(\arctan\left(\frac{d_{{ L}_{jk}}}{h_{{ B}}}\right) + j\theta_k\right), \quad j = 1, 2, \cdots,k, \\ 
d_{{ L}_{jk}} &=\begin{array}{l}\label{eq:dL}
\begin{cases}
d_{{ R}_{(j-1) k}}, \quad &j = 2, \cdots, k,\\
  0,  \quad &j=1. 
  \end{cases} 
    \end{array} 
\end{align}

\item The resulting data-base is thus lower triangular matrix as follows: 
\begin{align}\label{eq:DB}
\scalemath{.9}{\mathcal{DB} = \begin{bmatrix}
(\theta_1, d_{{ L}_{11}}, d_{{ R}_{11}}) & 0 & 0 & \ldots  & 0 \\
(\theta_2, d_{{ L}_{12}}, d_{{ R}_{12}}) & (\theta_2, d_{{ L}_{22}}, d_{{ R}_{22}}) & 0 & \ldots  &0 \\
\vdots & \vdots & \ddots &\ddots &\vdots\\
(\theta_N, d_{{ L}_{1N}}, d_{{ R}{1N}}) & (\theta_N, d_{{ L}_{2N}}, d_{{ R}_{2N}}) & \ldots & \ldots &(\theta_{N}, d_{{ L}_{NN}}, d_{{ R}_{NN}})
\end{bmatrix},}
\end{align}
where the $k$-{th} row consists of the beam dictionary of size $k$ beams and contains the information about the width and the main lobe coverage areas of the corresponding beams. %The values $d_{L_{jk}}$ and $d_{R_{jk}}$ respectively represent the corresponding left and right extremes of the coverage area of the $j$-{th} beam with beamwidth $\theta_k$.
\end{enumerate}

\begin{figure}
\centering
\subfloat[]
{\includegraphics[width=7cm]{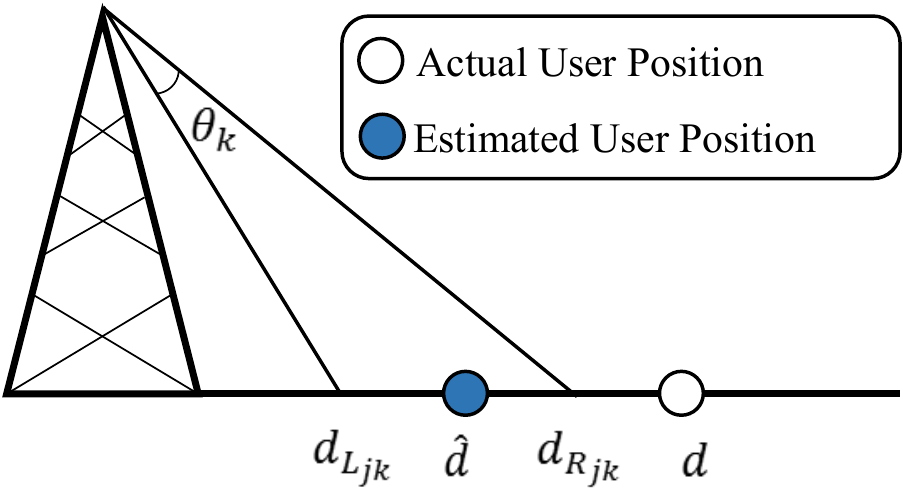}
\label{fig:BS}}
\hfill
\subfloat[]
{\includegraphics[width=7cm]{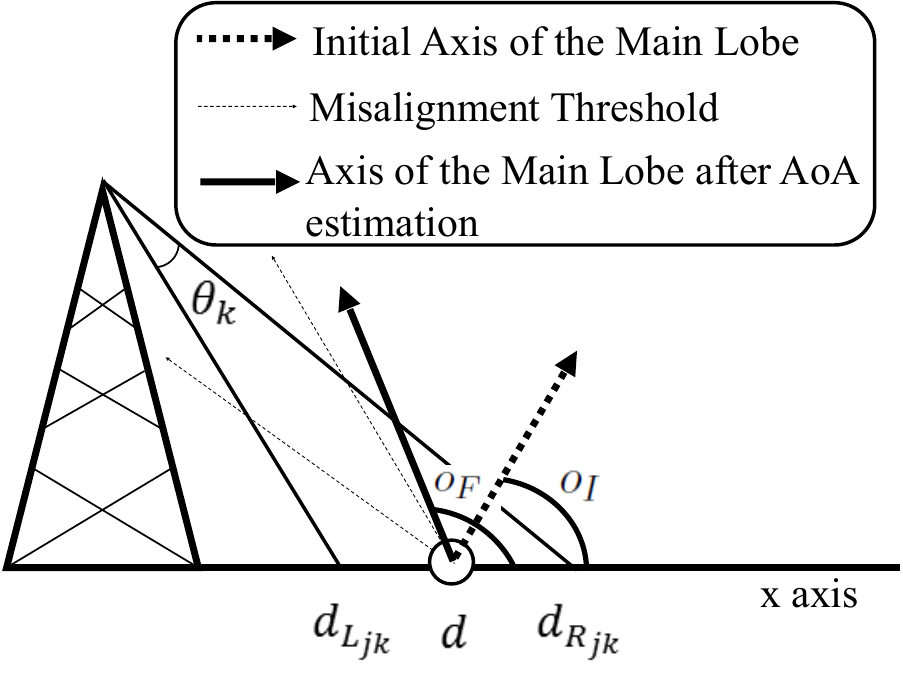}
\label{fig:MS}}
\caption{Illustration of the (a) beam selection error  and (b) misalignment error.}
\vspace*{-1cm} \end{figure}
%\begin{figure}
%\centering
%\includegraphics[width=8cm]{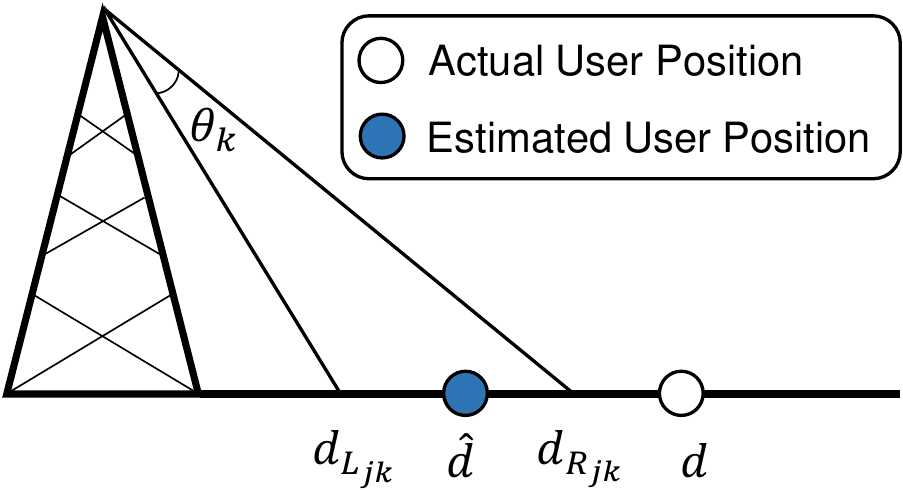}
%\caption{Example of beam selection error.}
%\label{fig:BS}
%\vspace*{-1cm} \end{figure}
%\begin{figure}
%\centering
%\includegraphics[width=8cm]{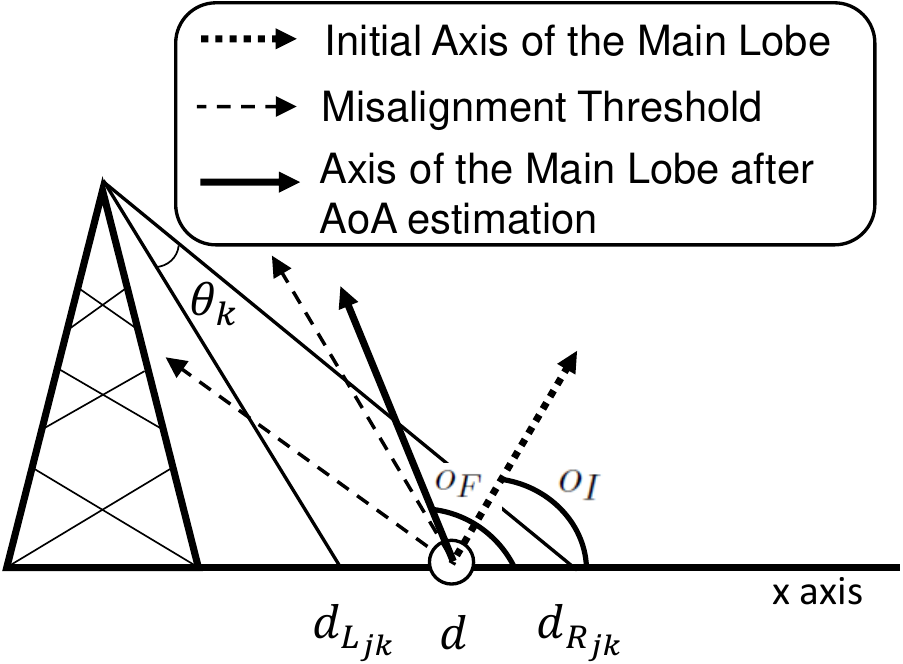}
%\caption{{Example of misalignment error.} The solid arrow represents the axis of the main lobe of the user antenna. The user-BS beam pair is assumed to be misaligned if the axis of the main lobe of the user antenna is outside the orientation thresholds (depicted by dashed arrows).\textcolor{magenta}{MC: this figure is very helpful but $\psi$ should appear somewhere to be complete.}\textcolor{green}{Remun, can you please rename the boresight as "axis of maximum gain" as discussed earlier?}}
%\label{fig:MS}
%\vspace*{-1cm} \end{figure}

Following the description of the beam-dictionary, we define two critical metrics of the system, which we will use to characterize the performance of the localization phase. %\textcolor{red}{ let define here and harmonize between $\mathcal{P}_{{ BS},{j,k}}\left(d,\sigma^2_d\right)$ and $\mathcal{P}_{MA}(x, {\psi})$ }.
\begin{definition}
The beam-selection error is defined as the event that a UE located in $\mathcal{C}_{j,k}$ is estimated to be at $\hat{d}$, outside of $\mathcal{C}_{j,k}$, and accordingly, it is allotted a different beam than $(\theta_j,d_{L_{j,k}},d_{R_{j,k}})$. Let us denote as $\sigma^2_d$ the variance of the distance estimation error; the probability of beam-selection error ($\mathcal{P}_{BS}$), given that the UE is located at a distance $d$, is defined as  
\begin{align}
    \mathcal{P}_{{ BS},{j,k}}\left(d,\sigma^2_d\right) = \mathbb{P}\left(\hat{d}(d,\sigma^2_d) \notin \mathcal{C}_{j,k} | d \in \mathcal{C}_{j,k} \right).
\end{align}
\end{definition}
This event is depicted in Fig.~\ref{fig:BS}.
\begin{definition}
The beam misalignment error is defined as the event that, {after the \ac{AoA} estimation}, the UE beamforms towards a direction such that the axes of the main lobe of the UE and BS antennas have an angular separation greater than a predefined threshold $\nu$. {Let us denote as $o_I$ the initial user orientation and as $\hat{\psi}$ and $\sigma^2_\psi$ the estimated \ac{AoA} and the variance of the \ac{AoA} estimation error, respectively. After the \ac{AoA} estimation, the user orients its main lobe towards the direction of $\hat{\psi}$ in order to align it towards the BS main lobe. The new orientation of the user main lobe is denoted by $o_F$ in Fig.~\ref{fig:MS}.} The probability of misalignment error, given that the UE is located a distance $d$ (i.e., $\mathcal{P}_{MA}$) is then defined as %, averaged over all initial orientations of the UE is defined as
\begin{align}
    \mathcal{P}_{MA,j,k}(d, {\psi}, \sigma^2_\psi) = \mathbb{P}\left(|\psi - \hat{\psi}(d,\sigma^2_\psi)| \geq \nu(\theta_k,\theta_U)\right).
\end{align}
\end{definition}
This event is depicted in Fig.~\ref{fig:MS}.
%{In the following, we denote the user orientation with respect to the reference x-axis as $o$, the \ac{AoA} at the user as $\psi$ and the \ac{AoD} at the \ac{BS} as $\phi$. As depicted in Fig. \ref{fig:system1}, the relation between the position of the BS and the user with the delay ($\tau$), \ac{AoD}, and \ac{AoA} are as follows}
%{\begin{subequations}\nonumber
%    \begin{align}
%            \tau &=\sqrt{d^2+h_B^2}/c;\\
%            \phi &= \cos^{-1} \left({d}/{ \sqrt{d^2+h_B^2}}\right); \\
%            \psi &= \pi - \cos^{-1} \left({d}/{\sqrt{d^2+h_B^2}}\right)-o,
%        \end{align}
%\end{subequations}
%where $c$ is the speed of light.}

% \begin{definition}
% The beam misalignment error is defined as the event that the UE beamforms towards a direction such that axes of the main lobe of the UE and BS antennas have an angular separation greater than a predefined threshold $\nu$. Let us denote as $\hat{\psi}$ and $\sigma^2_\psi$ the estimated \ac{AoA} and the variance of the \ac{AoA} estimation error; the probability of misalignment error, given that the UE is located a distance $d$ (i.e., $\mathcal{P}_{MA}$) is defined as %, averaged over all initial orientations of the UE is defined as
% \begin{align}
%     \mathcal{P}_{MA,j,k}(d, {\psi}, \sigma^2_\psi) = \mathbb{P}\left(|\psi - \hat{\psi}(d,\sigma^2_\psi)| \geq \nu(\theta_k,\theta_U)\right).
% \end{align}
% \end{definition}
% This event is depicted in Fig.~\ref{fig:MS}.

\vspace*{-0.5cm}
\subsection{Blockage, Path-Loss, and Signal Propagation}
Due to the {presence of buildings and other obstacles}, the communication links can either be in \ac{LOS} or \ac{NLOS} state. We assume a \ac{LOS} ball model for {characterizing} the blockage, similar to that in~\cite{bai2015coverage}, with a \ac{LOS} ball radius $d_{{ S}}$. Thus, all the \acp{BS} present within a distance $d_{{ S}}$ from the user are assumed to be in \ac{LOS}, whereas, the \acp{BS} lying beyond $d_{{ S}}$ are assumed to be in NLOS. Accordingly, the \ac{LOS} \ac{BS} process is denoted by $\xi_L$ and the \ac{NLOS} \ac{BS} process is denoted by $\xi_N$.  %\textcolor{red}{{define here $\xi_{{ L}}$ and $\xi_{{ N}}$}; these are the processes related to the small cells in LOS and NLOS; these notations are then used in the demonstrations.}. 
%
%{[BD: Unclear at first reading. You probably mean "The LOS (resp. NLOS) channel status is reference to as $\xi_{{ L}}$ (resp. $\xi_{{ N}}$)?]}.
%
Furthermore, because of the low local scattering in mm-wave communications, we consider a Nakagami fading $f$ with parameters $N_L$ and $N_N$ for the \ac{LOS} and \ac{NLOS} paths, respectively, and variance equal to 1~\cite{7593259}.
Additionally, we assume a path loss model where the power at the receiver located at a distance $d$ from the \ac{BS} is given by $P_{{ r}} = K P_{{ t}}  |f|^2  {G_{{ B}}(\theta_{{ B}})}{G_{{ U}}(\theta_{{ U}})} (d^2 + h_{{ B}}^2)^{\frac{-\alpha}{2}}$, where $K$ is the path loss coefficient, $P_{{ t}}$ is the transmitted power, and $\alpha$ is the path loss exponent. In our model, $\alpha = \alpha_{{ L}}$ or $\alpha_{{ N}}$ depending on whether the link is in \ac{LOS} or \ac{NLOS} state, respectively. %Thus, the average SNR can be written as $\frac{K\cdot P \cdot G_0 \cdot (d^2 + h_B^2)^{\frac{-\alpha}{2}}}{N_0 \cdot B}$. 

Let us assume that the received signal suffers from a zero-mean additive Gaussian noise with two-sided noise power spectral density of ${N_0}$ [dBm/Hz].  As a result the \ac{SINR} in the data-communication phase is given by:
\begin{align}
    \text{SINR}_C = \frac{K P_{{ t}}{|f|^2} G_{{ B}}(\theta_B) G_{{ U}}(\theta_U)\left({h_{{ B}}^2 + d^2}\right)^{-\frac{\alpha}{2}}}{{N_0 B}  + \sum_{i \in \mathcal{I}}K P_{{ t}}{|f_i|^2}g^2\left({h_{{ B}}^2 + d_i^2}\right)^{-\frac{\alpha}{2}}},
\end{align}
where $\mathcal{I}$ refers to the set of interfering \acp{BS}.
%{Here it is important to note that the BSs do not point their main lobe towards the coverage areas of other BSs. Consequently, for any given user, the interference power is sent from the side lobe of the interfering BSs, and never from the main lobe. Similarly, with our orientation estimation procedure, the users are assumed to beamform towards their serving BS. As a result, the gain from the interfering BSs is always $g^2$.}

{Contrary to the data communication phase, in the localization phase, we do not consider the effect of interference. This is primarily because {we assume that} the localization estimation occurs using signals transmitted in the control channel, which is assumed to be interference-free due to the usage of orthogonal resources for transmitting the pilots. This is in line with classical and recent works on mm-wave localization \cite{Shahmansoori17, Garcia17}.}
% Let the power of noise in the localization phase be given by $N_0$ [dBm], where this noise originates from measurement errors during estimation of the delay and the orientation of the \acp{UE}.
{Hence, the \ac{SNR} in the localization process is given as:
\begin{align}
    \text{SNR}_L &=  \frac{K P_{{ t}}{|f|^2} G_{{ B}}(\theta_B) G_{{ U}}(\theta_U)}{{N_0 B}}\left({h_{{ B}}^2 + d^2}\right)^{-\frac{\alpha}{2}}.
    \label{eq:SNR}
\end{align}}

% {Contrarily to the data communication phase, in the localization phase we do not consider the effect of interference. It has been demonstrated in \cite{Abu18Error} that under the favourable conditions of high carrier frequency, large bandwidth and large number of antenna elements, the resolution of estimating delay and angle is large enough that the multipath elements can be resolved path by path and the delay and AoA can be estimated without any biases. Such as assumption holds in the mm-wave scenario.}
% \textcolor{red}{GG: i do not understand why the ability to resolve multiple paths means interference is not present? Should we rather say that we assume that the localization takes place in the control channel which has dedicated carrier etc. devoid of interference?}
% Let the power of noise in the localization phase be given by $N_0$ [dBm], where this noise originates from measurement errors during estimation of the delay and the orientation of the \acp{UE}.
% Hence, the \ac{SNR} in the localization phase is given as :
% \begin{align}
%     \text{SNR}_L &=  \frac{K P_{{ t}}{|f|^2} G_{{ B}}(\theta_B) G_{{ U}}(\theta_U)}{{N_0 B}}\left({h_{{ B}}^2 + d^2}\right)^{-\alpha}.
%     \label{eq:SNR}
% \end{align}
% Moreover, let us assume that the communication in the data phase suffers from a zero-mean additive Gaussian noise with two-sided noise power spectral density of $\sigma_{{ N}}^2$ [dBm/Hz]. As a result the \ac{SINR} in the data-communication phase is given by:

% \textcolor{red}{As said, I would include the SINR here}

\vspace*{-1cm}
\section{Initial Beam-Selection Procedure}
\label{sec:IA}
In this section, we discuss our initial beam-selection procedure for a user arriving in the \ac{mm-wave} network.
In this procedure, the \ac{BS} and the user select appropriate {beam pairs}, $\theta_{{ B}}$ and $\theta_{{ U}}$ respectively, based on the localization accuracy required for the initial access.
%
% In the subsequent sections, we will characterize the system performance and obtain an optimal beamwidth $\theta^*$ to maximize the data-rate or to address given data-rate and localization constraints. At this point, it must be noted that  $\theta_{{ B}}$ for facilitating the user initial beam-selection is not necessarily $\theta^*$ [\textcolor{red}{Not defined}]. 
%
% While the former is iteratively obtained for achieving the positioning accuracy required for \ac{mm-wave} access, the later is optimized offline as per the user \ac{QoS} requirements. The steps for the initial beam-selection are as follows.
%Furthermore, let us assume that the optimized service beamwidth $\theta^* \in \mathcal{DB}$ (which is obtained offline and will be elaborated in the next sections of the paper) is used for rest of the phases in the frame (\textit{i.e.} localization and data service phases).  
%
% It must be noted that the beamwidth that provides sufficiently accurate positioning estimates for facilitating the initial access of an user is not necessarily $\theta^*$. 
% %
% While the former is selected purely on the basis of localization accuracy, the later is optimized offline as per the QoS requirements of the user. The steps for the initial access are as follows.
%
%Here, we assume that both the bands in the BS are capable of transmitting C-plane data, as opposed to the U-plane data, which is transmitted only in the mm-wave band. 
%

\tikzstyle{block} = [rectangle, draw,  
    text width=20em, text centered, minimum height = 8em]
\tikzstyle{line} = [draw, -latex', width=0.25mm]

% \begin{figure}
% \centering
%   \begin{tikzpicture}[node distance = 2cm, auto]
%     % Place nodes
%     \node [block, line width=1pt, align=left] (init) {1. Estimate $\hat{d}$ with variance $\sigma_d^2$.\\ 
%     2. Choose $\theta_{jk}$ such that: 
%   \begin{align}
%       &k = \argmax_{k} \left[\mathcal{P}_{BS}(\theta_k, \sigma_d) \leq  \delta_{BS}\right] \nonumber \\
%       \text{and } & L_{jk} \leq \hat{d} \leq R_{jk} \nonumber
%   \end{align}
% };
%     \node [block, below of=init, node distance=5cm, , line width=1pt, align=left] (identify) {1. Estimate $\hat{\psi}$ with variance $\sigma_\psi^2$.\\ 
%     2. Choose $\theta_{U}$ such that: 
%   \begin{align}
%       &\theta_U = \argmin_{\theta} \left[\mathcal{P}_{MA}(\theta_k, \sigma_\psi) \leq  \delta_{MA}\right] \nonumber \\
%       \text{and } & 0 \leq \theta_U \leq \frac{\pi}{2} \nonumber
%   \end{align}
% };

%     % Draw edges

% % (5,-2.2) edge [bend right][left] {} (1.5,-4);
% \draw [->, line width=1pt] (4.35,0) to (5.35, 0) to (5.35, -5) to (4.35, -5);
% \draw [<-, line width=1pt] (-4.35,0) to (-5.35, 0) to (-5.35, -5) to (-4.35, -5);

% % \draw [->, line width=1pt] (4.32,0) to [bend left=90] (4.32,-5);
% % \draw [->, line width=1pt] (-4.32,-5) to [bend left=90] (-4.32,0) ;
% \end{tikzpicture}
% \caption{The \ac{BS} and user side distance and \ac{AoA} estimation and beamwidth update loop.}
%     \label{fig:my_label}
% \vspace*{-1cm} \end{figure}

\tikzstyle{decision} = [diamond, draw, text width=4.5em, text badly centered, node distance=3cm, inner sep=0pt, line width=1pt]
\tikzstyle{block} = [rectangle, draw, text centered, rounded corners, minimum height=4em, line width=1pt]
\tikzstyle{line} = [draw, -latex', line width=1pt]
\tikzstyle{cloud} = [draw, ellipse, node distance=3cm,
    minimum height=2em, line width=1pt]
\tikzset{io/.style ={trapezium, draw, minimum width=2.5cm,trapezium left angle=60, trapezium right angle=120, line width=1pt}}
    
\begin{figure}
\small{
\includegraphics[width = .86\linewidth]{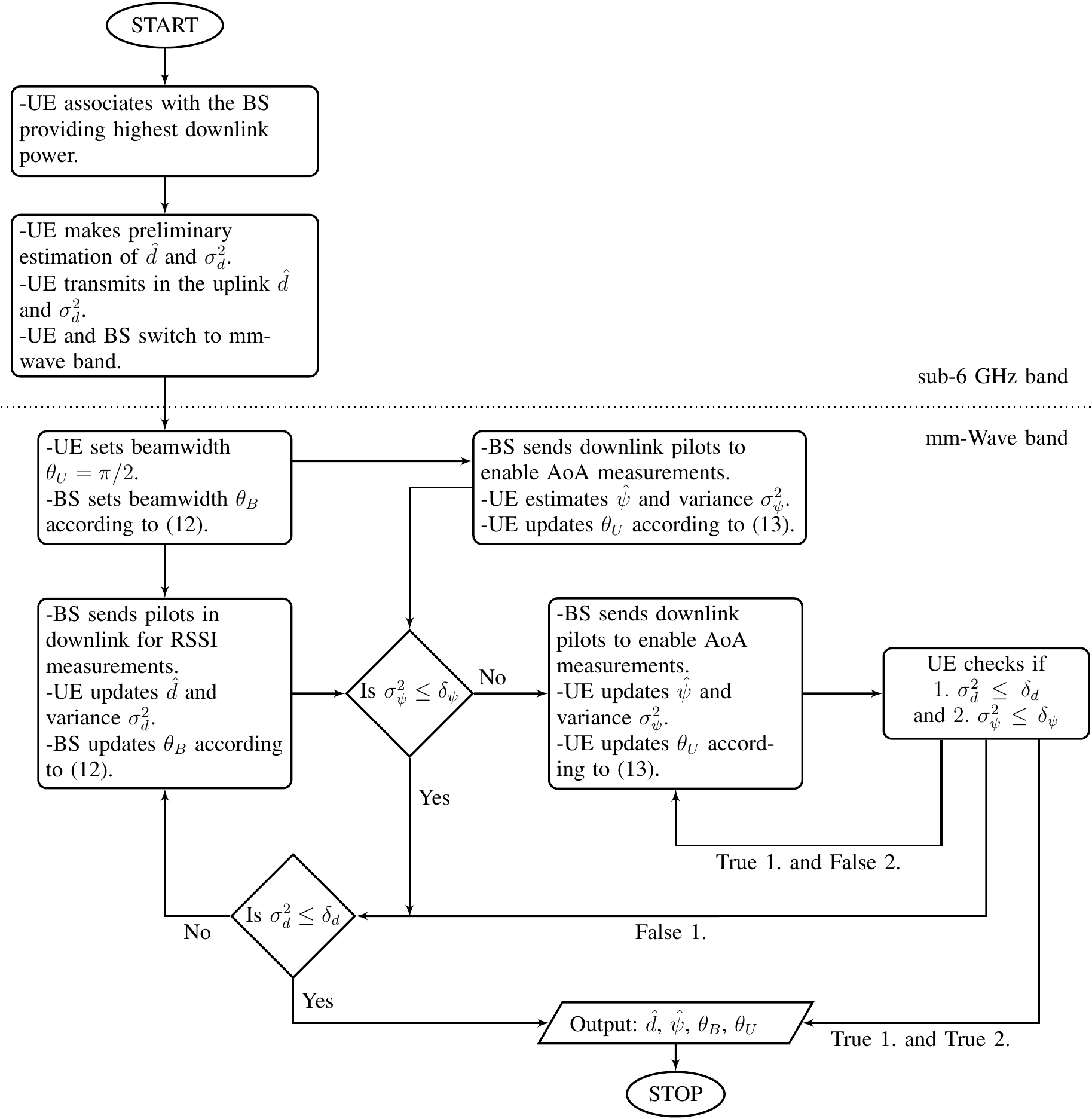}
\caption{Flowchart representing the \ac{BS} and user beam selection procedure. %\textcolor{magenta}{MC: Nice idea but what is ksi? Also the algorithm may loop forever. An idea to avoid having extra parameters $\delta$s that are difficult to set and may be never achieved, is to say: we have a delay budget or in terms of number of iterations and we iterate within this budget. Is it run only at initial access or continuously during data transmission? Ok only for initial access.
}
\label{fig:my_label}
\vspace*{-1cm} 
}
\end{figure}

\begin{enumerate}
% \item When a new user arrives in the network, it associates with the \ac{BS} that provides the highest downlink received power in the sub-6GHz band. The  associated \ac{BS} sends {downlink pilots} in the sub-6GHz band$^{{6}}$\footnote{$^{{6}}$ Without loss of generality, this coarse user localization can be obtained by means of external technologies as well (e.g. GPS, WiFi), and not necessarily through sub-6GHz.} such that {the \ac{UE} evaluates the \ac{RSSI} and makes} a coarse estimation $\hat{d}$ of its position$^{{7}}$,\footnote{$^{7}$Due to our 1D deployment scenario, we indifferently refer to the distance as position in the following.} which is characterized by an estimation-error variance $\sigma_d^2$. {The \ac{UE} then relays this information to the \ac{BS} in the sub-6GHz band}.
\item When a new user arrives in the network, it associates with the \ac{BS} that provides the highest downlink received power in the sub-6GHz band. {The UE then makes a coarse initial estimation $\hat{d}$ of its position which is characterized by an estimation-error variance $\sigma_d^2$. Without loss of generality, this initial localization can be obtained by means of technologies such as sub-6 GHz band (e.g., based on RSSI or time of flight (ToF) measurements), external means such as GPS or WiFi or even with standalone \ac{mm-wave} band based distance estimation% with initial consideration of beam of large initial beamwidth
. The \ac{UE} then relays this information to the \ac{BS}.} 
\item Next, the \ac{BS} and the \ac{UE} switch to the mm-wave band. 
The \ac{UE} selects a mm-wave beam of beamwidth $\theta_{{ U}}$, initially quasi-omnidirectional (with beamwidth $\pi/2$).

\item In $\mathcal{DB}$, for each beam dictionary $k$, there exists a beam $j$ such that $d_{{{ L}}_{jk}} \leq \hat{d}\leq  d_{{{ R}}_{jk}}$. Out of all such possible beam and beamwidth pairs $j$ and $k$, the \ac{BS} selects the pair with the largest beam dictionary size (i.e., the thinnest beam) that results in a beam-selection error probability $\mathcal{P}_{{ BS},{j,k}}\left({d},\sigma_d^2\right)$ less than a threshold $\delta_{{ BS}}$. 
Mathematically, $\theta_{{ B}} = \theta_{k}$ such that
\begin{align}\label{eq:thetaB}
  k = \max(i) : \mathcal{P}_{{ BS},{j,i}}\left({d},\sigma_d^2\right) \leq \delta_{{ BS}}, \; i = 1,2, \ldots, N, d_{{{ L}}_{jk}} \leq \hat{d}\leq  d_{{{ R}}_{jk}}.
\end{align}
The expression for beam-selection error is derived in Lemma~\ref{theo:avg_BSerror}.

\item {After this step, the \ac{BS} sends downlink pilots in mm-wave band, the UE updates $\hat{d}$ and $\sigma^2_{d}$ and transmits this information in the uplink. The \ac{BS} then updates $\theta_B$ accordingly.}
%
% Mathematically,
% \begin{align}
%     \theta_{{ B}} = \theta_{k}: \mathcal{P}_{{ BS},{j,k}}\left(\hat{d}\right) \leq \delta_1; \quad k = 1,2, \ldots, N. \nonumber
% \end{align}
%
% In the case that the beam selection error condition is not satisfied, the \ac{BS} estimates a new $\hat{d}$ with variance $\sigma_d^2$ and the process is iterated with new beamwidth $\theta_{{ B}} = \theta_{k+1}$.
\item {In parallel with the ranging estimation, the UE also measures} the \ac{AoA} of the \ac{BS} signal $\hat{\psi}$, which is characterized by an estimation-error variance $\sigma_{\psi}^2$. 
First, the user sets the angle of the maximum gain equal to $\hat{\psi}$; then, it fixes $\theta_{U}$ as the thinnest beam $\theta_i$ for which the misalignment error probability $\mathcal{P}_{MA,j,k}(d, {\psi}, \sigma^2_\psi) $ is less than a threshold $\delta_{MA}$, given that the \ac{BS} selects the $j$-th beam of size $\theta_k$. Mathematically,
% \begin{align}\label{eq:thetaU}
%       \theta_U = \argmin_{\theta} \left[\mathcal{P}_{MA}(\hat{\psi}, \theta_k) \leq  \delta_{MA}\right], 
%       \text{and } 0 \leq \theta_U \leq \frac{\pi}{2}
%   \end{align}
\begin{align}\label{eq:thetaU}
    \theta_{{ U}} = \min({\theta_i}): \left[\mathcal{P}_{MA,j,k}(d, {\psi}, \sigma^2_\psi) \leq  \delta_{{ MA}}\right], 
    \text{and } 0 \leq \theta_i \leq \frac{\pi}{2}.
\end{align}
The expression for misalignment error is derived in Lemma~\ref{theo:avg_MS}.
%$\sigma^2_{\psi}$ in equation \eqref{eq:var_AoA}. \textcolor{magenta}{MC: I don't understand why it affects sigma-psi ? Do you want to say that once theta-U has been chosen, psi can be better estimated? Not ideal to refer to an equation that comes afterwards.}

\item Let $\delta_\psi$ and $\delta_d$ be the localization accuracy requirements for reliable initial access; the refinement procedure terminates when either i) the \ac{BS} beam and the \ac{UE} beam simultaneously satisfy $\sigma_d^2 \leq \delta_d$ and $\sigma_\psi^2 \leq \delta_\psi$ or ii) a maximum number of iterations is reached.

\item {When the termination conditions are not satisfied, the UE continues to measure the downlink pilots, and accordingly, the estimates of $\hat{d}$, $\sigma_d^2$,  $\hat{\psi}$, and $\sigma_\psi^2$ are updated. Following these new estimates, steps 3 and 5 are repeated for an improved initial beam selection.}

% In the subsequent iterations, the \ac{BS} uses this new beam to update the estimate $\hat{d}$ and $\sigma_d^2$ and continues this procedure until $\sigma_d^2$ is such that the beam selection error is less than $\delta_{BS}$.
% \item Following each step of \ac{BS} side beam selection and estimation of $\hat{d}$, the user estimates the \ac{AoA} of the \ac{BS} signal $\hat{\psi}$ with an error-variance $\sigma_\psi^2$ with its beam of width $\theta_{{ U}}$.
% %
% Consequently, the user decreases its beam-size, i.e., updates $\theta_{{ U}}$ to the thinnest possible beam so that the misalignment error is less than $\delta_{MA}$. In the subsequent iterations, the user employs the new beam to update the estimate of the \ac{AoA} $\hat{\psi}$ and that of the corresponding error variance $\sigma_\psi^2$.
% %
% This iterative procedure continues until $\sigma_d^2$ is such that the misalignment error is less than $\delta_{MA}$.
\end{enumerate}
We refer the reader to Fig.~\ref{fig:my_label} for a description of the steps involved in the iterative loop for the initial access. {It must be noted that the number of steps the initial beam-selection algorithm takes to terminate depends directly on the desired resolution of the localization. In other words, the more stringent the localization requirements of the initial access are, the more will be the number of steps of the initial beam-selection algorithm. Consequently, by tuning $\delta_d$ and $\delta_\psi$, the initial access delay can be controlled. There is thus an inherent trade-off between initial access delay and the accuracy of the \ac{UE} localization, which we shall discuss in Section~\ref{sec:NRD}}. 

{The proposed initial-access scheme improves the latency for establishing mm-wave connection in the system as compared to an exhaustive search solution (as we will see in the numerical results). However, in case the direct path between the user and the BS gets obstructed due to dynamic blockage, the localization performance would suffer and the system could experience beam-selection errors. In the worst case, the user might have to re-initiate the initial-access procedure. With our algorithm, this situation can be prevented by adapting the beam size using the previously stored location estimate and the current estimation accuracy. Thus, integrating the estimation accuracy (e.g., the variance of the estimation error) enables a fall-back solution that is not possible when using only location estimate. In case of using a simple exhaustive search, the entire set of beam combinations from the UE and the BS sides needs to be checked to re-estabish the connection.}
Once the initial-access process is concluded, the system initiates the data and localization phases, which are defined and optimized in the subsequent sections.

\section{Performance Characterization of the Localization and Data Phases}
\label{sec:LandD}
After the initial access phase, the {system starts the} service {phase}, which comprises two alternating phases: the {data} phase and the {localization} phase {(see Fig. \ref{fig:frame})}. In the localization phase, \ac{mm-wave} transmission is used to update the estimates of distance {$\hat{d}$} and \ac{AoA} of the signal received at the user {$\hat{\psi}$}, and potentially improve {the} localization accuracy. 
%This is essential in the case of mobile users, which require agile switching of the beams while the user moves in the coverage area of the serving BS. 
In the data phase, the \ac{UE} is served by the \ac{BS} with a \ac{mm-wave} beam, which is selected from the dictionary according to the {estimated} user location. {We propose a framework where} the radio frames are divided into flexible sub-frames {in order} to address jointly {the requirements of} localization and data services. 
%This scheme inherently presents an interesting trade-off between localization accuracy and downlink data-rates. At a first glance, it may seem that higher downlink rates can be facilitated by allotting a larger sub-frame for the data-service phase. However, assigning a small sub-frame for localization may result in a {large} error {probability} in the selection of the mm-wave beam, specially in case of thin beams, which may eventually decrease the data-rate. %Thus, it is essential to
In this section, we mathematically characterize the performance metrics of localization (\textit{v.i.z.}, position and orientation accuracy) and communication (in terms of downlink rate coverage), as a function of the resource partitioning factor $(\beta)$ and the sizes of the beams $(\theta_{{ U}},\theta_{{ B}})$, in order to optimize the radio frame design. 
%\textcolor{magenta}{MC: I feel that this introduction is too long and basically repeat what has been said in the introduction. I would write here only one or two sentences to introduce the section.}
\vspace{-0.5cm}
\subsection{Localization Phase}
\label{Localization Phase}
We model the accuracy of the {localization phase} in terms of the \ac{CRLB}s of {the} {estimated} distance of the \ac{UE} from the \ac{BS} $\hat{d}$ and of {the} \ac{AoA} $\hat{\psi}$.
\ac{CRLB} provides us with a lower bound on the variance of unbiased estimators for those two variables. 
It is {defined as} the inverse of \ac{FIM}, which measures the amount of information on {each of} the estimation variable{s} present in the observed signal, given a priori statistics for the latter.  
{Then, using these tools, we} characterize the beam-selection error $\mathcal{P}_{{ BS}}$ resulting from a distance estimation error and {we} model the misalignment $\mathcal{P}_{{ MA}}$ between the user and the \ac{BS} due to inaccuracy in the estimation of the \ac{AoA}. %Note that ultimately, aiming at characterizing mostly the average system behavior in a given urban environment, these errors are both marginalized over the a-priori spatial distribution of the BSs, and thus, of the users too. \textcolor{magenta}{MC: this last sentence is not clear.}

\subsubsection{\ac{CRLB} of the Estimation Parameters}
Let the estimates be represented by the vector $\bm \eta = [d, \psi, f_{{ R}}, f_{{ I}}]$, where $f_{{ R}}$ and $f_{{ I}}$ respectively describe the real and imaginary parts of the channel between the \ac{UE} and the serving \ac{BS}. 

\begin{lemma}
 The CRLBs for the estimation of the distance and the AoA can be written as follows:
\begin{subequations}
\begin{align}
	&\sigma^2_{d} =  \left( {\zeta}   G_U(\theta_U) G_B(\theta_B) \frac{B^2 \pi^2}{3c^2}\right)^{-1},
    \label{eq:var_d}\\
	&\sigma^2_{\psi} =  \left(\zeta  G_B(\theta_B) \left(|\dot{\bm a}_{U}^{H}(\psi) \bm w_U(\theta_U)|^2  - \frac{|\bm a_U^H(\psi) \bm w_U(\theta_U) \bm w_U^H(\theta_U) \dot{\bm a}_{U}(\psi)|^2}{G_U(\theta_U)} \right) \right)^{-1}, \label{eq:var_AoA}
\end{align}
\end{subequations}
where $\zeta = \frac{2 \text{SNR}_L B (1-\beta)  T_{F}}{G_{{ B}}(\theta_B) G_{{ U}}(\theta_U)} $, $c$ is the speed of light, $B$ is the bandwidth and
% \begin{subequations}
% \begin{align}
% 	& a_{{ t}} = G_B(\theta_B) = \bm a_{{ B}}^{{ H}}(d) \bm W_B \bm a_{{ B}}(d), 
% 	\qquad a_{r} = G_U(\theta_U) = \bm a_{U}^{H}(\psi) \bm  W_U \bm a_{U}(\psi),
% \nonumber\\
% & a_{r,1} = \dot{\bm a}_{U}^{H}(\psi) \bm W_U \dot{\bm a}_{ U}(\psi),
% 	 \quad a_{r,2} = \mathbb{R} \left\{h^* {\bm a}_{{ U}}^{{ H}}(\psi) \bm W_U \dot{\bm a}_{{ U}}(\psi) \right\},
% 	\quad a_{r,3} = \mathbb{I} \left\{ h^*{\bm a}_{{ U}}^{{ H}}(\psi) \bm  W_U \dot{\bm a}_{{ U}}(\psi) \right\},\nonumber
% \end{align}
% \end{subequations}
$\dot{\bm a}_{{ U}}(\psi) = {\partial {\bm a}_{{ U}}(\psi)}/{\partial \psi}$.  Also, $\mathbb{R} \left\{.\right\}$ and $\mathbb{I} \left\{.\right\}$ represent the real and imaginary operators.
\label{lem:CRLB}
\vspace{-0.5cm}
\end{lemma}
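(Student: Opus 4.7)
My plan is to derive the Fisher Information Matrix (FIM) for the parameter vector $\bm\eta = [d, \psi, f_R, f_I]$ from the likelihood of the signal received at the \ac{UE} during the localization sub-frame, and then obtain $\sigma^2_d$ and $\sigma^2_\psi$ as the corresponding diagonal entries of the inverse FIM. I would model the complex baseband received signal across the $N_s \approx B(1-\beta)T_F$ available pilots (e.g., OFDM subcarriers over the localization window) as
\begin{equation*}
y_n = \sqrt{K P_t}\, f\, (h_B^2+d^2)^{-\alpha/4}\, \bm w_U^H(\theta_U)\bm a_U(\psi)\bm a_B^H(\phi)\bm w_B(\theta_B)\, e^{-j 2\pi n \Delta f\, \tau(d)} + w_n,
\end{equation*}
with $\tau(d) = d/c$ and $w_n$ i.i.d.\ complex Gaussian of spectral density $N_0$. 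From the Gaussian log-likelihood I would then assemble the $4\times 4$ FIM via the standard formula $J_{ij} = (2/N_0)\sum_n \mathbb{R}\{(\partial \mu_n^*/\partial \eta_i)(\partial \mu_n/\partial \eta_j)\}$, where $\mu_n$ is the noiseless mean of $y_n$.

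For the distance entry, I would exploit that $d$ enters $\mu_n$ only through the subcarrier-dependent phase $e^{-j 2\pi n \Delta f\, d/c}$ (the amplitude dependence through the path-loss is negligible at the FIM scale). Summing $(2\pi n\Delta f)^2$ over the $N_s$ samples produces the classical bandwidth factor $\pi^2 B^2/3$, and collecting the per-sample SNR, the observation length, and the Jacobian $1/c$ gathers the constants into $\zeta\, G_U(\theta_U) G_B(\theta_B)\, \pi^2 B^2/(3c^2)$, whose inverse is exactly \eqref{eq:var_d}. For this step I would also verify that the $d$-derivative is orthogonal (under the real-part inner product defining the FIM) both to the $\psi$-derivative—which carries only a spatial phase through $\bm a_U(\psi)$—and to the channel-coefficient derivatives, so that the $d$-block decouples into a scalar and no Schur-complement correction is needed.

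For the AoA entry, I would expand $\partial \mu_n/\partial \psi$ using $\dot{\bm a}_U(\psi)$ contracted against the beamforming vector, which yields the raw Fisher term $|\dot{\bm a}_U^H(\psi)\bm w_U(\theta_U)|^2$. The coupling between $\psi$ and the unknown channel coefficients $(f_R,f_I)$ is however non-zero, and removing it requires forming the Schur complement against the $(f_R,f_I)$-block, which is proportional to $G_U(\theta_U) = |\bm a_U^H(\psi)\bm w_U(\theta_U)|^2$. The Schur-complement correction is precisely the subtracted fraction appearing inside the brackets of \eqref{eq:var_AoA}. Multiplying the resulting effective scalar Fisher information by $G_B(\theta_B)$—which factors cleanly because $\phi$ is geometrically tied to $d$ rather than estimated independently—and by $\zeta$, then inverting, delivers \eqref{eq:var_AoA}.

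The main obstacle will be the Schur-complement computation for the AoA: one must carry the cross-terms between $\psi$ and $(f_R,f_I)$ through the partitioned inversion and simplify them using the \ac{ULA} structure of $\bm a_U(\psi)$, $\dot{\bm a}_U(\psi)$, and $\bm w_U(\theta_U)$, so that the surviving expression collapses to the form displayed in \eqref{eq:var_AoA}. The remainder is routine bookkeeping: collecting $P_t, K, |f|^2$ and the path-loss factor into the single pre-factor $\text{SNR}_L$, and summing the subcarrier/symbol indices over the localization window to produce $N_s = B(1-\beta)T_F$, so that everything repackages into the $\zeta$ defined in the lemma.
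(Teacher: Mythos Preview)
Your proposal is a correct and complete sketch of the standard FIM-based derivation, but it is worth noting that the paper itself does not actually prove this lemma: its entire proof is the single line ``See \cite{Shahmansoori17, Garcia17}.'' So you have gone considerably further than the paper, effectively reconstructing what those cited references contain. Your signal model, the identification of the $\pi^2 B^2/3$ factor from summing squared subcarrier frequencies, the decoupling of the delay block from the angular and channel blocks, and the Schur-complement removal of the $(f_R,f_I)$ nuisance from the $\psi$ entry are all the right ingredients and match the derivations in the works the paper cites. One minor point: the paper's geometry has $\tau = \sqrt{d^2+h_B^2}/c$ rather than $\tau = d/c$, so the Jacobian $\partial\tau/\partial d$ picks up an extra factor $d/\sqrt{d^2+h_B^2}$; the lemma as stated seems to have absorbed or approximated this, so you may want to flag that simplification when you write it up.
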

\begin{proof}
%See Appendix~\ref{app:CRLB}.
See \cite{Shahmansoori17, Garcia17}
\end{proof}

\begin{rmk}
\label{remark1}
The \acp{CRLB} of the estimation of the distance and the \ac{AoA} are inversely proportional to $\zeta$. Thus, the variance of the error in estimation decreases with increasing SNR$_L$ and decreasing $\beta$. Accordingly, the higher the transmit power and/or the \ac{BS} deployment density, the better the estimation performance. Similarly, larger bandwidth improves the distance estimation as it provides finer resolution for accurately analyzing the time of arrival of the received signal.
\end{rmk}

\subsubsection{Beam-Selection Error}
\label{sec:BSE}

% \sout{For a given $\theta_k$, the coverage area of the $j$-{th} beam is given by $\mathcal{C}_{j,k}  = d_{R_{jk}} - d_{L_{jk}}$, where $d_{R_{jk}}$ and $d_{L_{jk}}$ are defined in \eqref{eq:dR} and \eqref{eq:dL}, respectively.}
Without loss of generality, assume that the real position of the \ac{UE} is {$d_{L_{jk}} \leq d \leq d_{R_{jk}}$}, and accordingly, {for a given beam dictionary $k$}, the $j$-{th} beam{, whose coverage area is given by $\mathcal{C}_{j,k}  = d_{{{ R}}_{jk}} - d_{{{ L}}_{jk}}$,} should be assigned to it. 
{However, due} to ranging errors, the estimated position of the user $\hat{d}$ is distributed as $\mathcal{N}\left({d},\sigma_d^2\right)$, where $\sigma_d^2$ is defined in \eqref{eq:var_d}.
Hence, a beam-selection error occurs for the user when $\hat{{d}}$ is not inside the {correct} interval defined {by} $d_{{{ L}}_{jk}}$ and $d_{{{ R}}_{jk}}$ (see Fig. \ref{fig:BS}). 
Averaging out on the possible beams that can be selected depending on the relative positions of the typical user to BS, we have the following result.
\begin{lemma}
The probability of beam-selection error when the \ac{BS} estimates the \ac{UE} to be in the position $\hat{d}$ and selects a beam of width $\theta_k$, is computed as: %\textcolor{magenta}{MC: there is a confusion on what is the random variable and what is given. What is given here is $d$ in my opinion and thus should appear in the LHS. And the probability is obtained {\it given $d$}. Also the BS selects $j$ based on $\hat{d}$, so it cannot select $j$ if $\hat{d}\notin [dLjk; dRjk]$. In fact there is a bijective mapping bw j and d, and there is BS error when $\hat{d}$ is outside the j-th interval. }
\begin{align}\label{eq:PBS_k}
\mathcal{P}_{{ BS},{j,k}}\left(d,\sigma^2_d\right) &= \mathbb{P}\left({\hat{d}} < d_{L_{jk}}\right) + \mathbb{P}\left({\hat{d}} > d_{R_{jk}}\right) =
 1 - \mathcal{Q}\left(\frac{d_{{{ L}}_{jk}} - {d}}{{\sigma_{{ d}}}}\right) + \mathcal{Q}\left(\frac{d_{{{ R}}_{jk}} - {d}}{{\sigma_{{ d}}}}\right),
\end{align}
{where $\mathcal{Q}\left(\cdot\right)$ is the Q-function.}
Accordingly, the average beam-selection error over all the possible UE positions in case of a total number of beams $N(y)$ (where $y$ is the cell-size) with beamwidth $\theta_{{ k}}$ is given by: %\textcolor{magenta}{MC: I don't think this equation is correct. Or it depends what do you mean by $f_d$. From (14) to (15), you should sum on all intervals $\sum_j[\int_{\text{j-th interval}} P_{BS}(d)P(d|d\in \text{j-th interval})]P(d\in \text{jth interval})$}
\begin{align}
\bar{\mathcal{P}}_{{ BS}}  = \int_0^{\infty} \left(\sum_{j = 1}^{N(y)} \int_{d_{L_{jk}}}^{d_{R_{jk}}} \mathcal{P}_{{ BS}, j, k}(x,\sigma_d^2) f_{d}(x) dx \right)f_{d_a}(y) dy.
\label{eq:avg_BSerror}
\vspace*{-1cm}
\end{align}
\label{theo:avg_BSerror}
\end{lemma}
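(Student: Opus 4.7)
My plan is to split the proof into the two pieces stated in the lemma: the per-user beam-selection error expression \eqref{eq:PBS_k}, and then the spatial average \eqref{eq:avg_BSerror}.

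For \eqref{eq:PBS_k}, I would begin by recalling that the estimator $\hat{d}$ is unbiased and, under the Gaussian assumption invoked just before the lemma statement, satisfies $\hat{d}\sim\mathcal{N}(d,\sigma_d^2)$ with $\sigma_d^2$ given by Lemma~\ref{lem:CRLB}. Since the main-lobe coverage intervals of the $k$-th dictionary partition the cell, the beam-selection event "$\hat{d}\notin\mathcal{C}_{j,k}$" given $d\in\mathcal{C}_{j,k}$ decomposes into the two disjoint tail events $\{\hat{d}<d_{L_{jk}}\}$ and $\{\hat{d}>d_{R_{jk}}\}$. Standardizing the Gaussian variable and applying the definition $\mathcal{Q}(x)=\mathbb{P}(Z>x)$ for $Z\sim\mathcal{N}(0,1)$ gives
\begin{align*}
\mathbb{P}(\hat{d}<d_{L_{jk}}) &= \mathbb{P}\!\left(\tfrac{\hat{d}-d}{\sigma_d}<\tfrac{d_{L_{jk}}-d}{\sigma_d}\right) = 1-\mathcal{Q}\!\left(\tfrac{d_{L_{jk}}-d}{\sigma_d}\right),\\
\mathbb{P}(\hat{d}>d_{R_{jk}}) &= \mathcal{Q}\!\left(\tfrac{d_{R_{jk}}-d}{\sigma_d}\right),
\end{align*}
and adding these yields \eqref{eq:PBS_k}.

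For \eqref{eq:avg_BSerror} I would invoke the law of total expectation on the two independent geometric random variables that describe the typical user, namely the cell length $d_a$ (distributed as $f_{d_a}$ in \eqref{eq:IS-distribution2}) and, conditionally on $d_a=y$, the user's position $d$ (uniform on $[0,y]$, see \eqref{eq:dist}). Since for the fixed dictionary size $k$ the main-lobe intervals $\{[d_{L_{jk}},d_{R_{jk}}]\}_{j=1}^{N(y)}$ tile $[0,y]$, I would partition the inner conditional expectation over the support of $f_d(\cdot\mid d_a=y)$ into the $N(y)$ sub-integrals corresponding to each beam, each of which contributes the per-user error $\mathcal{P}_{BS,j,k}(x,\sigma_d^2)$ just established. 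Finally, de-conditioning on $d_a$ via $f_{d_a}$ gives the nested integral in \eqref{eq:avg_BSerror}.

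There is no real analytical obstacle here: once $\sigma_d^2$ is provided by Lemma~\ref{lem:CRLB} and the Gaussian model for $\hat{d}$ is granted, the first part is immediate from a Q-function identity, and the second part is a direct application of the tower rule combined with the tiling property of the dictionary. The only subtle bookkeeping step is to verify that the number of beams $N(y)$ and their boundaries $d_{L_{jk}},d_{R_{jk}}$ are, as defined in \eqref{eq:dR}--\eqref{eq:dL}, functions of the realized cell length $y$, so the sum must be kept inside the outer integral in $y$.
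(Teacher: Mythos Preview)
Your proposal is correct and mirrors the paper's own proof almost exactly: the paper also splits the event $\{\hat d\notin\mathcal{C}_{j,k}\}$ into the two tails, invokes the Gaussian assumption $\hat d\sim\mathcal{N}(d,\sigma_d^2)$ to obtain the Q-function expression, then averages first over $d$ within each beam's coverage interval, sums over the $N(d_a)$ beams, and finally takes expectation over the cell size $d_a$. Your added remark that $N(y)$ and the boundaries $d_{L_{jk}},d_{R_{jk}}$ depend on the realized cell length $y$ (so the sum must stay inside the outer integral) is a useful clarification that the paper leaves implicit.
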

\vspace*{-1cm}
\begin{proof}
See Appendix~\ref{app:avg_BSerror}.
\end{proof}
% An important step in the proof of Theorem~\ref{theo:avg_BSerror} is the characterization of the beam-selection error in case the $i-th$ beam is selected. We will use this result in the modeling the effect of beam-selection error on the downlink data-rate, and hence we state it exclusively below:
% \begin{corollary}
% The probability of beam-selection error in case the user is estimated to be located at $x$, is computed as: 
% \begin{align}
% \mathcal{P}_{BS_i}(x) &= \mathbb{P}\left(\hat{x} < d_{L_i}\right) + \mathbb{P}\left(\hat{x} > d_{R_i}\right) \nonumber \\
% & = 1 - \mathcal{Q}\left(\frac{d_{L_i} - x}{\sigma_d^2(x)}\right) + \mathcal{Q}\left(\frac{d_{R_i} - x}{\sigma_d^2(x)}\right)
% \end{align}
% \label{cor:PBS}
% \end{corollary}

\begin{corollary}
In case of deterministic deployments, where the \acp{BS} are equispaced, \eqref{eq:avg_BSerror} becomes: %\textcolor{magenta}{MC: same remark.}
\begin{align}
&\bar{\mathcal{P}}_{{ BS}}  =  \sum_{j = 1}^{N} \int_{d_{L_{jN}}}^{d_{R_{jN}}} \mathcal{P}_{{ BS}, j, N}(x,\sigma_d^2) f_{d}(x) dx,  \quad
\mbox{ where,} \quad &N  = \ceil[\Bigg]{\frac{1}{\theta_N}\arctan\left(\frac{\frac{1}{\lambda} - d_{{ L}_{1N}}}{h_{{ B}}}\right)}.\nonumber 
\end{align}
\end{corollary}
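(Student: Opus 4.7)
The plan is to obtain the corollary directly from Lemma~\ref{theo:avg_BSerror} by replacing the stochastic cell-size distribution with its deterministic degeneration. If the BSs are placed at equispaced points of density $\lambda$, then the distance to the nearest neighbor is no longer random but fixed at $1/\lambda$. Consequently, $f_{d_a}(y)$ reduces to a Dirac mass $\delta(y - 1/\lambda)$, and the outer integral in \eqref{eq:avg_BSerror} collapses by the sifting property, leaving only the inner sum evaluated at the fixed cell size $y = 1/\lambda$. This immediately yields the single-integral form stated in the corollary, provided that the upper summation index $N(1/\lambda)$ is correctly identified with the ceiling expression claimed.

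The remaining step is to verify the number of beams required to cover a cell of length $1/\lambda$. Here I would unroll the recursive construction in \eqref{eq:dR}--\eqref{eq:dL}: starting from the left boundary $d_{L_{1N}}$, the $j$-th beam of width $\theta_N$ contributes an angular arc $\theta_N$ as seen from the BS, so that after $N$ beams the rightmost covered angle equals $\arctan(d_{L_{1N}}/h_B) + N\theta_N$. Full coverage of the cell up to the far boundary $1/\lambda$ requires this angle to be at least $\arctan((1/\lambda)/h_B)$. Since $d_{L_{1N}} = 0$, the smallest integer $N$ satisfying this condition is $N = \lceil (1/\theta_N)\arctan((1/\lambda - d_{L_{1N}})/h_B) \rceil$, with the ceiling accounting for any residual angular wedge that would otherwise remain uncovered at the cell edge. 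Substituting this value into the inner sum gives the stated formula.

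I do not expect any substantive obstacle: the derivation is a direct specialization of the general lemma, with no new probabilistic or analytical machinery needed. The only point that warrants care is the ceiling operation, which ensures that the beam dictionary strictly covers the full cell length rather than leaving a sliver of angular space at the boundary; I would briefly argue that since successive beams are, by construction, non-overlapping and contiguous, the ceiling is both sufficient and minimal for this purpose.
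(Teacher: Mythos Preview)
Your proposal is correct and complete. The paper states this corollary without proof, so there is nothing to compare against; your argument---degenerating $f_{d_a}$ to a Dirac mass at the deterministic inter-BS distance and then counting how many angular wedges of width $\theta_N$ are needed to span the cell---is exactly the natural justification, and your handling of the ceiling is sound.
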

\begin{rmk}
The wider the antenna beam, the larger is the value of $\mathcal{C}_{j,k}$. Thus, for a given distance estimation accuracy (i.e., $\sigma_d$), the beam selection error is smaller for a larger beamwidth since $\mathcal{P}_{{ BS}, j, k}(x,\sigma_d^2)$ decreases with $\mathcal{C}_{j,k}$ in \eqref{eq:PBS_k}.
On the other hand, with increasing $\theta_k$, the value of  $\sigma_{{ d}}$ increases because of the lower antenna gain. This increases the beam selection error. Overall, this results in the peaky behaviour of the beam selection error that we observe in the results.
\end{rmk}
%\textcolor{magenta}{MC: Insights can be given. For example, as the beam is thinner, $\theta_k$ is smaller, the interval length is smaller and the probability is higher. As the accuracy of the distance estimation is smaller, the proba is higher.}
\subsubsection{Misalignment Error}\label{sec:MAE}
We assume that the \ac{UE} estimates the \ac{AoA} and then sets the axis of the main lobe of its antenna to $\hat{\psi}$. %\textcolor{magenta}{MC: isnt'it $\psi$ ?}
However, in case of erroneous estimate, there exists a possibility of error in alignment of the beams (see Fig. \ref{fig:MS}). %\textcolor{magenta}{MC: Fig 4 and 5 are good but should be put sooner I think, maybe in the system model description.}
Let us assume that the user located at a distance $d$ from the BS has an \ac{AoA} $\psi$ with respect to the BS{, and that is served by the $j$-th beam of size $\theta_k$ (i.e., $\theta_B=\theta_k$)}.
Due to the noise affecting the received signal, the estimated \ac{AoA} $\hat{\psi}$ is affected by random errors.
Consequently, we assume that ${\hat{\psi}}$ is distributed as $\mathcal{N}\left({\psi},\sigma_\psi^2\right)$, where $\sigma_\psi^2$ is defined in \eqref{eq:var_AoA}.
For our analysis, we define the \ac{BS}-\ac{UE} beam pair to be misaligned, if $|\psi - \hat{\psi}|$ is larger than a threshold $\nu(\theta_B, \theta_U)$. In other words, in case the axes of the main lobe of the beams of the \ac{UE} and the BS have an angular separation larger than {the a-priori angular threshold} $\nu(\theta_B, \theta_U)$,
%{[Yet undefined threshold?]},  \textcolor{red}{are you saying that $\nu=\theta_{\textcolor{red}{{{ B}}}}$?}
we assume that the beams are misaligned. %\textcolor{magenta}{MC: it seems that in the last two sentences, the definition is not exactly the same. In the first sentence, nu doesn't appear. An important question is: does nu depend on $\theta_k$? If no: PMA doesn't play any role in the optimization problem but it is not realistic. If yes: how?}
%
% {Accordingly, we have the following result.}
\begin{lemma}
The misalignment error probability for a \ac{UE} at a distance x from the BS is given by 
%\textcolor{magenta}{MC: to be consistent with the lemma with the distance above, you can use a notation $f_{\hat{\psi}}(x)$ iso the gaussian pdf.}
%\textcolor{red}{how does $\theta_{\textcolor{red}{{{ B}}}}$ impact the following result?} \textcolor{green}{GG: $\theta_B$ changes the radiated power and hence the $\sigma_\psi^2$}:
\begin{align}
\mathcal{P}_{MA,j,k}(d, {\psi}, \sigma^2_\psi)  =   \mathbb{P}\left(\psi - \hat{\psi} \leq -\nu\right) + \mathbb{P}\left(\psi - \hat{\psi} \geq \nu\right) =  2 \mathcal{Q}\left(\frac{\nu}{{\sigma_{{ \psi}}}}\right).
%1 - \int_{-\nu/2}^{\nu/2} \frac{1}{\sqrt{2 \pi \sigma^2_{{\psi}}}} \exp\left(\frac{-\left( {\hat{\psi}-\psi} \right)^2}{2\sigma^2_{\psi}}\right) d\hat{\psi}.
\label{eq:avg_MS}
\end{align}
Then, the average misalignment probability is calculated by taking the expectation with respect to $d$ and $\psi$, i.e., $\bar{\mathcal{P}}_{MA} = \mathbb{E}_{d,{\psi}}\left[\mathcal{P}_{MA}(d,\psi,\sigma_\psi^2)\right]$,  where the distribution of $d$ is $f_d(y)$ (see Section~\ref{sec:NG}), and the distribution of ${\psi}$ is uniform between 0 and 2$\pi$.
\begin{proof}
The proof follows similar to Lemma~\ref{theo:avg_BSerror}.
\end{proof}
%\textcolor{magenta}{MC: some basic insights can be provided.}
\begin{rmk}
From \eqref{eq:avg_MS}, it can be observed that the larger the threshold for misalignment, the lower is the misalignment probability. As the threshold is directly related to the transmit and receive beamwidths, in case of wider beamwidths, the probability of misalignment is lower.
\end{rmk}
% Accordingly, the average misalignment error in the system is obtained by averaging out on the distance distribution of the user from the serving BS:
% \begin{align}
% \bar{\mathcal{P}}_{MA}  = \frac{1}{\pi}\int_0^{d_a} \int_{-\theta}^{\theta} \frac{1}{\sqrt{2 \pi \sigma^2_{\xi}(d,\xi)}} \left(\exp\left(\frac{-\xi}{2\sigma^2_{\xi}(d,\xi)}\right)\right)  f_{d}(x) d\xi dx
% \label{eq:avg_MS}
% \end{align}
\label{theo:avg_MS}
\end{lemma}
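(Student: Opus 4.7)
The plan is to mimic the two-tail decomposition already used in the proof of Lemma~\ref{theo:avg_BSerror}. The misalignment event $\{|\psi-\hat{\psi}|\geq \nu(\theta_B,\theta_U)\}$ is the disjoint union $\{\psi-\hat{\psi}\leq -\nu\}\cup\{\psi-\hat{\psi}\geq \nu\}$, which immediately yields the first equality of the statement by additivity of probability on disjoint sets.

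Next, leveraging the Gaussian model $\hat{\psi}\sim\mathcal{N}(\psi,\sigma_{\psi}^{2})$ with $\sigma_{\psi}^{2}$ given by \eqref{eq:var_AoA}, the centered error $\psi-\hat{\psi}$ is distributed as $\mathcal{N}(0,\sigma_{\psi}^{2})$. Standardizing by $\sigma_{\psi}$ and invoking the definition of the Q-function, each tail equals $\mathcal{Q}(\nu/\sigma_{\psi})$; by the symmetry of the Gaussian around zero the two tails are identical, and summing them delivers the closed form $2\mathcal{Q}(\nu/\sigma_{\psi})$ claimed in the lemma.

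For the average statement, I would apply the law of total expectation, conditioning first on $(d,\psi)$, where $\sigma_{\psi}^{2}$ enters only as a deterministic function of $d$ through $\mathrm{SNR}_L$ in \eqref{eq:SNR}. I would then integrate the conditional expression against the joint density $f_d(y)\cdot\tfrac{1}{2\pi}$, exploiting the uniformity of $\psi$ on $[0,2\pi]$ from Section~\ref{sec:NG}. Because the integrand does not depend on $\psi$ once the Gaussian error is centered, the $\psi$-integral trivializes and only the $d$-integral remains, closing the argument.

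The step requiring most care — though hardly an obstacle — is pinning down $\nu(\theta_B,\theta_U)$ so that it is consistent with the sectorized beamforming model and truly characterizes when the main lobes of the BS and UE beams cease to overlap within the predefined tolerance. Once $\nu$ is specified, the remainder is routine Gaussian tail analysis entirely analogous to \eqref{eq:PBS_k}, and the derivation closes by inspection.
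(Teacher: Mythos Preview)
Your derivation of the conditional formula $\mathcal{P}_{MA,j,k}=2\mathcal{Q}(\nu/\sigma_\psi)$ is correct and matches the paper's approach exactly: the paper simply states that the proof follows that of Lemma~\ref{theo:avg_BSerror}, i.e., the same two-tail Gaussian decomposition you wrote out.

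One slip in your averaging paragraph: you claim that $\sigma_\psi^2$ enters only as a function of $d$ through $\mathrm{SNR}_L$, so that the $\psi$-integral ``trivializes.'' This is not the case. From \eqref{eq:var_AoA}, $\sigma_\psi^2$ depends on $\psi$ through the array response $\bm a_U(\psi)$ and its derivative $\dot{\bm a}_U(\psi)$, not merely through the SNR. Hence the integrand of $\bar{\mathcal{P}}_{MA}=\mathbb{E}_{d,\psi}[\,2\mathcal{Q}(\nu/\sigma_\psi)\,]$ genuinely varies with $\psi$, and the $\psi$-average over the uniform distribution on $[0,2\pi]$ must actually be carried out rather than dropped. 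This does not affect the correctness of the conditional result or the overall structure of the argument; it only means the final expectation is a double integral in both $d$ and $\psi$, exactly as the lemma statement indicates.
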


\vspace*{-1cm}
\subsection{Data Phase}
In this section, first we characterize the performance of the typical \ac{UE} considering beam-selection and misalignment errors. Then, we propose a methodology to jointly configure the split between the localization and data phases as well as the \ac{BS} beam in order to optimize data and localization performance simultaneously.
Accordingly, in the following, we first model the effective SINR coverage probability and then we define the effective user data-rate.

\subsubsection{Effective SINR Coverage Probability}
{Since the locations of the BSs are modeled as points of a 1D PPP, the locations of the users are assumed to be uniformly in the coverage area of the BSs, and the orientation of the users is assumed to be uniformly distributed between 0 and 2$\pi$, the SINR of a user is a random variable.}
The \ac{SINR} coverage probability is defined as the probability that the typical \ac{UE} receives an \ac{SINR} over a given threshold $T$. 
From the network perspective, it represents the fraction of total users  under coverage. The \ac{SINR} coverage probability is defined as the probability that the typical \ac{UE} receives an \ac{SINR} over a given threshold $T$. 
From the network perspective, it represents the fraction of total users under coverage. 
Mathematically, it is characterized in the following theorem.
\begin{theorem}
The SINR coverage probability of the typical user $\mathcal{P}_{C}\left(T,j,\theta_{{ k}},\theta_U\right)$  {served by} the $j$-th beam of width $\theta_{{ k}}$ is given by: %\textcolor{red}{@Gourab: In the below eqn, should it be $\sigma^2_d(x)?$}\textcolor{violet}{check $\sigma_x^2$ and harmonize; $\sigma^2_\psi$ depends on $x$ and $\psi$ but we have never used this notation before}:
\begin{align}
\label{eq:cov}
&\mathcal{P}_{C}\left(T,j,\theta_{{ k}},\theta_U\right) = \int_{0}^{2 \pi} \int_{d_{L_{j,k}}}^{d_{R_{j,k}}} \left[\mathcal{P}_{{ BS},{j,k}}(x,\sigma_d^2)\mathcal{T}_{BS}(x,T) + \left(1 - \mathcal{P}_{{ BS},j,k}(x,\sigma^2_d)\right)  \nonumber \right.\\
& \left.\left(\left(1 - \mathcal{P}_{{MA},j,k}(x, \psi,\sigma^2_\psi )\right)\mathcal{T}_0(x,T) + \mathcal{P}_{{MA},j,k}(x,\psi,\sigma^2_\psi) \mathcal{T}_{MA}(x,T)\right)\right]  f_d(x) f(\psi) dx d\psi
%\partial x \partial \psi,
\end{align}
\vspace{-0.3cm}
where
\vspace{-0.3cm}
\begin{align}
&\mathcal{T}_0(x,T) =  \sum\limits_{n = 1}^{N_L}(-1)^{n+1} \; \binom {N_L}n \exp\left(-\left( \frac{n \eta_L T {N_0}}{P_t K \gamma_{{ B}}(\theta_{{ k}})\gamma_{{ U}}(\theta_{{ U}}) z^{-\alpha_{L}}} + \mathcal{A}_{L0}\left(x,T\right) + \mathcal{A}_{N0}\left(x,T\right) \right) \right), \nonumber\\
&\mathcal{T}_{MA}(x,T) =  \sum\limits_{n = 1}^{N_L}(-1)^{n+1} \; \binom {N_L}n \exp\left(-\left( \frac{n \eta_L T {N_0}}{P_t K \gamma_{{ B}}(\theta_{{ k}}) g z^{-\alpha_{L}}} + \mathcal{A}_{LMA}\left(x,T\right) + \mathcal{A}_{NMA}\left(x,T\right) \right) \right), \nonumber\\
&\mathcal{T}_{BS}(x,T) = \sum\limits_{n = 1}^{N_L}(-1)^{n+1} \; \binom {N_L}n \exp\left(-\left( \frac{n \eta_L T {N_0}}{P_t K g^2 z^{-\alpha_{L}}} + \mathcal{A}_{LBS}\left(x,T\right) + \mathcal{A}_{NBS}\left(x,T\right) \right) \right)\nonumber,
\end{align}
in which $z = \sqrt{x^2 + h_B^2}$, $\sigma_d^2$ is a function of $x$, $\sigma_\psi^2$ is a function of $x$ and $\psi$, and
%\textcolor{magenta}{MC: you should be able to define 2 generic A functions depending on the factors used in the fraction and the integral bounds. Maybe it can help for the readability.}%\textcolor{red}{the following equation have to be reshaped to fit in the page width.}
\begin{align}
&\scalemath{.75}{\mathcal{A}_{L0}(x,T) = 2\lambda \int_{x}^{d_S} 1 - \frac{1}{\left(1 + \frac{\eta_L Tg^2q^{-\alpha_L} 2\lambda y}{N_L \gamma_{B}(\theta_{{ k}})\gamma_{{ U}}(\theta_{{ U}})z^{-\alpha_L}}\right)^{N_L}}  dy,} \hphantom{.} \scalemath{.75}{\mathcal{A}_{N0}(x,T) = 2\lambda \int_{d_S}^{\infty} 1 - \frac{1}{1 + \left(\frac{\eta_L Tg^2q^{-\alpha_N} 2\lambda (y-d_S)}{N_N \gamma_{{ B}}(\theta_{{ k}})\gamma_{{ U}}(\theta_{{ U}})z^{-\alpha_L} + n \eta_L Tg^2q^{-\alpha_N}}\right)^{N_N}} dy} \nonumber, \\
&\scalemath{.75}{\mathcal{A}_{LMA}(x,T) = 2\lambda \int_{x}^{d_S} 1 - \frac{1}{\left(1 + \frac{\eta_L Tgq^{-\alpha_L} 2\lambda y}{N_L \gamma_{B}(\theta_{{ k}})z^{-\alpha_L}}\right)^{N_L}}  dy,} \hphantom{*a}\scalemath{.75}{\mathcal{A}_{NMA}(x,T) = 2\lambda \int_{d_S}^{\infty} 1 - \frac{1}{1 + \left(\frac{\eta_L Tgq^{-\alpha_N} 2\lambda (y-d_S)}{N_N \gamma_{{ B}}(\theta_{{ k}})z^{-\alpha_L} + n \eta_L Tg^2q^{-\alpha_N}}\right)^{N_N}} dy} \nonumber \\
&\scalemath{.75}{\mathcal{A}_{LBS}(x,T) = 2\lambda \int_{x}^{d_S} 1 - \frac{1}{\left(1 + \frac{\eta_L Tq^{-\alpha_L} 2\lambda y}{N_L z^{-\alpha_L}}\right)^{N_L}}  dy,}
\hphantom{****.a}\scalemath{.75}{\mathcal{A}_{NBS}(x,T) = 2\lambda \int_{d_S}^{\infty} 1 - \frac{1}{1 + \left(\frac{\eta_L Tq^{-\alpha_N} 2\lambda (y-d_S)}{N_N z^{-\alpha_L} + n \eta_L Tg^2q^{-\alpha_N}}\right)^{N_N}} dy} \nonumber,
\end{align}
and $q = \sqrt{y^2 + h_B^2}$.
\label{theo:CovP_LOS}
\end{theorem}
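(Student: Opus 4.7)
\textbf{Proof Plan for Theorem~\ref{theo:CovP_LOS}.}

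The plan is to condition on the location of the typical user (distance $x$ from the serving BS and AoA $\psi$), then decompose the coverage event via the law of total probability over the beam-selection and misalignment error events characterized in Lemmas~\ref{theo:avg_BSerror} and~\ref{theo:avg_MS}. Specifically, given $(x,\psi)$, exactly one of three mutually exclusive events occurs during the data phase: (i) a beam-selection error (probability $\mathcal{P}_{BS,j,k}(x,\sigma_d^2)$), in which case the serving link is received through the side lobes of both Tx and Rx, giving a conditional coverage $\mathcal{T}_{BS}(x,T)$; (ii) correct beam selection but angular misalignment at the UE (probability $(1-\mathcal{P}_{BS,j,k})\mathcal{P}_{MA,j,k}$), giving $\mathcal{T}_{MA}(x,T)$; and (iii) both beams correctly aligned (probability $(1-\mathcal{P}_{BS,j,k})(1-\mathcal{P}_{MA,j,k})$), giving $\mathcal{T}_{0}(x,T)$. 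Integrating the resulting conditional coverage against the joint density $f_d(x)f(\psi)$ over $x \in [d_{L_{j,k}},d_{R_{j,k}}]$ and $\psi\in[0,2\pi]$ will produce~\eqref{eq:cov}.

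The core technical step is the derivation of the three $\mathcal{T}_{\star}(x,T)$ expressions. For each event, I fix the effective Tx-Rx gain of the serving link (namely $\gamma_B(\theta_k)\gamma_U(\theta_U)$, $\gamma_B(\theta_k)g$, and $g^2$ respectively) and compute $\mathbb{P}(\mathrm{SINR}_C \geq T \mid x)$. Since the serving fading $|f|^2$ follows a Nakagami distribution with integer parameter $N_L$, I will use the standard tight upper bound (Alzer's inequality) to replace the Gamma CCDF by a weighted sum of exponentials, $\mathbb{P}(|f|^2 \leq u) \leq \bigl(1-e^{-\eta_L u}\bigr)^{N_L}$, and expand via the binomial theorem. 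This produces the prefactor $\sum_{n=1}^{N_L}(-1)^{n+1}\binom{N_L}{n}$ and reduces the coverage probability to an expectation of $\exp(-n\eta_L T \cdot \mathrm{SNR}^{-1}_{\text{eff}} \cdot(N_0 B + I))$, which factors into a noise term and the Laplace transform of the aggregate interference $I$ evaluated at $s = n\eta_L T / (P_t K \cdot G_{\text{serv}} \cdot z^{-\alpha_L})$.

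Next, I will evaluate $\mathcal{L}_I(s)$ using the probability generating functional of the PPP $\xi$. Because the interferer antenna orientations are uniform and the typical user's lobe is thin, in line with the usual sectorized-model simplification I will approximate the effective interferer gain by the side-lobe constant $g$ (or $g^2$ where both ends are in side lobes, as implied by the gain factors appearing in $\mathcal{A}_{L\star}$ and $\mathcal{A}_{N\star}$). The interferers are further split according to the LOS-ball model into LOS interferers in $(x,d_S)$ and NLOS interferers in $(d_S,\infty)$, yielding two independent thinned PPPs. Averaging over the independent Nakagami fades of each interferer (parameters $N_L$ and $N_N$) and applying the PGFL produces the exponentials of the integrals $\mathcal{A}_{L\star}$ and $\mathcal{A}_{N\star}$ in the statement.

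The principal obstacle I anticipate is the careful bookkeeping of the effective antenna-gain products in each of the three events, both for the desired signal and for the interferers, so that the arguments of $\mathcal{A}_{L\star}$ and $\mathcal{A}_{N\star}$ come out consistent with the statement; small sign or exponent errors here propagate through the whole derivation. A secondary subtlety is the lower limit $x$ in the LOS interference integral, which reflects the fact that the serving BS is the nearest BS and hence no interferer lies in $(0,x)$; I will justify this by using the conditional distribution of the remaining points of $\xi$ given the serving BS at distance $x$, which by Slivnyak's theorem is still a 1D PPP of intensity $\lambda$ outside that exclusion region (with the factor $2\lambda$ accounting for both sides of the road).
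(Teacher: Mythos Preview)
Your proposal is correct and follows essentially the same route as the paper's own proof: condition on $(x,\psi)$, split into the three mutually exclusive gain scenarios via the law of total probability, apply Alzer's inequality with a binomial expansion to handle the Nakagami-$N_L$ fading of the serving link, and evaluate the interference Laplace transform via the PGFL of the 1D PPP after separating LOS and NLOS interferers. Your discussion of the exclusion region (lower limit $x$ via Slivnyak) and the bookkeeping of effective gains is in fact more explicit than the paper's sketch, which simply states that $\mathcal{T}_{BS}$ and $\mathcal{T}_{MA}$ are obtained by substituting $g$ for the appropriate main-lobe gains.
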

\begin{proof}
See Appendix~\ref{app:CovP_LOS}. 
\end{proof}

% \begin{subequations}
% \begin{align}
% \scalemath{.87}{\mathcal{A}_{L0}(x,T) = \int_{x}^{d_S}\frac{Tg^2y^{-\alpha_L} 2\lambda y}{ \gamma_{{ B}}(\theta_{{ k}})\gamma_{{ U}}(\theta_{{ U}})x^{-\alpha_L} + Tg^2y^{-\alpha_L}}  dy; \quad  \mathcal{A}_{N0}(x,T) = \int_{d_S}^{\infty}\frac{Tg^2y^{-\alpha_N} 2\lambda (y-d_S)}{ \gamma_{{ B}}(\theta_{{ k}})\gamma_{{ U}}(\theta_{{ U}})x^{-\alpha_L} + Tg^2y^{-\alpha_N}} dy} \nonumber
% \end{align}
% \begin{align}
% \scalemath{.87}{\mathcal{A}_{LMA}(x,T) = \int_{x}^{d_S}\frac{Tgy^{-\alpha_L}  2\lambda y}{ \gamma_{{ B}}(\theta_{{ k}})x^{-\alpha_L} + Tgy^{-\alpha_L}} dy; \quad  \mathcal{A}_{NMA}(x,T) = \int_{d_S}^{\infty}\frac{Tgy^{-\alpha_N} 2\lambda(y-d_S) }{ \gamma_{{ B}}(\theta_{{ k}})x^{-\alpha_L} + Tgy^{-\alpha_N}} dy }\nonumber
% \end{align}
% \begin{align}
% \scalemath{.87}{\mathcal{A}_{LBS}(x,T) = \int_{x}^{d_S}\frac{Ty^{-\alpha_L} 2\lambda y}{x^{-\alpha_L} + Ty^{-\alpha_L}}  dy; \quad  \mathcal{A}_{NBS}(x,T) = \int_{d_S}^{\infty}\frac{Ty^{-\alpha_N} 2\lambda(y-d_S)}{x^{-\alpha_L} + Ty^{-\alpha_N}}  dy }\nonumber.
% \end{align}
% \end{subequations}

%\textcolor{magenta}{MC: I think that the T functions are increasing when $\theta_k$ is smaller, while for the terms 1-P it is the reverse effect. This should be mentioned somewhere either here or when the optimization problem is formulated. }
In \eqref{eq:cov}, the term $\mathcal{T}_0(x,T)$ corresponds to the case in which there is no beam-selection error as well as no misalignment. 
In this case, we have $G_{B}(\theta_{{ k}}) =  \gamma_{{ B}}(\theta_{{ k}})$ and $G_{U}(\theta_{{ U}}) =  \gamma_{{ U}}(\theta_{{ U}})$ resulting in a high coverage probability. The term $\mathcal{T}_{MA}(x,T)$ represents the case where there is no beam-selection error, but the \ac{BS}-user beam pair suffers from misalignment. Here the coverage probability decreases as compared to $\mathcal{T}_0(x,T)$ although $G_{B}(\theta_{{ B}})$ remains the same, since here we have $G_{U}(\theta_{{ U}}) = g$. Finally, the term $\mathcal{T}_{BS}(x,T)$ refers to the case when there is a beam-selection error. It must be noted that according to our assumption, in the case of beam-selection error, we assume that the beams are always misaligned. Here we have $G_{B}(\theta_{{ k}}) = G_{U}(\theta_{{ U}}) = g$.
In case of exhaustive-search, the users will not suffer from beam-selection or misalignment errors, i.e., for exhaustive-search, in \eqref{eq:cov} we have $\mathcal{P}_{{ BS},{j,k}}(x,\sigma_x^2) = 0$ and $\mathcal{P}_{{MA},j,k}(x,\psi,\sigma^2_\psi) = 0$. Accordingly, the users will experience a better SINR, as discussed in the following proposition:
\begin{proposition}
For a given value of $\theta_k$ and $\theta_U$, an exhaustive-search based initial access algorithm will suffer from no beam-selection error and no misalignment error. Consequently, the SINR coverage probability for an exhaustive search algorithm is given by:
\begin{align}
    \mathcal{P}_{C}\left(T,j,\theta_{{ k}},\theta_U\right) = \int_{0}^{2 \pi} \int_{d_{L_{j,k}}}^{d_{R_{j,k}}} \mathcal{T}_0(x,T)  f_d(x) f(\psi) dx d\psi
    \label{eq:exhaust}
\end{align}
\end{proposition}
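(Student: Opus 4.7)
The plan is to obtain this proposition as a direct specialization of Theorem~\ref{theo:CovP_LOS} once we argue that the two error events vanish under exhaustive search. First I would justify, at the level of the beam-selection procedure, why exhaustive search rules out both $\mathcal{P}_{BS}$ and $\mathcal{P}_{MA}$. The key observation is that in exhaustive search the BS and UE do not rely on the localization-based dictionary lookup described in Section~\ref{sec:IA}: they instead sweep over every pair of beams in the dictionary and commit to the pair that maximizes the received power. Hence, for any true position $d \in [d_{L_{j,k}}, d_{R_{j,k}}]$ with true \ac{AoA} $\psi$, the algorithm deterministically selects the correct beam index $j$ on the BS side (so $\hat{d}$ never falls outside $\mathcal{C}_{j,k}$) and steers the user's main lobe so that $|\psi - \hat{\psi}| < \nu(\theta_k,\theta_U)$ (so the misalignment event is empty). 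Thus both $\mathcal{P}_{{BS},j,k}(x,\sigma_d^2) = 0$ and $\mathcal{P}_{{MA},j,k}(x,\psi,\sigma_\psi^2) = 0$ almost surely.

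Next I would substitute these two identities into the coverage expression \eqref{eq:cov} of Theorem~\ref{theo:CovP_LOS}. Inspecting the integrand, the term multiplying $\mathcal{T}_{BS}(x,T)$ is $\mathcal{P}_{{BS},j,k}(x,\sigma_d^2)$ and vanishes; within the remaining bracket, the factor $(1-\mathcal{P}_{{BS},j,k}(x,\sigma_d^2))$ becomes unity and the coefficient of $\mathcal{T}_{MA}(x,T)$, namely $\mathcal{P}_{{MA},j,k}(x,\psi,\sigma_\psi^2)$, also vanishes, while the coefficient of $\mathcal{T}_0(x,T)$ reduces to $1$. Collecting what survives yields exactly
\begin{align*}
\mathcal{P}_{C}(T,j,\theta_k,\theta_U) = \int_{0}^{2\pi}\int_{d_{L_{j,k}}}^{d_{R_{j,k}}} \mathcal{T}_0(x,T)\, f_d(x)\, f(\psi)\, dx\, d\psi,
\end{align*}
which matches \eqref{eq:exhaust}.

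As a consistency check I would verify that $\mathcal{T}_0(x,T)$ is indeed the correct conditional coverage term for the no-error regime: under exhaustive search the serving link enjoys the full boresight gain $\gamma_B(\theta_k)\gamma_U(\theta_U)$ while interferers are still seen through side lobes (the $g^2$ terms in $\mathcal{A}_{L0}$ and $\mathcal{A}_{N0}$), exactly as encoded in $\mathcal{T}_0$. The $\psi$-integration is retained only for formal compatibility with \eqref{eq:cov}; since the integrand no longer depends on $\psi$, it simply contributes the normalization of the uniform distribution on $[0,2\pi]$.

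The proof is essentially a substitution argument, so there is no serious obstacle; the only point that deserves care is articulating the operational meaning of exhaustive search so that setting $\mathcal{P}_{BS} = \mathcal{P}_{MA} = 0$ is truly justified and not just postulated. If a more rigorous statement is desired, one could cast exhaustive search as the limit of the iterative procedure of Section~\ref{sec:IA} with $\delta_{BS},\delta_{MA}\to 0$ and the number of iterations unbounded, in which case the convergence of both error probabilities to zero follows from the monotone refinement of the beam dictionary and the consistency of the \ac{AoA}/ranging estimators.
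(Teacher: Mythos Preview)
Your proposal is correct and follows essentially the same approach as the paper: the paper simply notes that under exhaustive search $\mathcal{P}_{{BS},j,k}=0$ and $\mathcal{P}_{{MA},j,k}=0$ and substitutes these into \eqref{eq:cov}, which is exactly the specialization argument you outline. Your additional justification of why exhaustive search nullifies both error events and the consistency check on $\mathcal{T}_0$ go slightly beyond what the paper states explicitly, but the core reasoning is identical.
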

\begin{corollary}
The overall SINR coverage probability, considering all the $N$ beams of size $\theta_{{ k}}$ is:
\begin{align}
\mathcal{\bar{P}}_{C}(T,\theta_{{ k}},\theta_U) = \mathbb{E}_{d_a}\left[\sum_{j=1}^{N(d_a)}\mathcal{P}_{C}\left(T,j,\theta_{{ k}},\theta_U\right)\right],
\end{align}
\end{corollary}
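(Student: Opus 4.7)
The plan is to obtain the overall SINR coverage probability by conditioning on the cell size $d_a$ and then summing the per-beam coverage probabilities established in Theorem~\ref{theo:CovP_LOS} over all beams that tile the typical cell. Since the cell half-length $d_a$ is random with density $f_{d_a}(y)=2\lambda e^{-2\lambda y}$ (see \eqref{eq:IS-distribution2}), and since for a given cell size the number of beams of width $\theta_k$ needed to cover it is the deterministic quantity $N(d_a)$ computed from the dictionary construction in \eqref{eq:dR}--\eqref{eq:dL}, the natural structure is $\mathbb{E}_{d_a}[\,\sum_{j=1}^{N(d_a)}(\cdot)\,]$.

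First, I would condition on the cell size $d_a=y$. Given this, the beams $\{(\theta_k,d_{L_{jk}},d_{R_{jk}})\}_{j=1}^{N(y)}$ form a non-overlapping partition of the interval $[0,y]$ by construction of the dictionary $\mathcal{DB}$. Therefore, for the typical user placed uniformly on $[0,y]$ (conditional distribution \eqref{eq:dist}), the events ``the user falls in the main-lobe coverage of beam $j$'' for $j=1,\ldots,N(y)$ are mutually exclusive and exhaustive. Using the law of total probability on these events,
\begin{align}
\mathbb{P}(\text{SINR}_C>T\mid d_a=y)=\sum_{j=1}^{N(y)}\mathbb{P}\bigl(\text{SINR}_C>T,\ d\in\mathcal{C}_{j,k}\,\bigl|\,d_a=y\bigr). \nonumber
\end{align}

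Second, I would identify each summand with the expression derived in Theorem~\ref{theo:CovP_LOS}. Indeed, $\mathcal{P}_C(T,j,\theta_k,\theta_U)$ is defined as an integral of the conditional SINR coverage probability against $f_d(x)f(\psi)$ over $x\in[d_{L_{jk}},d_{R_{jk}}]$ and $\psi\in[0,2\pi]$; the three branches $\mathcal{T}_0,\mathcal{T}_{MA},\mathcal{T}_{BS}$ correctly account for the disjoint events of (no error), (misalignment only), and (beam-selection error) weighted by $\mathcal{P}_{BS,j,k}$ and $\mathcal{P}_{MA,j,k}$. Hence each term in the sum above equals $\mathcal{P}_C(T,j,\theta_k,\theta_U)$ evaluated at the conditional position density induced by $d_a=y$.

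Finally, I would de-condition by integrating over $f_{d_a}(y)$, which yields
\begin{align}
\bar{\mathcal{P}}_C(T,\theta_k,\theta_U)=\int_0^\infty\sum_{j=1}^{N(y)}\mathcal{P}_C(T,j,\theta_k,\theta_U)\,f_{d_a}(y)\,dy =\mathbb{E}_{d_a}\!\left[\sum_{j=1}^{N(d_a)}\mathcal{P}_C(T,j,\theta_k,\theta_U)\right], \nonumber
\end{align}
as claimed. The only delicate step, which I would make explicit, is justifying the interchange of the expectation over $d_a$ with the per-beam integrals inherited from Theorem~\ref{theo:CovP_LOS}; this is harmless because $N(y)$ is finite almost surely, each summand is bounded in $[0,1]$, and $f_{d_a}$ decays exponentially, so Fubini applies without difficulty. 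The main subtlety is thus purely bookkeeping---ensuring that the joint density of $(d_a,d)$ used implicitly inside $\mathcal{P}_C(T,j,\theta_k,\theta_U)$ is consistent with the outer expectation---which follows from the factorization $f_{d_a,d}(y,x)=f_d(x\mid d_a=y)f_{d_a}(y)$ noted earlier in Section~\ref{sec:NG}.
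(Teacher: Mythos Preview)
Your proposal is correct and follows exactly the approach the paper takes: the corollary is stated without proof in the paper, but its structure directly parallels the derivation of $\bar{\mathcal{P}}_{BS}$ in Lemma~\ref{theo:avg_BSerror} (Appendix~\ref{app:avg_BSerror}), namely conditioning on $d_a$, summing the per-beam quantities over the $N(d_a)$ beams that partition the cell, and then taking the expectation with respect to $f_{d_a}$. Your added remarks on Fubini and the consistency of the joint density $f_{d_a,d}$ are more explicit than anything the paper provides, but they are the right justifications and do not depart from the paper's intended argument.
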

where the expectation is taken with respect to the inter-BS distance $d_a$ given by \eqref{eq:IS-distribution2}.
% \begin{corollary}
% Accordingly, in case of NLOS association without beam selection error and the 
% \end{corollary}

\subsubsection{Effective Rate Coverage Probability}
Let $B$ denote the system bandwidth and $T_I$ the duration of the initial access procedure. 
As the data phase uses $\beta$ fraction of the total \sout{time-frequency} resources in the service phase $T_F$, we can compute the probability $\mathcal{P}_R(r_0, \beta, \theta_{{ k}},\theta_U)$ that the effective rate is above given threshold $r_0$ as below. %\textcolor{magenta}{MC: clearly when $\beta$ increases, rate increases.}
\vspace*{-0.3cm}
\begin{lemma}
For a given SINR coverage probability, the effective rate coverage probability is given by $\mathcal{\bar{P}}_C\left( 2^{\frac{r_0(T_I+T_F)}{\beta T_F B}} -1, \theta_{{ k}},\theta_U\right)$, where $r_0$ is the target rate threshold.
\end{lemma}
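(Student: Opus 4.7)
The plan is to translate the effective-rate coverage event into an equivalent SINR coverage event, so that we can invoke the SINR coverage characterization from Theorem~\ref{theo:CovP_LOS} (and its averaged counterpart $\mathcal{\bar P}_C$) directly. The starting point is the observation that within one full radio frame of duration $T_I+T_F$, only the data sub-phase of duration $\beta T_F$ actually carries user data; the initial access phase ($T_I$) and the localization phase ($(1-\beta)T_F$) do not contribute to the user's data throughput.

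First I would write the instantaneous (Shannon) data rate experienced by the typical user during the data sub-phase as $B\log_2(1+\text{SINR}_C)$, using the full bandwidth $B$ and the SINR model of Theorem~\ref{theo:CovP_LOS}. Then I would define the \emph{effective} data rate as this instantaneous rate scaled by the fraction of time the user is actually served, namely
\begin{equation*}
R_{\text{eff}} \;=\; \frac{\beta T_F}{T_I + T_F}\, B\log_2(1+\text{SINR}_C).
\end{equation*}
This captures the overhead of the initial access and localization phases in a single multiplicative factor.

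Next I would write the effective rate coverage event $\{R_{\text{eff}} \geq r_0\}$ and algebraically solve for an equivalent condition on $\text{SINR}_C$. Rearranging gives
\begin{equation*}
\text{SINR}_C \;\geq\; 2^{\frac{r_0(T_I+T_F)}{\beta T_F B}} - 1.
\end{equation*}
Because the map $x\mapsto B\log_2(1+x)$ is strictly increasing, the two events are equivalent, so $\mathbb{P}(R_{\text{eff}}\geq r_0)=\mathbb{P}(\text{SINR}_C\geq T)$ with the SINR threshold $T=2^{r_0(T_I+T_F)/(\beta T_F B)}-1$. Applying the definition of the (averaged) SINR coverage probability $\mathcal{\bar P}_C(\cdot,\theta_k,\theta_U)$ developed in the preceding subsection then yields the stated expression.

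I do not expect a real obstacle here: the result is essentially a change of variables, and the only subtlety is being explicit that (i) the Shannon formula is applied to the SINR of the data phase (so that beam-selection and misalignment errors, already folded into $\mathcal{\bar P}_C$, are correctly inherited), and (ii) the pre-log factor $\beta T_F/(T_I+T_F)$ correctly accounts for \emph{both} the localization overhead inside the service frame and the initial access overhead outside it. Once these two points are made explicit, the monotonicity argument closes the proof in one line.
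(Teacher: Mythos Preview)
Your proposal is correct and follows essentially the same approach as the paper: define the effective rate as $\frac{\beta T_F}{T_I+T_F}B\log_2(1+\text{SINR}_C)$, invert the monotone map to obtain the equivalent SINR threshold $2^{r_0(T_I+T_F)/(\beta T_F B)}-1$, and then invoke $\mathcal{\bar P}_C$. The paper's proof is in fact just the one-line chain of equalities you describe, without the additional commentary on monotonicity and overhead accounting.
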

\begin{proof}
\vspace*{-0.3cm}
\begin{align}
\vspace*{-0.3cm}
\mathcal{\bar{P}}_R(r_0, \beta, \theta_{{ k}},\theta_U) &= \mathbb{P}\left(\frac{\beta T_F}{T_I + T_F}B\log_2\left(1 + SINR_C\right) \geq r_0\right) = \mathbb{P} \left(SINR_C \geq 2^{\frac{r_0(T_I+T_F)}{\beta T_F B}}  -1\right) \nonumber\nonumber \\ & =  \mathcal{\bar{P}}_C\left( 2^{\frac{r_0(T_I+T_F)}{\beta T_F B}}  -1, \theta_{{ k}},\theta_U\right).
\label{eq:rate}
\end{align}
\vspace*{-1cm}
\end{proof}
\subsection{Joint Optimization of the Transmit Beamwidth and Radio Frame Structure}
%\label{sec:Algo}

Given the characterization of the effective rate coverage probability, we present a schematic for selecting, at the BS, the optimal beam from the designed beam dictionary.
{The proposed schematic is presented in the form of a two-stage optimization problem as:
	\begin{equation}
	 \theta^* = \underset{\theta_k}{argmax} \begin{bmatrix} \underset{\beta}{max} \quad &\mathcal{\bar{P}}_R(r_0,\beta,\theta_{{ k}},\theta_U)\\
     \text{subject to} \qquad	&\bar{\mathcal{P}}_{BS} \leq \epsilon \\
     & \bar{\mathcal{P}}_{MA} \leq \epsilon'
     \end{bmatrix}
	\label{initial}.    
\end{equation}
In the first step, for a given $\theta_k$, we select the value of $\beta_k^*$ that maximizes the effective rate coverage probability subject to apriori constraints $\epsilon$ and $\epsilon'$, on the beam-selection and the misalignment errors, respectively. These constraints are system parameters which are governed by the accuracy requirements for the localization service.
%\textcolor{red}{In the se}
%
In the subsequent frames, based on the new measurements, the estimates ($\hat{d}$ and $\hat{\psi}$) are updated and the measurement error variances ($\sigma^2_d$ and $\sigma^2_\psi$) change. Accordingly, the BS beamwidth $\theta^*$ can be further updated by using \eqref{initial}.}

{We emphasize that the optimal beamwidth thus calculated is different from the adaptive beamwidth value evaluated in the initial access phase. The former is calculated offline to maximize the data-rate given a set of system parameters, whereas, the latter is the beamwidth adapted to achieve the required resolution in terms of distance and orientation.}

\section{Numerical Results and Discussion}
\label{sec:NRD}
Now we present numerical results related to the initial beam-selection and the localization-communication trade-offs developed in this paper. {The parameter values are given in Table 1. The numerical results follow the analytical expressions derived in this paper, where the beam-selection and misalignment errors are characterized by (16) and (18), respectively. The errors are incorporated into the SINR coverage probability expressions as derived in Theorem 1. Leveraging this, the rate coverage probability follows (21). }
%In the following subsections, we study the performance of the three phases depicted in Fig. 1 and the optimal resource partitioning and beamwidth selection scheme.}

\vspace*{-0.5cm}
\subsection{Performance of the Initial Access Phase}
First, let us discuss the performance of the initial beam-selection strategy developed in Section \ref{sec:IA}. In Fig.~\ref{fig:initial2}, we plot the enhancement in positioning resolution (characterized as the variance of the ranging error) with increasing the number of steps of our initial beam-selection algorithm. %{define what positioning resolution is; there is no enhancement in Fig.~\ref{fig:initial2}.}
Here, we have assumed that $\delta_d = 0.01$m is the minimum resolution required to provide mm-wave date-service. 
As expected (see Remark 1), we note that for denser small cell deployments (e.g., $\lambda = 0.1$ m$^{-1}$) the algorithm stops at a lower number (here 4) of iterations, as compared to the sparser deployment scenarios. 
%{lower with respect to what? and why dense deployment leads to fast convergence}
%
As the deployment becomes sparser (e.g., $\lambda = 0.01$ m$^{-1}$), {a larger number of steps is} required for the initial access procedure. This is precisely due to the fact that for denser deployments, $\text{SNR}_L$ increases. Accordingly,  a  larger  beamwidth  is  sufficient  and  hence,  a  lower number  of iterations are required to meet the localization requirements.

%\textcolor{magenta}{MC: I would speak somewhere about the delay vs resolution tradeoff.}
%
% \begin{figure}
% \centering
%     \subfloat[]
%     {\includegraphics[width = 8cm, height = 5cm]{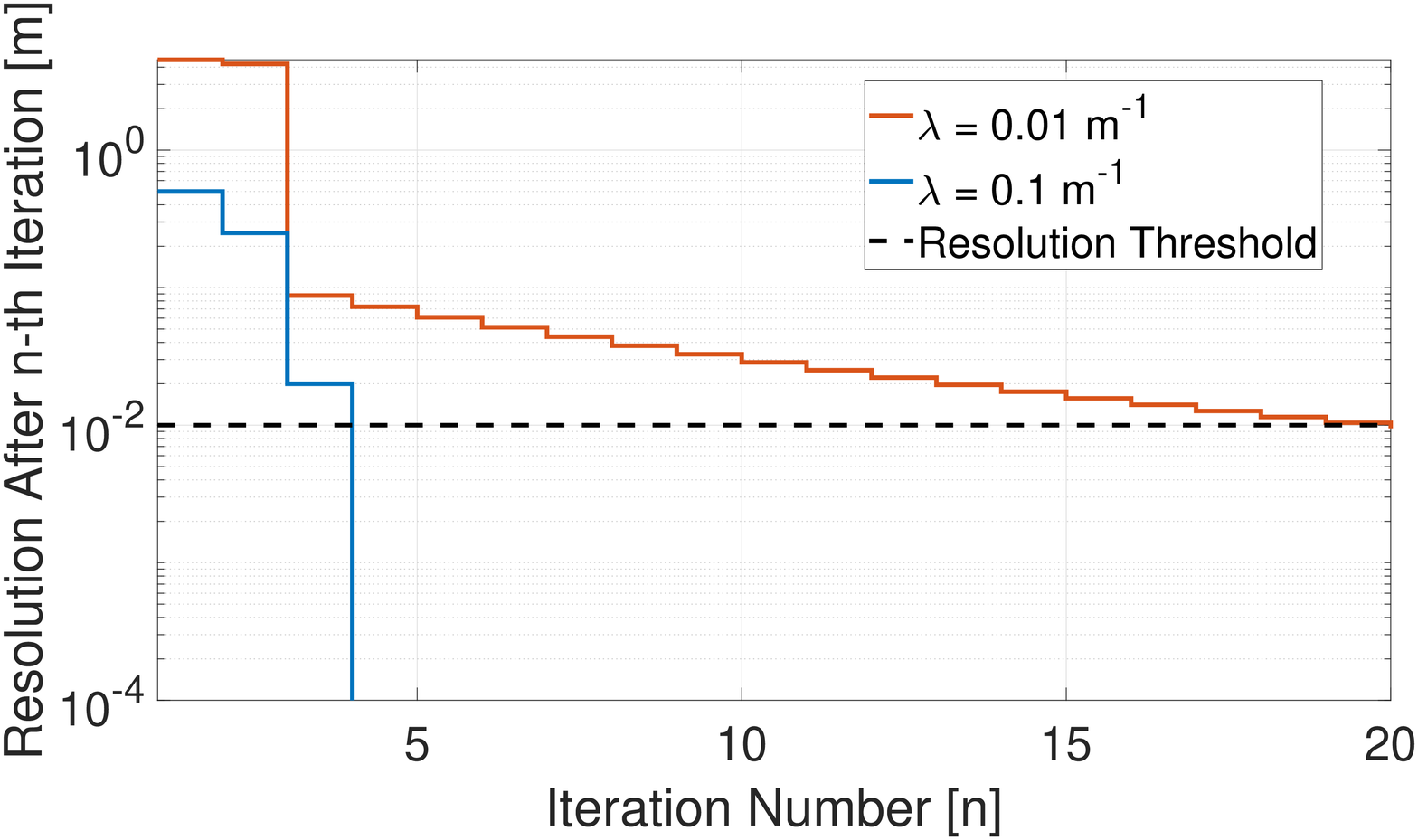}
%     \label{fig:initial2}}
%     \hfill
%     \subfloat[]
%     {\includegraphics[width = 8cm, height = 5cm]{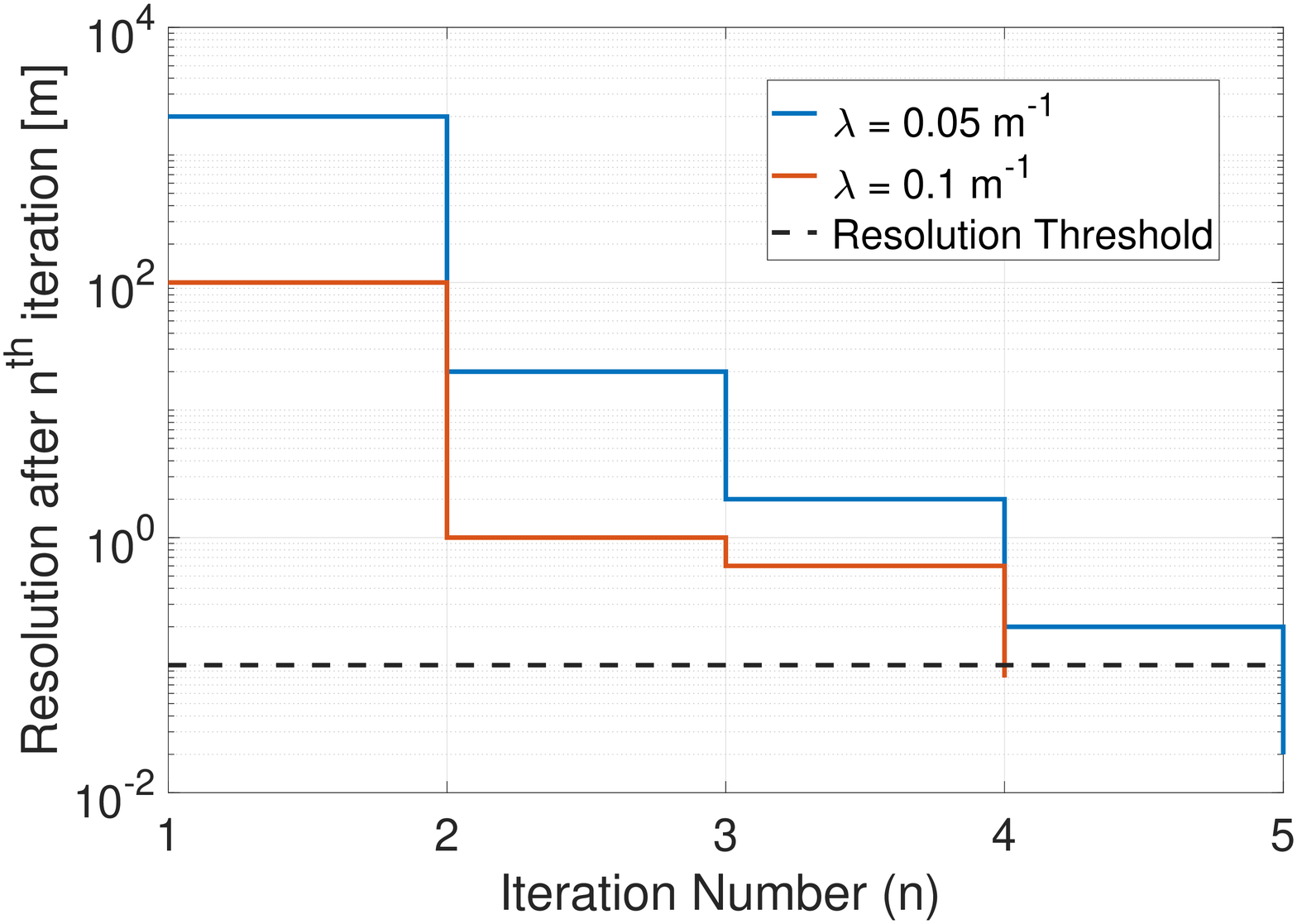}
%     \label{fig:delay}}
%     \caption{(a) Resolution in the $n$-th step of the bound-based initial access strategy for different deployment densities and (b) Comparison of the delay in initial access of our bound-based strategy to the iterative and exhaustive search strategies.}
% \vspace*{-1cm} \end{figure}
\begin{figure}
\centering
  \subfloat[]
    {\includegraphics[height = 4cm, width = 8cm]{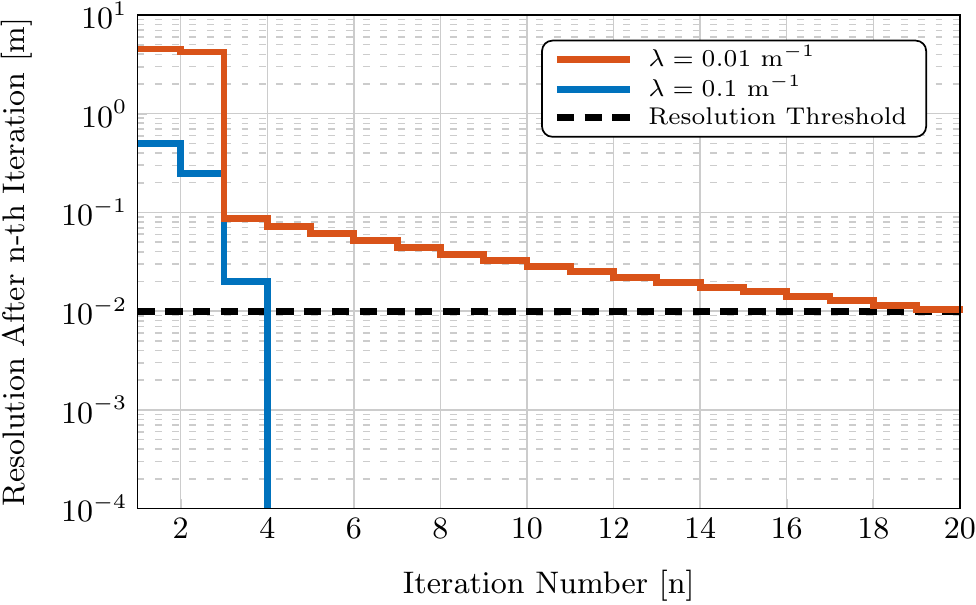}
    \label{fig:initial2}}
    \hfill
  \subfloat[]
    {\includegraphics[height = 4cm, width = 8cm]{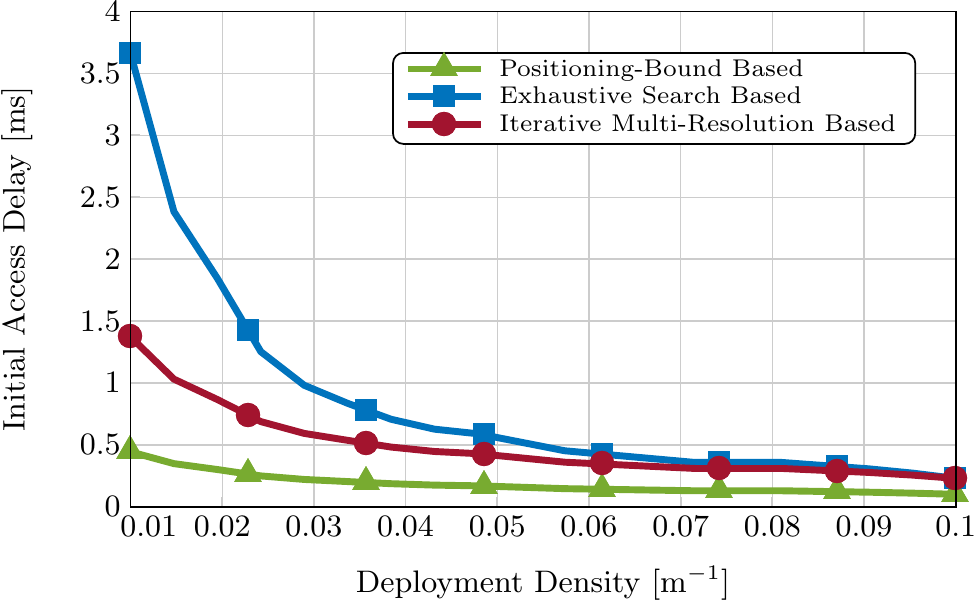}
    \label{fig:delay}}
    \caption{(a) Resolution in the $n$-th step of the localization-based initial access strategy for different deployment densities, and (b) Comparison of the delay in initial access of our localization-based strategy to the iterative and exhaustive search strategies.}
    \vspace*{-0.5cm}
\end{figure}

\begin{figure}
    \centering
  %      \subfloat[]
    {\includegraphics[height = 4cm, width = 8cm]{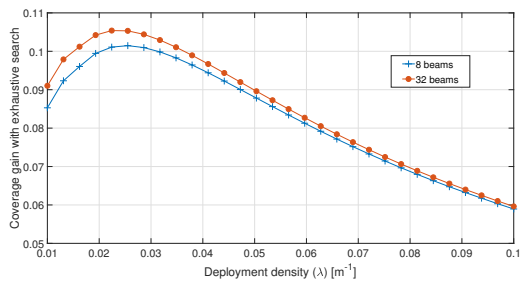}
    }
    \caption{ Gain in SINR coverage with an exhaustive search based initial-access algorithm for two beam dictionary sizes.}
    \label{fig:exhaust}
    \vspace*{-0.5cm}
\end{figure}

Then, in Fig.~\ref{fig:delay} we compare the initial access delay of the proposed localization-bound based strategy with the one achieved by two well-known beam-sweeping solutions: exhaustive search and iterative search~\cite{li2017design,michelusi2018optimal}.
For the exhaustive search, we consider the beamwidth of the BS and the UE to be fixed and equal to $\theta_B$ and $\theta_U$, respectively. Thus, the \ac{BS} and the \ac{UE} go through all the possible $\frac{2\pi}{\theta_B} \times \frac{2\pi}{\theta_U}$ beam combinations to select the beam pair that maximizes the SNR. The exhaustive search has been adopted in the standards IEEE 802.15.3c and IEEE 802.11ad~\cite{baykas2011ieee,anastasi2003ieee}. On the other hand, for the case of iterative search (similar to bisection search in~\cite{hussain2017throughput}), we assume that the \ac{BS} initiates the procedure with $k = 2$ while the user uses an omni-directional beam. 
Out of the two possible beams, the \ac{BS} identifies the beam that results in the highest downlink SNR and changes its search space to the region covered by that beam. Then, the \ac{BS} changes its beam size to a thinner one (of dictionary $k = 4$) and uses 2 out of the 4 beams from the dictionary which lie within the modified search space. We assume that the initial beam-selection phase terminates when this process chooses the same beamwidth $\theta_{{B}}$ selected by our algorithm. Thereafter, the \ac{BS} fixes the selected beam and the \ac{UE} carries out the same procedure for obtaining the user side beam. In our system, similar to~\cite{li2017design}, we assume that i) one OFDM symbol length (including cyclic prefix) is 14.3~$\mu$s, ii) each synchronization signal occupies only one OFDM symbol, and iii) the beam reference signal is also transmitted in the same symbol to uniquely identify the beam index. Clearly, our strategy provides considerably faster initial access precisely due to the smaller number of steps than those required by the exhaustive and iterative search based schemes. %\textcolor{magenta}{MC: very interesting but the conditions are not sufficiently detailed, we can discuss during a meeting. For the OFDM symbol duration incl CP, I have 17.9mus for 60kHz SCS and 8.92mus for 120kHz in 5G NR. Initial access delays are too short to be realistic.}
% \begin{figure}
% \centering
% \includegraphics[width = 8cm, height = 5cm]{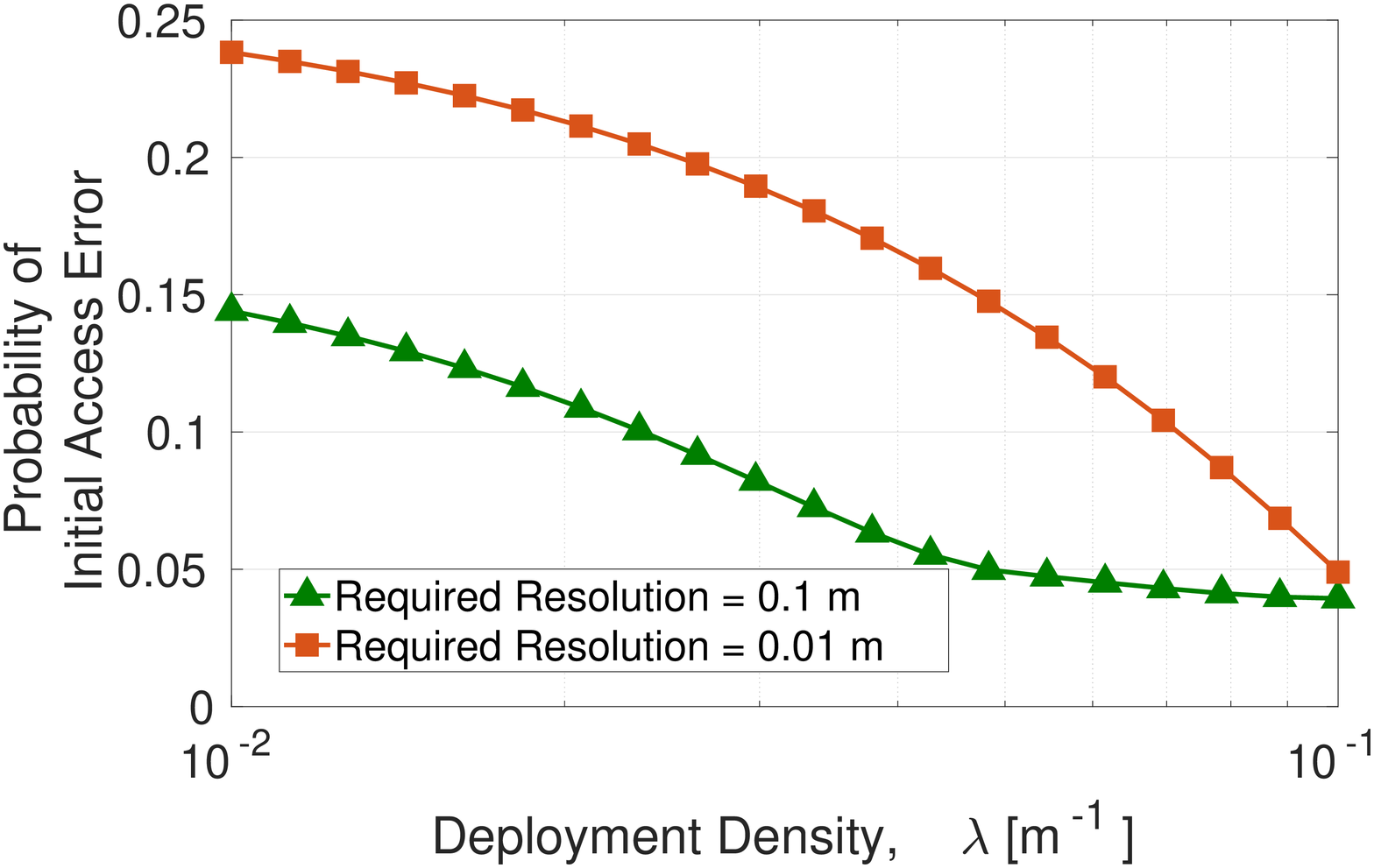}
% \caption{Probability of initial access error with respect to the deployment densities for different resolution requirements {the x-axis data is missing; what is the initial access error?}.}
% \label{fig:initial3}
% \vspace*{-1cm} \end{figure}

% In Fig.~\ref{fig:initial3} we compare the beam-selection error {eq?} during \textcolor{red}{the} initial access for different resolution requirements. For a higher resolution requirement, the initial access procedure requires a higher number of steps (see Fig.~\ref{fig:initial2}). Since the beam-selection error in \textcolor{red}{the} initial access cumulatively increases with the number of steps, the initial access error is higher in case of a higher resolution requirement.
% {the other two strategies do not have error, isn't?; this tradeoff should be discussed.}

The number of iterations our algorithm takes to terminate is a direct measure of the delay in the initial beam-selection procedure. Specifically, we assume that this delay is computed as the product of the sum of the required number of steps at the BS side and the UE side and the duration of one OFDM symbol. In Fig.~\ref{fig:initial2}, with $\lambda =$ 0.01 $\mbox{m}^{-1}$, we observe that for a required $\delta_d=$ 0.01 m, our algorithm terminates in 20 steps, which corresponds to a delay of about 5.7 ms. Wheras, if the positioning requirement was specified to be 0.1 m, the algorithm would have terminated in 3 steps, which corresponds to a initial beam-selection delay of about 1 ms. Thus, there exists a fundamental trade-off between the localization requirement ($\delta_d$ and $\delta_\psi$) and the delay in the initial beam-selection.

{For a fair comparison, we emphasize that classical algorithms such as the exhaustive search do not suffer from beam-selection and misalignment errors. This is shown in Fig.~\ref{fig:exhaust}, where we plot the gain in SINR coverage with an exhaustive search based initial-access algorithm as compared to our proposed algorithm for two beam dictionary sizes. We observe that with a large number of beams, the SINR gain increases. This is precisely because a large beam dictionary size leads to smaller beam coverage, which in turn increases the beam selection error. More interestingly, we see that for dense deployment of BSs, the gain drops dramatically as the beam-selection and misalignment errors with the proposed initial-access scheme are limited.}

\vspace*{-0.5cm}
\subsection{Performance of the Localization Phase}
The reduction in the initial beam-selection delay with the proposed algorithm is naturally associated with localization errors, which we discuss in this section.
% \begin{figure}
% \centering
% \subfloat[]
% {\includegraphics[width = 8cm, height = 5cm]{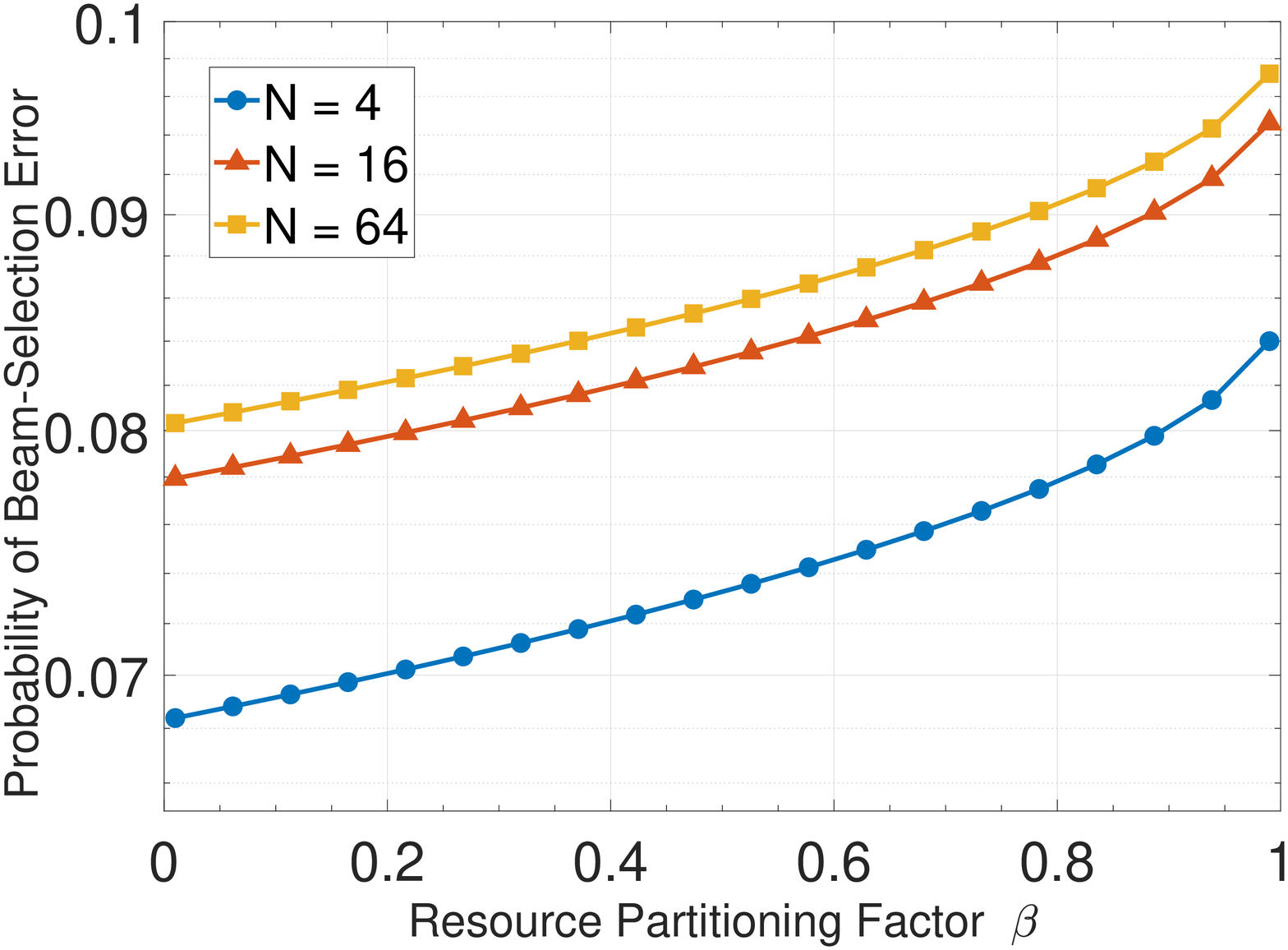}
% \label{fig:SBE}}
% \hfill
% \subfloat[]
% {\includegraphics[width = 8cm, height = 5cm]{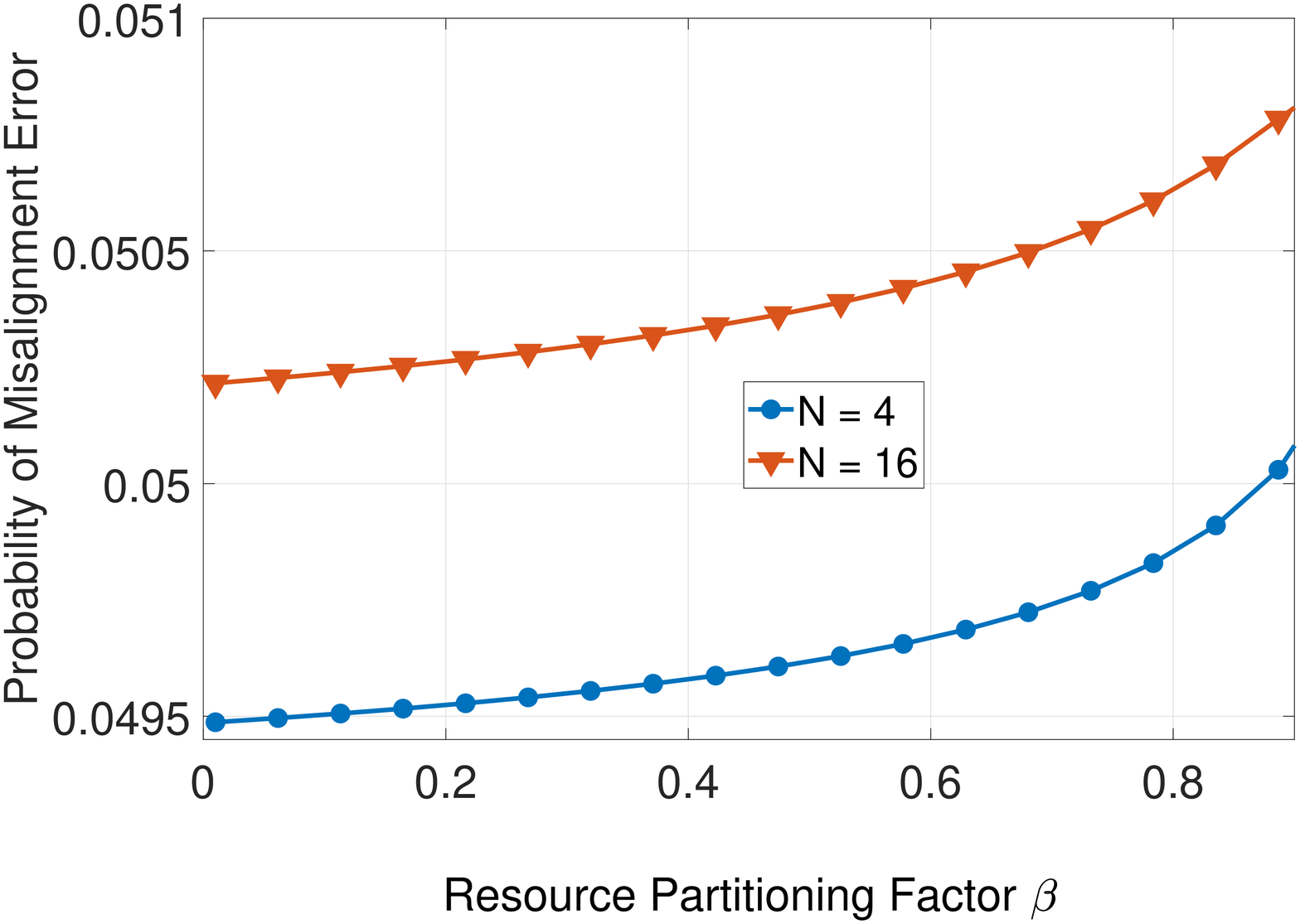}
% \label{fig:MAE}}
% \caption{(a) Probability of beam-selection error vs the resource partitioning factor for different beam dictionary sizes and (b) Probability of misalignment vs the resource partitioning factor for different beam dictionary sizes.} 
% \vspace*{-1cm} \end{figure}

%% \begin{figure}
%% \centering
%% \subfloat[]
%% {\includegraphics[width = 8cm]{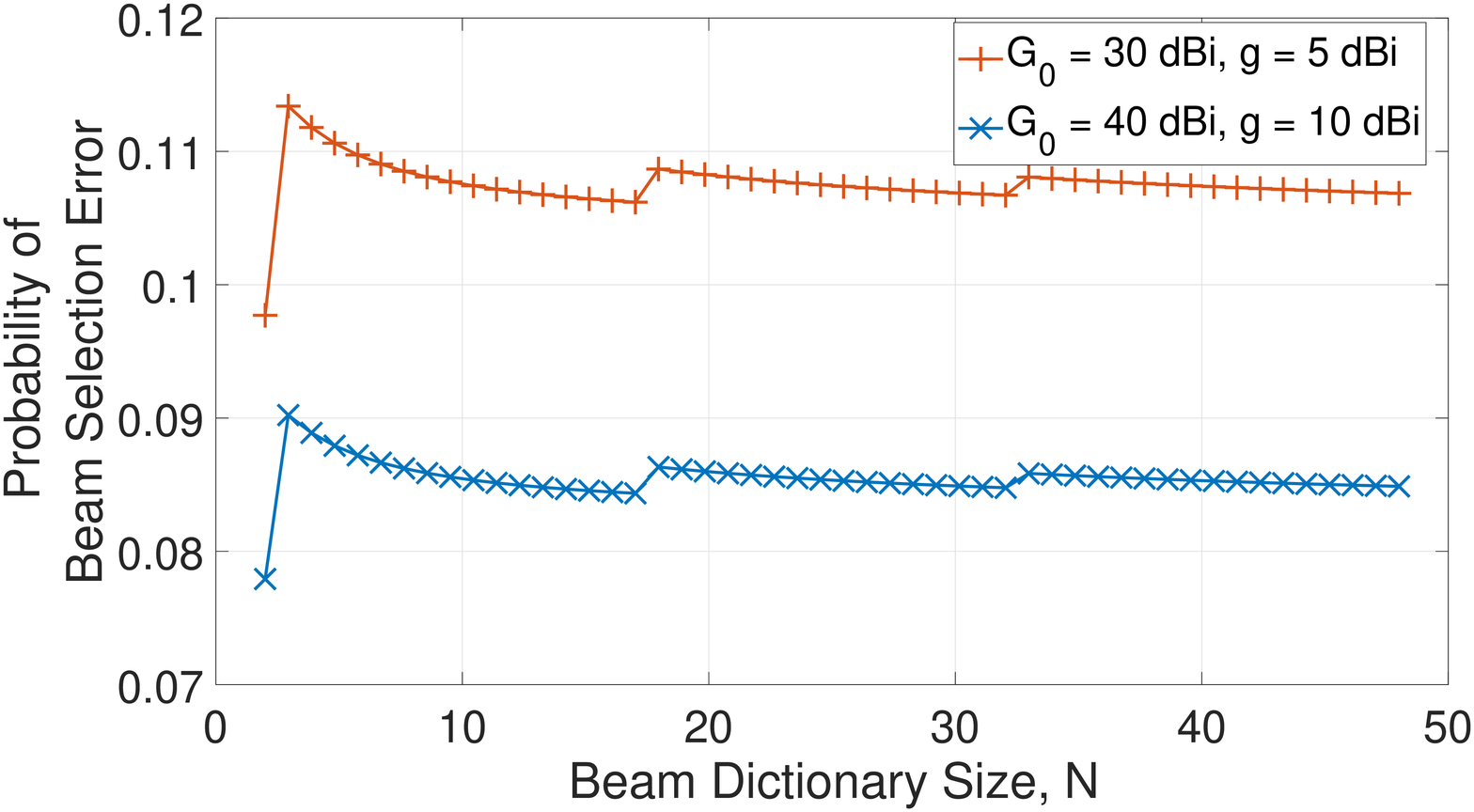}
%% \label{fig:PBA}}
%% \hfill
%% \subfloat[]
%% {\includegraphics[width = 8cm]{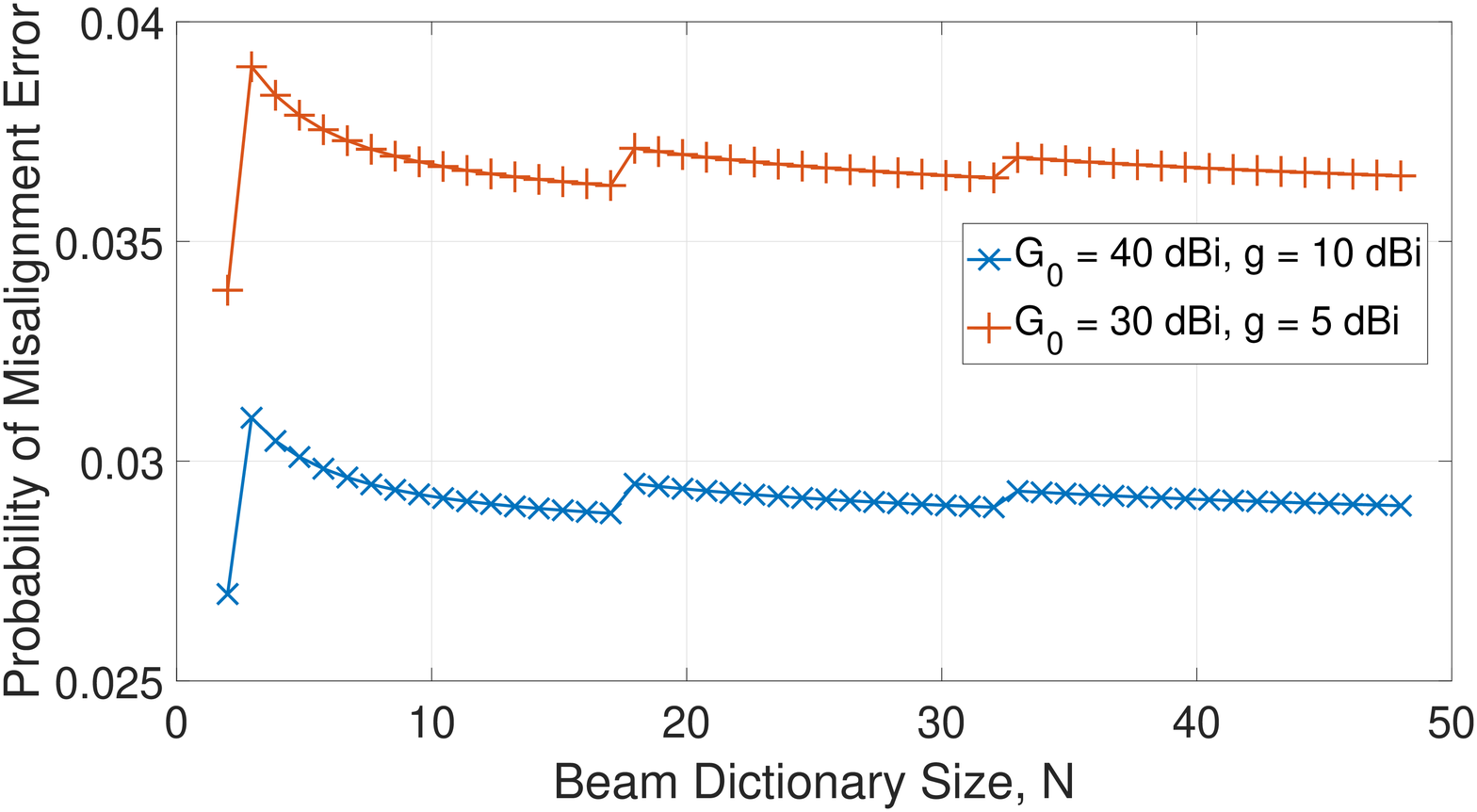}
%% \label{fig:PMA}}
%% \caption{(a) Probability of beam selection error vs the beam dictionary size for different antenna gains; (b) Probability of misalignment error vs the beam dictionary size for different antenna gains.}
%% \vspace*{-1cm} \end{figure}

\begin{figure}
\centering
\subfloat[]
{\includegraphics[height = 4cm, width = 8cm]{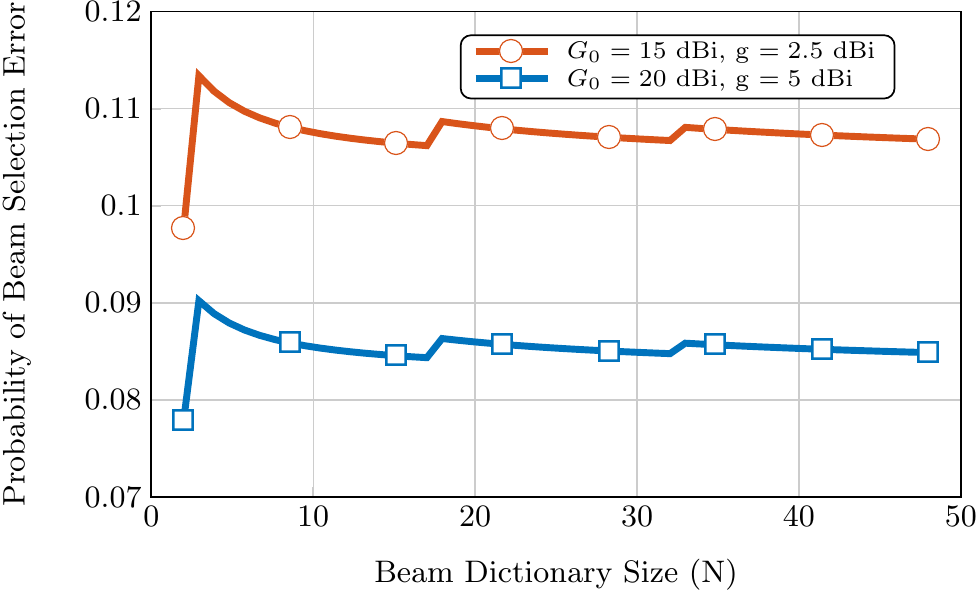}
\label{fig:PBA}}
\hfill
\subfloat[]
{\includegraphics[height = 4cm, width = 8cm]{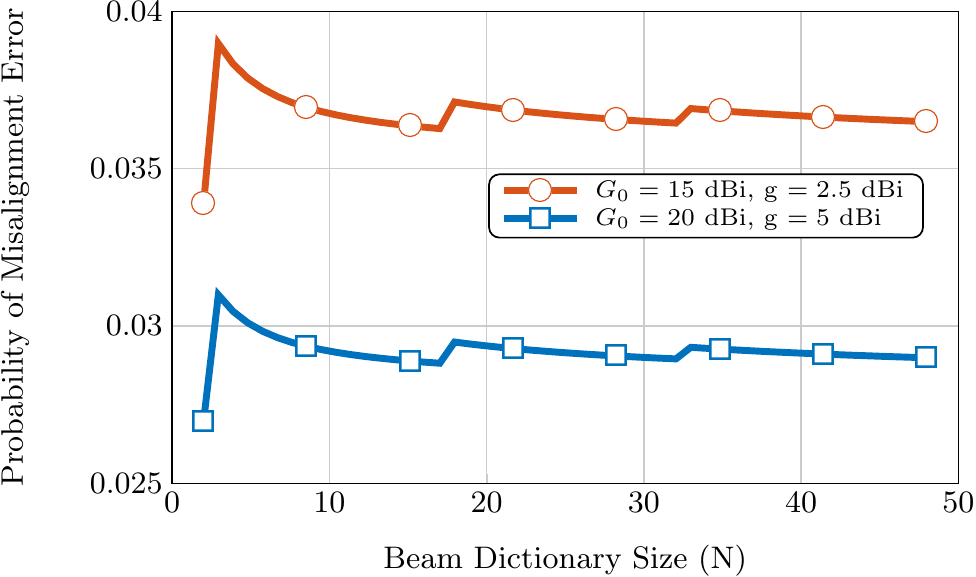}
\label{fig:PMA}}
\caption{(a) Probability of beam selection error vs the beam dictionary size for different antenna gains; (b) Probability of misalignment error vs the beam dictionary size for different antenna gains.}
\vspace*{-1cm} \end{figure}

In Fig.~\ref{fig:PBA} we plot the beam selection error as a function of the beam-dictionary size ($N$) for different antenna gains. It must be noted that the effect of a larger beamwidth on the beam-selection error is non-trivial (see Remark 2). Larger beamwidth results in a lower radiated power, which leads to a higher \ac{CRLB} for distance estimation, which may lead to a higher beam-selection error. However, a larger beamwidth also corresponds to a larger geographical area covered on ground by the beam (i.e., larger $\mathcal{C}_{k,j}$), which leads to a lower beam-selection error.
%\textcolor{magenta}{MC: You should first explain the contradictory effects: if beam are large, there is less chance to make a mistake but received power is less and so localization accuracy is also less, which leads to higher errors}

As expected, the beam-selection error is minimized for $N = 1$, when a single beam encompasses all the cell coverage area of the \ac{BS}. The beam-selection error would occur only when using a distance-based cell selection scheme, if the actual position of the \ac{UE} is outside the coverage area of the serving \ac{BS}. For $N \geq 2$, interestingly, we observe a stepped behavior of the probability of beam selection error with respect to the beam dictionary size. The beam selection error gradually decreases with increasing beam dictionary size due to the increasing antenna gain (see eq. \eqref{eq:Gain}). This behaviour continues until a certain value of beam dictionary size, where the beam width becomes so thin that the probability that the user lies outside the beam coverage area is high. This results in an increase in the probability of beam-selection error, which then gradually decreases, when increasing the beam dictionary size, and so on. %\textcolor{magenta}{MC: I think that you don't really explain the step behavior: when N increases the beams are thinner and thinner so why the error decreases after the steps?} 
This brings forth an important characteristic of the system: for achieving a given beam selection error performance, multiple beam sizes can exist. This is precisely because of the fact that with the decreasing size of the beams, two conflicting phenomena occur: i) an  improvement in the estimation performance owing to larger antenna gain and ii) a reduction of the geographical area covered by each beam.

Fig.~\ref{fig:PMA} shows that the beam misalignment probability has the same peaky trend of the beam selection error with respect to the beam dictionary size. Specifically, the misalignment probability gradually decreases with increasing $N$ until a certain value beyond which the beam becomes so thin that the misalignment error increases.

\vspace*{-0.5cm}
\subsection{Localization Data-Rate Trade-off}

\begin{figure}
    \centering
    \includegraphics[width = 0.6\textwidth]{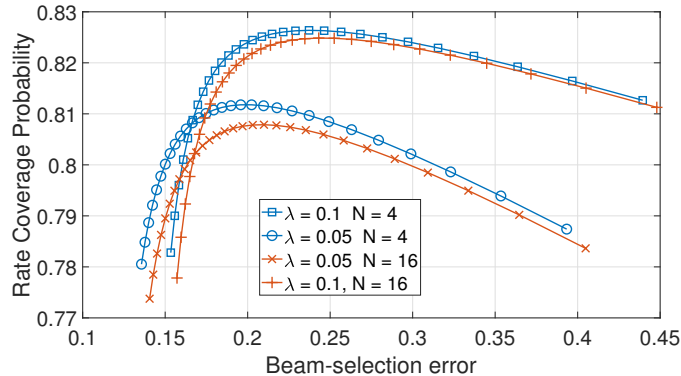}
    \caption{Rate coverage probability vs the beam-selection error.}
    \label{fig:modified_rate}
    \vspace*{-1cm}
\end{figure}

{In Fig.~\ref{fig:modified_rate} we plot the rate coverage probability of the typical user with respect to the average beam-selection error, $\bar{\mathcal{P}}_{BS}$ for different beam dictionary sizes. We tune the value of $\bar{\mathcal{P}}_{BS}$ by modifying the value of $\beta$. We observe that for all beam dictionary sizes, as the value of $\bar{\mathcal{P}}_{BS}$ increases (i.e., as the localization estimation performance degrades), the rate coverage probability is initially improved. Thereafter, it reaches an optimal value for a certain $\bar{\mathcal{P}}_{BS}$ and decreases on further increasing the value of $\bar{\mathcal{P}}_{BS}$.
This highlights the non-trivial trade-off between the localization and the data-rate performance in our system. {{This is all the more complex} as the optimal value of $\beta$ (and hence the rate coverage probability) depends on both the BS deployment density and the dictionary size.} To achieve very low values of $\bar{\mathcal{P}}_{BS}$, sufficient resources need to be allotted for the localization phase thus leading to efficient beam-selection and beam-alignment. A small increase in the value of $\bar{\mathcal{P}}_{BS}$ does not result in a large degradation of the localization performance but, in contrast, enhances the data-rate as more resources are assigned to the data-communication phase.  
However, further increasing the value of $\bar{\mathcal{P}}_{BS}$ after a certain $\beta$ (i.e., $\beta^*$) deteriorates the rate coverage. This is because poor localization leads to a high beam selection errors. As a result, the effective antenna gains at the transmitter and receiver sides decrease, which directly reduces the useful received signal power, while the interference power remains same. Overall, this leads to limited rate performance.
Another interesting observation in this figure is that in order to achieve the same coverage performance, the {beam-selection error is slightly larger in case} of larger beam-dictionaries. This is due to the thinner beams in larger beam-dictionaries, which increase the probabilities that the users lie outside the serving beam.}

\vspace*{-0.5cm}
\subsection{Rate Coverage Performance and Trends}
In Figs.~\ref{fig:Rate1} and \ref{fig:Rate3} we plot the rate coverage probability with respect to the resource partitioning factor $\beta$ varying the antenna gain parameter $G_0$ and the BS deployment density. First, we note again that there exists an optimal $\beta^*$ for each beam dictionary size, for which the rate coverage probability is maximized. More interestingly, the value of $\beta^*$ is not unique and is dependent not only on the dictionary size but also on the system parameters such as antenna gains. From Fig.~\ref{fig:Rate1} we can see that the optimum value of $\beta$ decreases for higher $N$, i.e., thinner beamwidth. This is because with thinner beamwidth, the localization resources should be increased to limit the probability that the \ac{UE} lies outside the coverage area of the beam. 

%Comparing Fig.~\ref{fig:Rate1} with Fig.~\ref{fig:Rate2}, we notice that for a sparser deployment of base stations (10 per km), $\beta^*$ is closer to 0 than the case with denser deployments (100 per km). This is precisely due to the fact that for sparser deployment of \acp{BS}, more resources should be allotted to the localization phase for accurate position information. Whereas for denser deployments, the close proximity of the serving \ac{BS} to the user itself leads to satisfactory localization performance, leading to a better data-communication performance. %This can also be observed with while comparing the trends in rate coverage when $\beta \to 0$ and $\beta \to 1$. 
%For $\beta = 0.1$, i.e., with very precise localization, the sparser deployment case fares better than the denser one. This is due to the fact that given that beam selection error and misalignment does not occur (with accurate localization), the sparser case enjoys lower interference from the neighboring \acp{BS}. The trend reverses when $\beta$ is closer to 1. That is, when the resource allotted to the localization phase is inadequate, the denser deployment, with higher localization performance based on closer proximity leads to a better rate coverage.
%We notice that for sparser deployment of base stations (10 per km), the optimal rate coverage with thinner beams ({$N$} = 16) is lower ($ < 0.82$) than the case with denser deployments (100 per km), where the coverage reaches 0.825. \textcolor{magenta}{MC: small gain whereas the number of BSs has been multiplied by 10!}

When the antenna gain is smaller ($G_0 = 7.5$ dBi), we see in Fig.~\ref{fig:Rate3} that the rate coverage (at 1 Mbps contrary to 100 Mbps as before) increases with $\beta$. With {$G_0 = 15$} dBi, the positioning accuracy is limited (for any value of $\beta$), while increasing $\beta$ simply increases the communication resources, thereby augmenting the coverage. In this case, a smaller beamwidth (with  {$N$} = 16) provides better coverage than a larger beamwidth (with {$N$} = 4), since with limited localization accuracy, the rate coverage simply increases with decreasing $\theta$ due to higher radiated power.
% \begin{figure}
% \centering
% \subfloat[]
% {\includegraphics[width = 8cm]{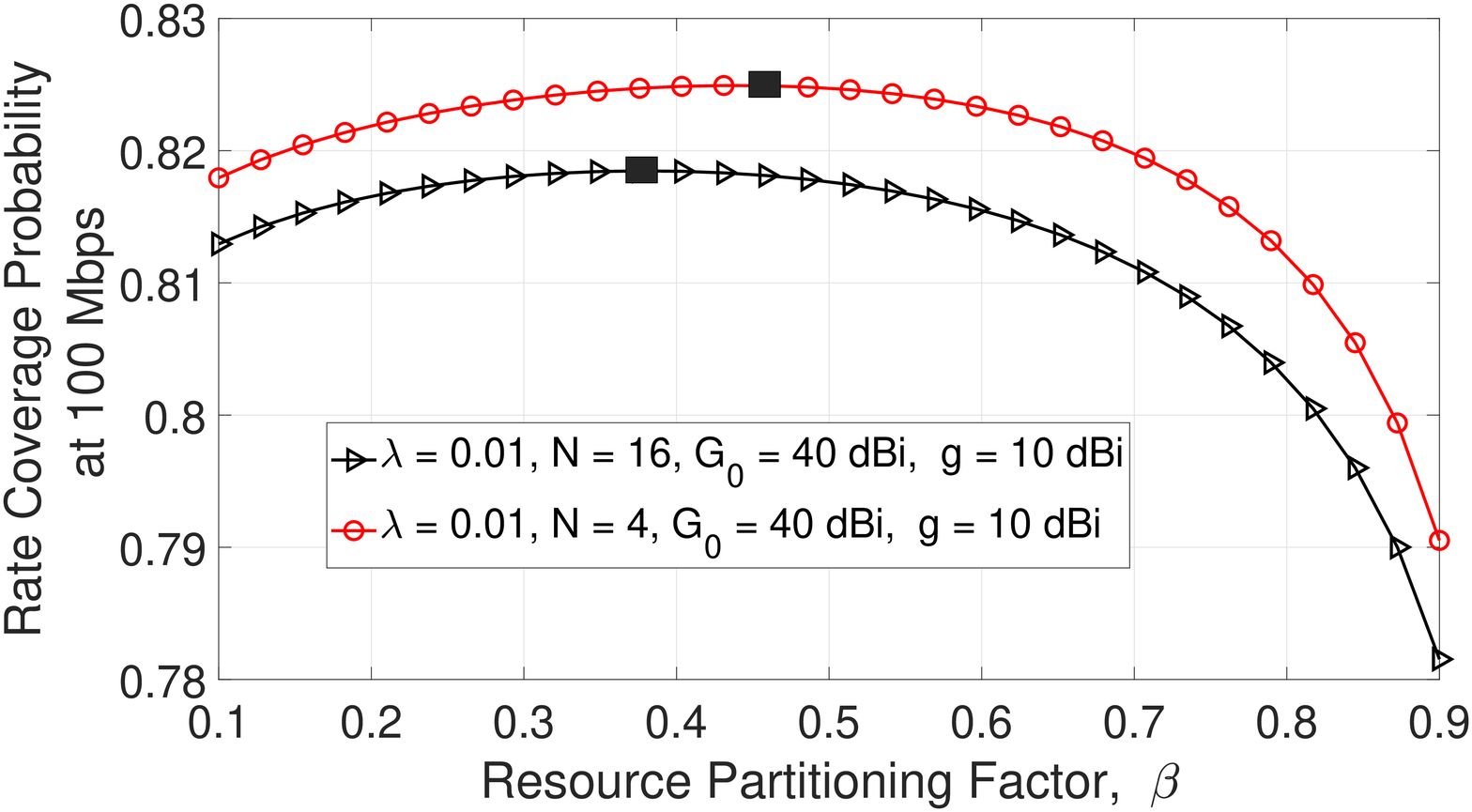}
% \label{fig:Rate1}}
% %\hfill
% % \subfloat[]
% % {\includegraphics[width = 5cm]{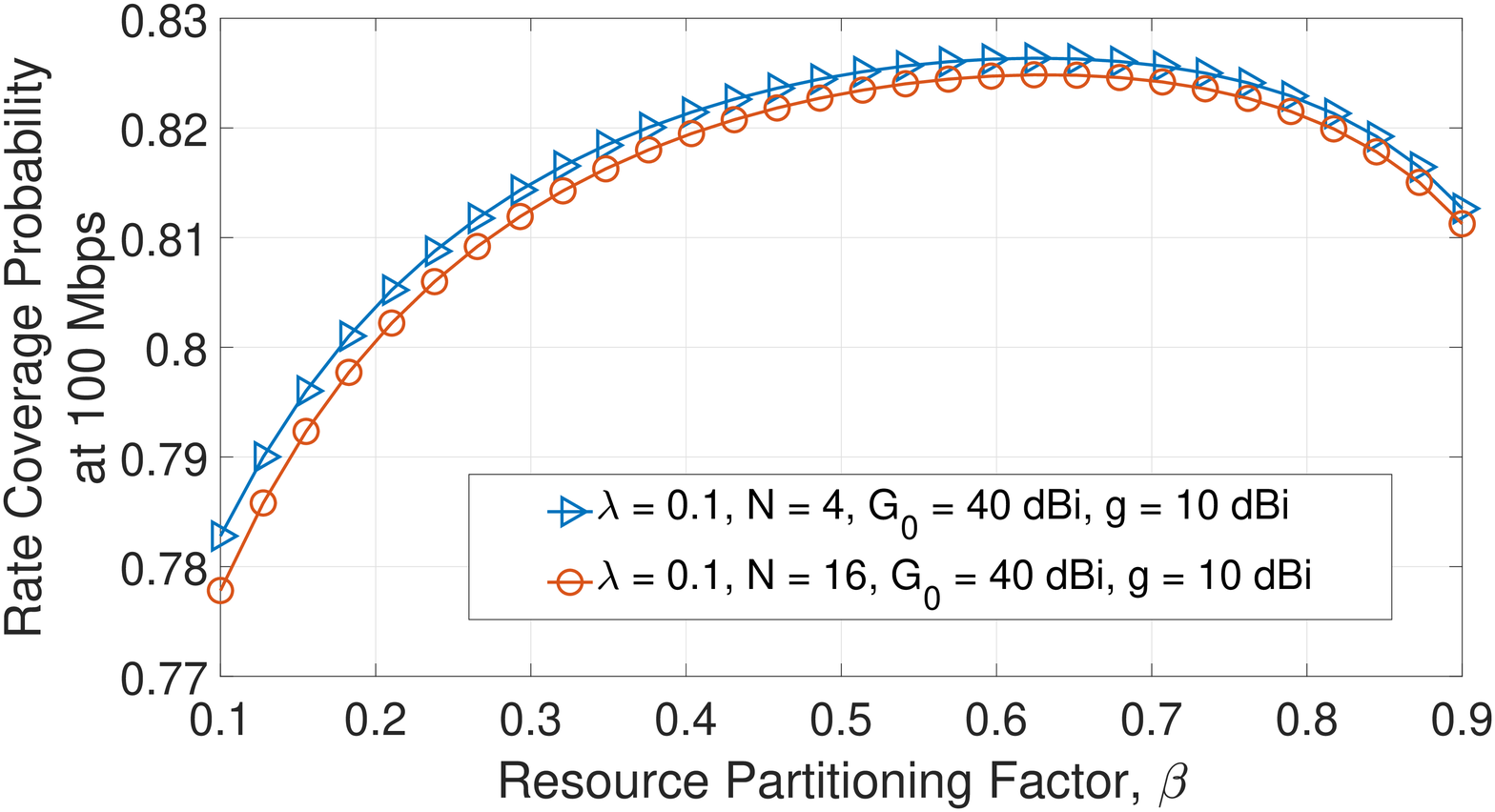}
% % \label{fig:Rate2}}
% \subfloat[]
% {\includegraphics[width = 8cm]{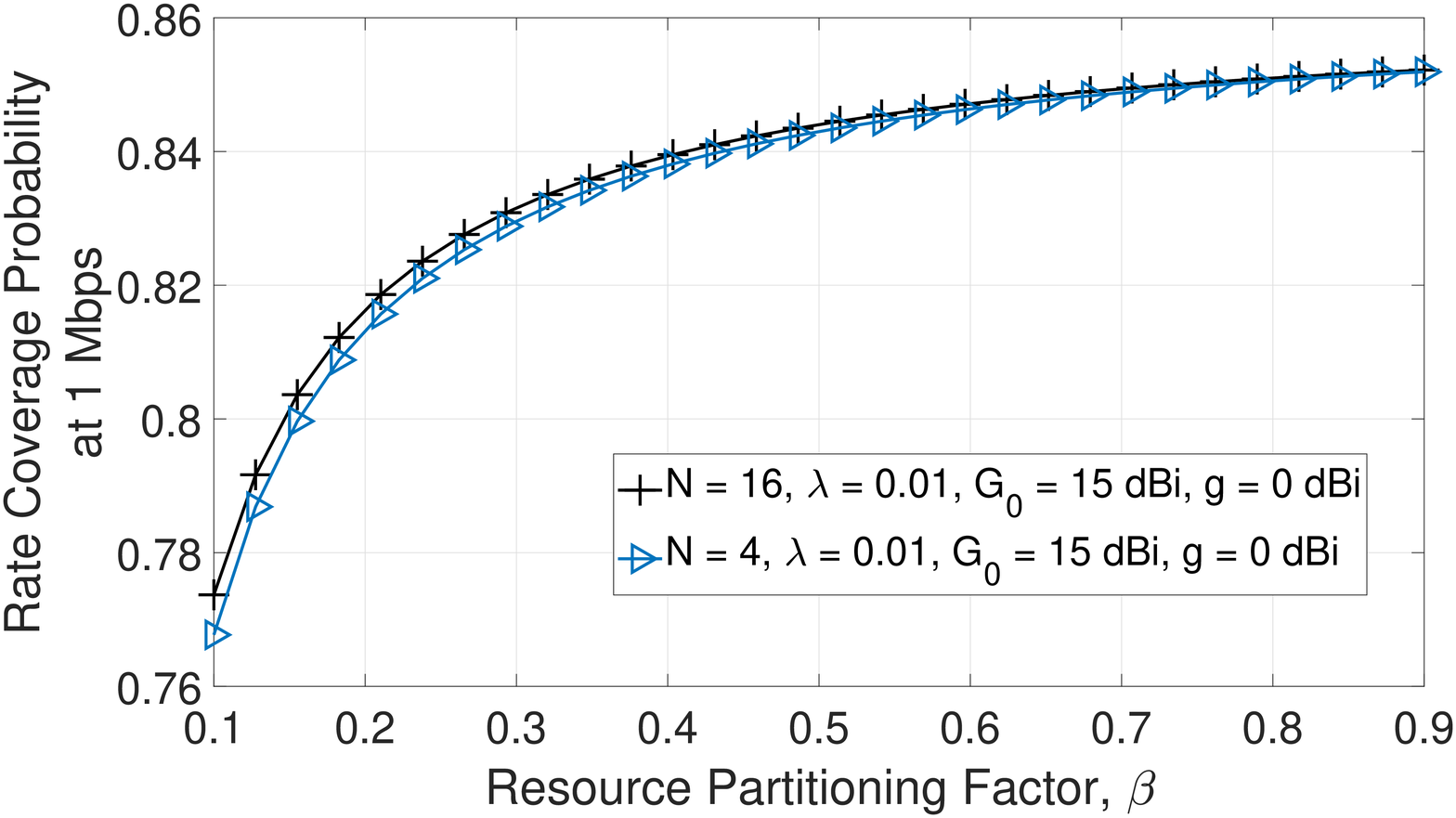}
% \label{fig:Rate3}}
% \caption{Rate coverage probability versus the resource partitioning factor for different beam dictionary sizes.}
% %{[BD: (i) Can't we have the same order while exposing the parameters in the legend. It's not so easy to track the main differences btw. (a) and (b) in terms of settings at very first sight ? (ii) Why stating $G$ on Fig. (b) ?]}}
% \vspace*{-1cm} \end{figure}
\begin{figure}
\centering
\subfloat[]
{\includegraphics[height = 4cm, width = 8cm]{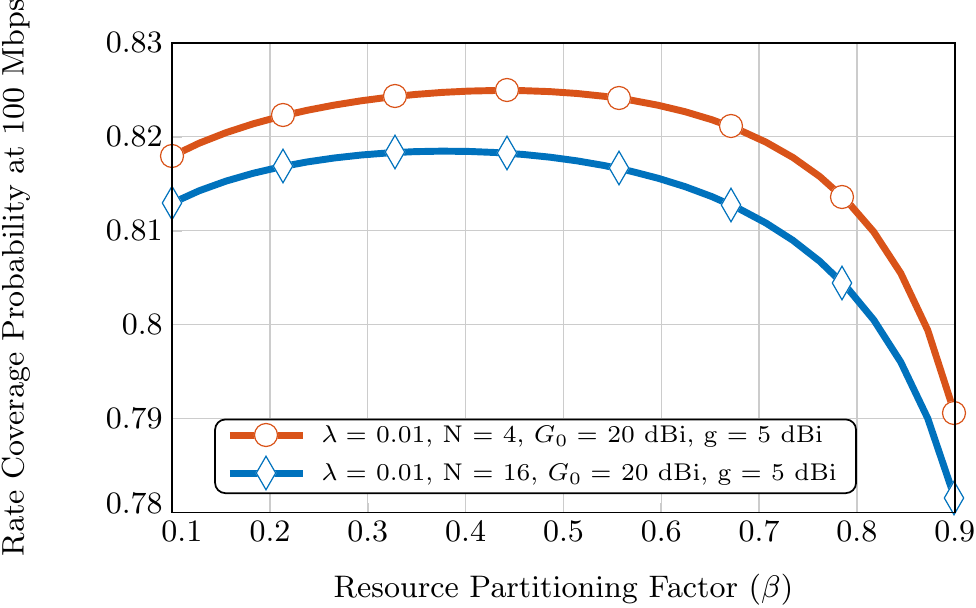}
\label{fig:Rate1}}
%\hfill
% \subfloat[]
% {\includegraphics[width = 5cm]{c2.eps}
% \label{fig:Rate2}}
\subfloat[]
{\includegraphics[height = 4cm, width = 8cm]{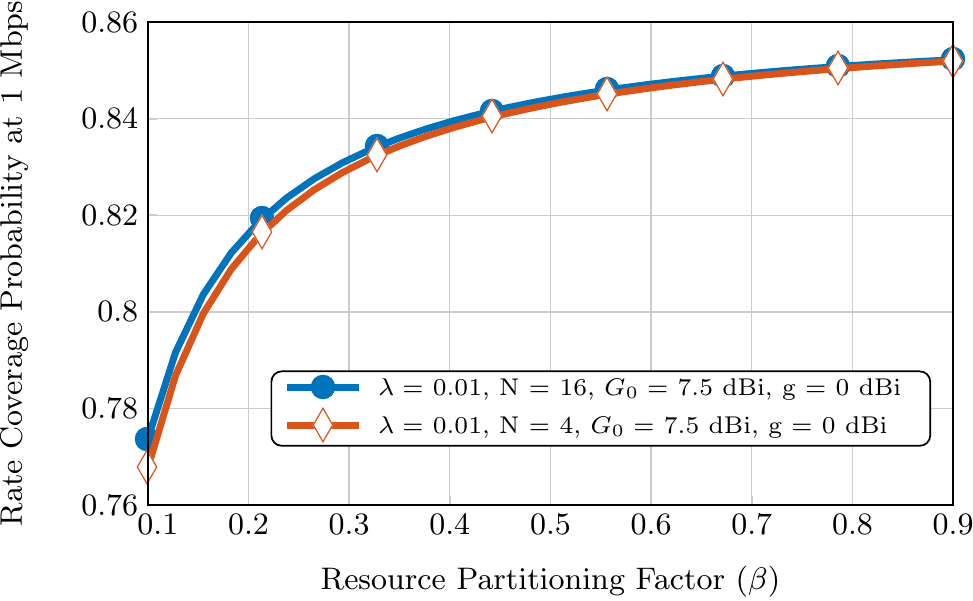}
\label{fig:Rate3}}
\caption{Rate coverage probability versus the resource partitioning factor for different beam dictionary sizes.}
%{[BD: (i) Can't we have the same order while exposing the parameters in the legend. It's not so easy to track the main differences btw. (a) and (b) in terms of settings at very first sight ? (ii) Why stating $G$ on Fig. (b) ?]}}
\vspace*{-1cm} \end{figure}
% \begin{figure}
% \centering
% \subfloat[]
% {\includegraphics[width = 8cm]{c3.eps}
% \label{fig:Rate3}}
% % \hfill
% % \subfloat[]
% % {\includegraphics[width = 8cm]{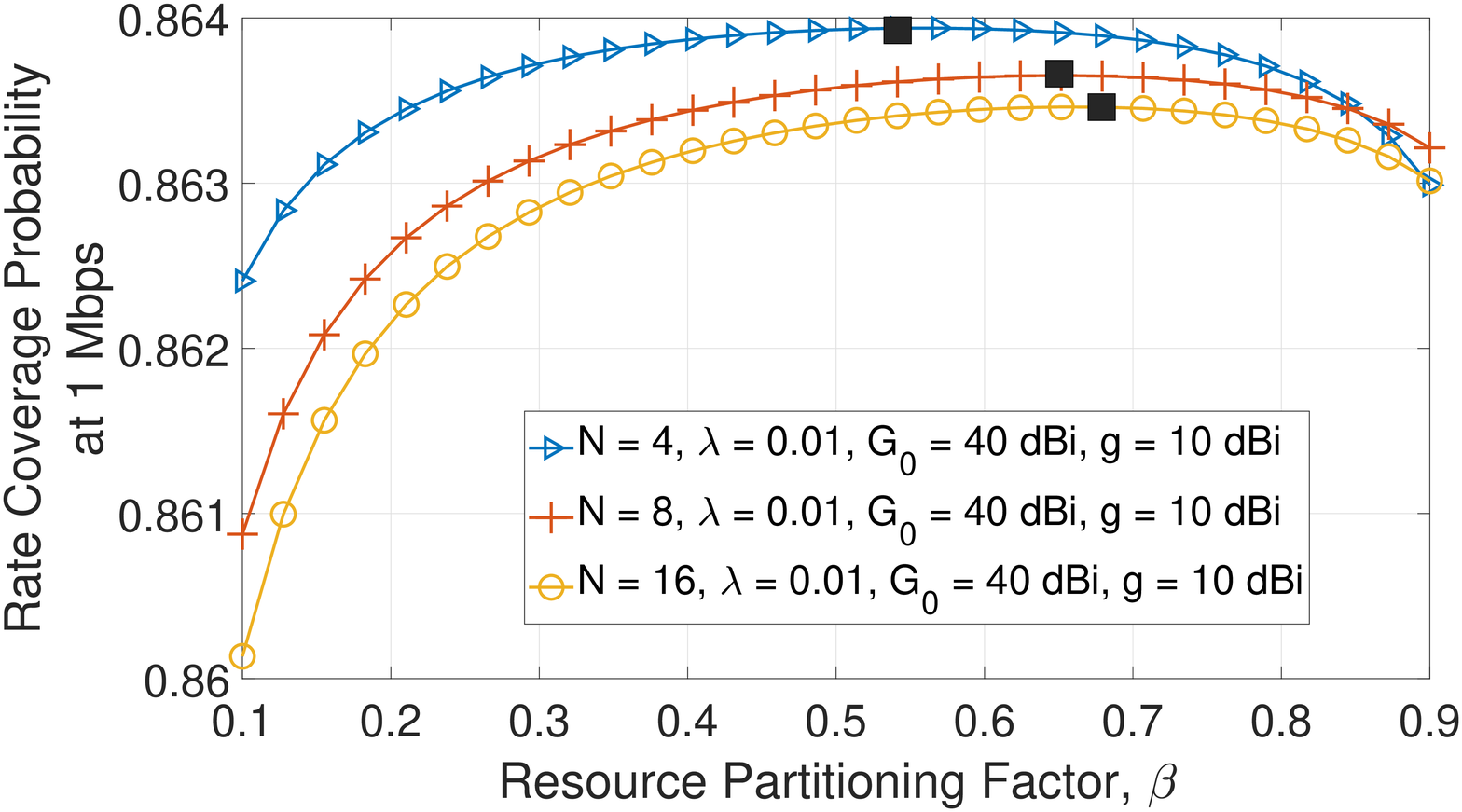}
% % \label{fig:Rate4}}
% \caption{Rate coverage probability versus the resource partitioning factor for different beam dictionary sizes. \textcolor{magenta}{MC: merge the 3 figures in 1 figure and 3 subfigures and add captions. In Fig10, beta should decrease at some moment, no?}}
% \vspace*{-1cm} \end{figure}
It must be noted that the rate coverage performance does not only depend on the antenna gains and $\lambda$, but also on the measurement noise. We {study this point} in the following sub-section.
% \begin{figure}
% \centering
% \includegraphics[width = 8cm]{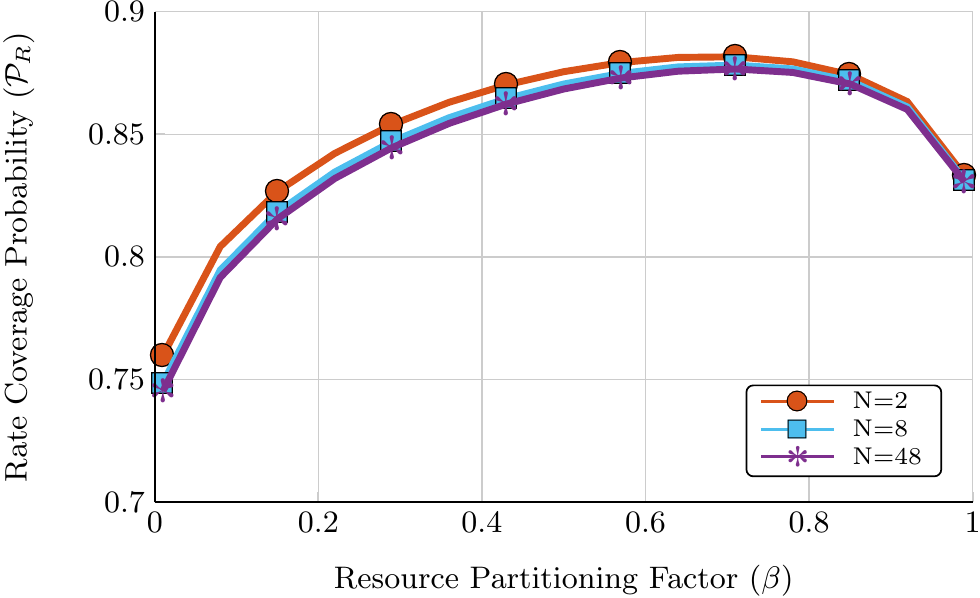}
% \caption{{Rate coverage probability vs resource partitioning factor for different beam dictionary sizes.}}
% \label{fig:Rate3}
% \vspace*{-1cm} \end{figure}

\begin{figure}
\centering
\subfloat[]
{\includegraphics[width =8cm, height = 4cm]{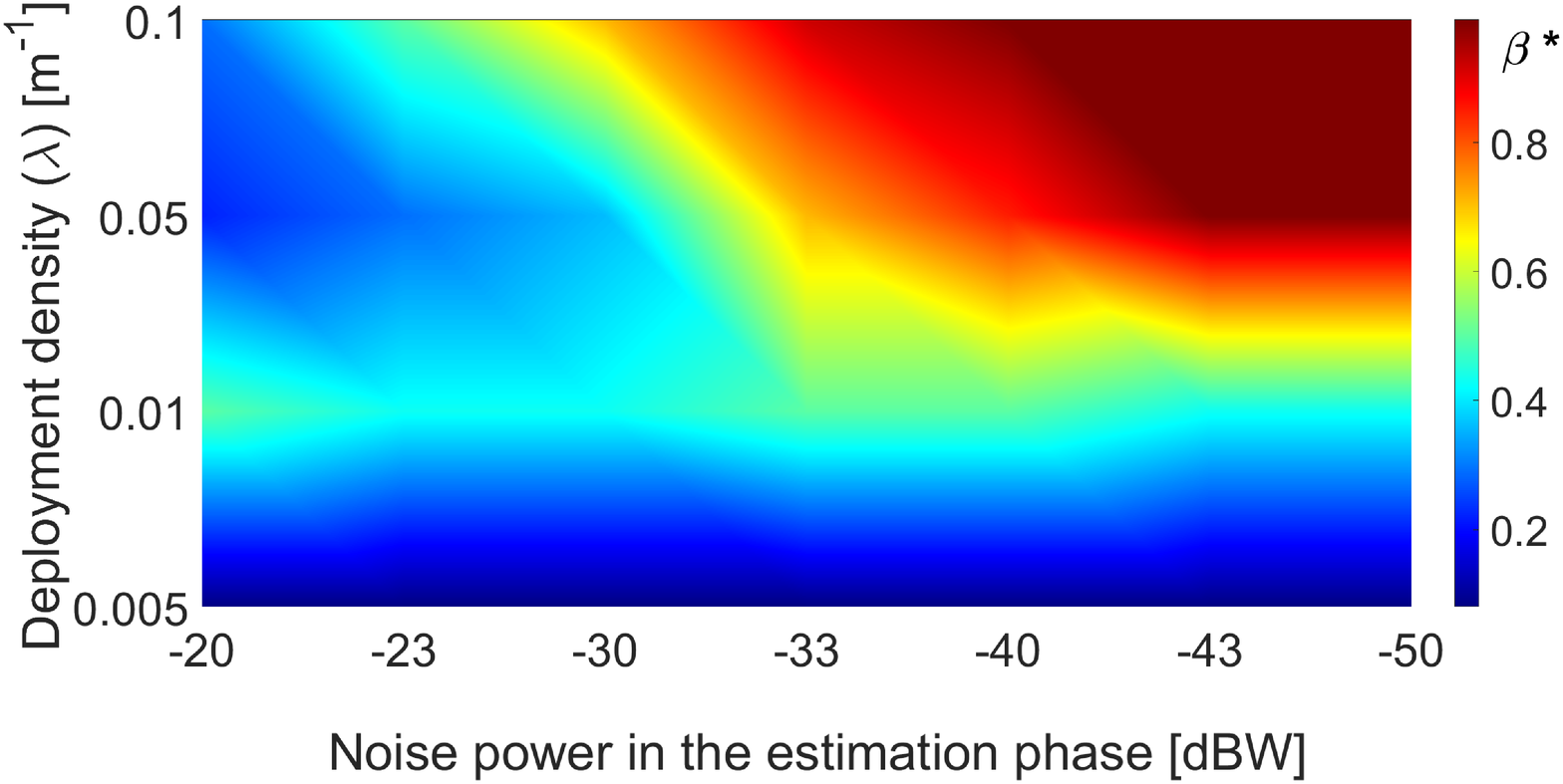}
\label{fig:Opt_beta}}
\hfill
\subfloat[]
{\includegraphics[width =8cm, height = 4cm]{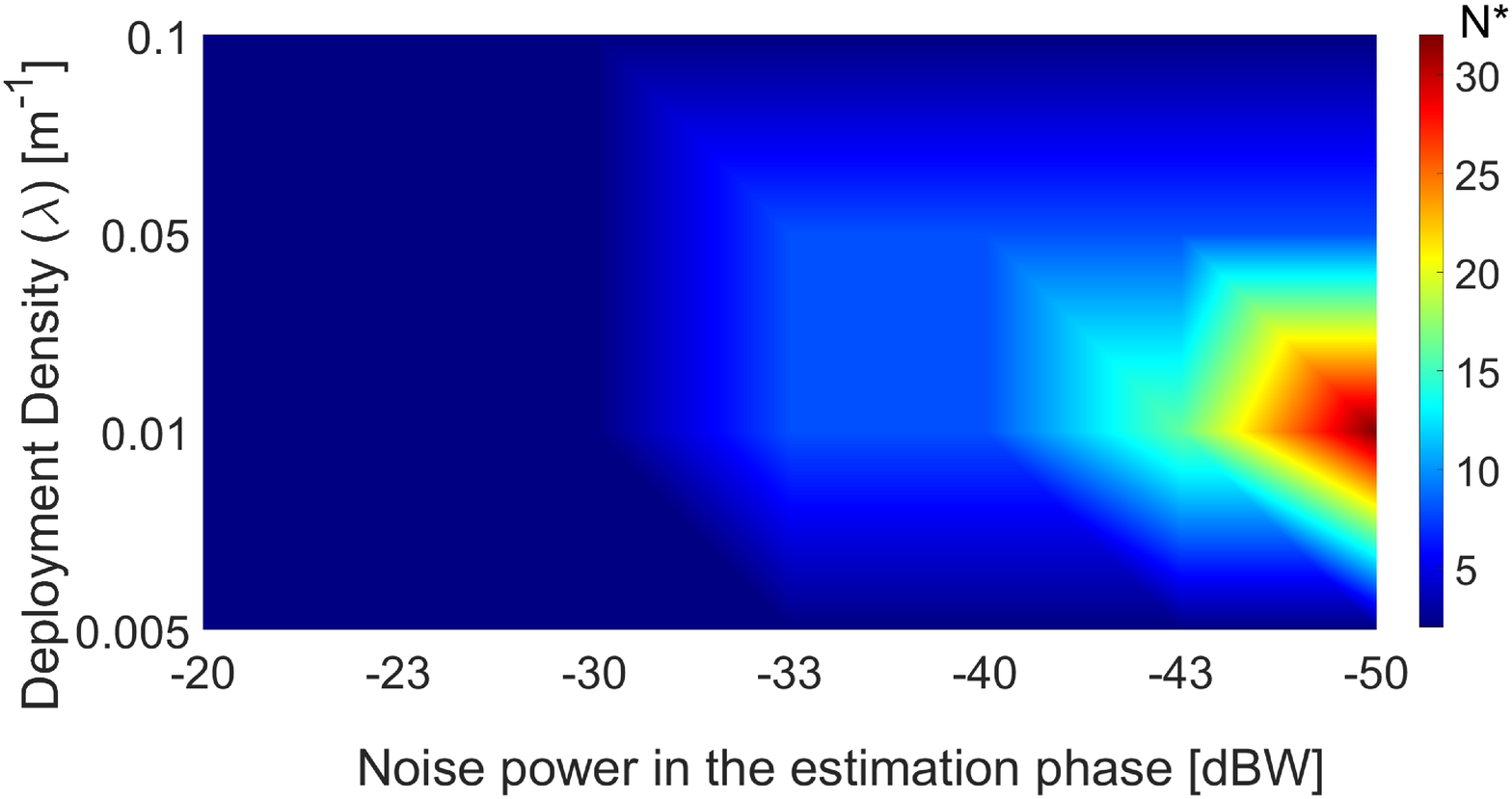}
\label{fig:Opt_N}}
\caption{(a) Optimal value of $\beta$ with respect to deployment density and noise; (b) Optimal beam-dictionary size with respect to deployment density and the noise.}
\vspace*{-1cm} \end{figure}

\vspace*{-0.5cm}
\subsection{{Optimal Partition factor and Beam Dictionary Size}}
%\textcolor{magenta}{MC: I would say influence of the bandwidth as $N=N_0W$ and I would vary $W$ bw typical values of mmW. Also: $N$ is not an estimation noise, it is thermal noise.}
{In this Section, we discuss the results obtained solving the transmit beamwidth and radio frame structure problem presented in \eqref{initial}.}
In Fig.~\ref{fig:Opt_beta} we plot the optimal values of $\beta$ with respect to the \ac{BS} deployment density $\lambda$ and the {noise power {$N_0 B$} [dBW]}. {For the optimization problem, we have considered $\epsilon = \epsilon' = 0.1$}. 
%{[(i) Shouldn't we specify more explicitly which "estimation" we're talking about, e.g., by making a reference to the adequate Eq.; (ii) please mind and clarify the exact meaning of $N_0$ and the related units accordingly, i.e., If it's "noise power", it should be expressed in dBm; if it's "noise power spectral density" it's expressed in dBm/Hz; if it's "SNR" it's expressed as a ratio in dB...]}
With low noise power (e.g., -50 dBW) the optimal value of $\beta$ is closer to 1 for higher $\lambda$. This is due to the fact that for low estimation noise and densely deployed \acp{BS}, even a limited amount of resources allocated to the localization phase results in a good localization performance. Thus, the optimal solution is to allocate large resources to the data phase for enhancing the rate coverage. On the other hand, for sparsely deployed \acp{BS}, larger amount of resources are required for efficient localization and the value of $\beta$ decreases, even for the case of low estimation noise.

Interestingly, in the case of high estimation noise (e.g., {$N_0 B$} = -20 dBW), when increasing the small cell density, the optimal $\beta$ increases at first and then decreases. This is due to the fact that for dense deployment of \acp{BS}, in case of high noise power of estimation, the effect of the beam-selection error is notable due to the concurrent large interference (since interfering \acp{BS} are closer due to higher density). This requires a lower value of optimal $\beta$ to facilitate efficient localization and reduce localization errors. Thus for higher estimation noise, the behaviour of optimal $\beta$ is not monotonous with respect to the deployment density.

In Fig.~\ref{fig:Opt_N} we plot the optimal beam-dictionary size with respect to $\lambda$ and {$N_0 B$}. For high estimation noise power, large beams (i.e., smaller dictionaries) must be used so as that the beam-selection error is limited. In case the estimation noise is low (e.g. -50 dBW), the optimal size of the beam dictionary at first increases with the deployment density, due to the fact that larger antenna gains improve the rate coverage. However, after a certain point (i.e., for very dense deployments), the optimal beam-dictionary size decreases to limit the beam-selection errors, which would have a large impact on the user performance due to the concurrent high interference.

\section{Conclusion}
\label{sec:Con}
In this paper we studied a mm-wave system deployed along the roads of a city to support localization and communication services simultaneously. We have proposed a novel localization bound-assisted initial beam-selection method for the mobile users, which reduces the latency of initial access by upto 75\%. Then {the localization performance bounds have also been used to derive} the downlink data-rate of the network in a system supporting jointly the localization and communication services. Our results highlight that increasing the resources allocated to the localization functions may or may not enhance the user data-rate. As a result, the study of the optimal resource partitioning factor is non-trivial.
Consequently, we have {highlighted and explored the main} trends in the optimal resource partitioning factor and mm-wave beamwidth with respect to the rate coverage probability, with varying BS deployment density, antenna gain, and {estimation noise}. Finally, we provided several key system-design insights {and guidelines} based on our results. This will aid a network operator to cater to the outdoor mobile users, which are a key target for the first generation deployments of outdoor \ac{mm-wave} \ac{BS}.
% \vspace{-1cm}
% \section*{Acknowledgement}
% This work has been carried out in the frame of the SECREDAS project, which is partly funded by the European Commission (H2020 EU.2.1.1.7 ECSEL – GA 783119).
% \vspace{-1cm}
\appendices
   \vspace*{-0.3cm}

    \section{}
    \label{app:avg_BSerror}
    
%   First note that for a given beamwidth of $\theta_N$, all the terms $d_{{{ R}}_{jN}}$ and $d_{{{ L}}_{jN}}$ can be represented in terms of $d_{{ a}}$ and $d_{{{ L}}_{1N}}$, where $d_{{ a}}$ is the random variable representing the inter-BS distance by the equivalence: $d_{{{ R}}_{NN}} = d_{{ a}}$ \textcolor{red}{may this should be moved where used; here it is not clear}.
   
 {\it Proof of {Lemma}~\ref{theo:avg_BSerror}:} Beam-selection error occurs for a {user in coverage of the} beam $j$ of beamwidth $\theta_k$ when the {estimated position} lies outside the {beam $j$}.
    Thus, the probability of beam-selection error, in case the user is estimated to be located at $\hat{d}$, is computed as: 
    \begin{align}
\mathcal{P}_{{ BS},{j,k}}\left(d,\sigma^2_d\right) &= \mathbb{P}\left({\hat{d}} < d_{{ L}_{jN}}\right) + \mathbb{P}\left({\hat{d}} > d_{{ R}_{jN}}\right) {\overset{(a)}{=} 1 - \mathcal{Q}\left(\frac{d_{{ L}_{jN}} - {d}}{{\sigma_d}}\right) + \mathcal{Q}\left(\frac{d_{{ R}_{jN}} - {d}}{{\sigma_d}}\right)}\nonumber,
\end{align}
{where (a) is due to the Gaussian nature of the error around mean $d$ and variance $\sigma_d^2$.}
Then, the probability of beam-selection error and the typical {user} is in the coverage area of the $j$-{th} beam:
{$$
\bar{\mathcal{P}}_{{ BS},j,k} = \int_{d_{L_{jN}}}^{d_{R_{jN}}} \mathcal{P}_{{ BS}, j,N}(x) f_{d}(x) dx.
$$}
Finally, the {average} beam-selection error for the localization based beam-selection scheme with a beam-dictionary size of $N$ is calculated as: $
\bar{\mathcal{P}}_{{ BS}} = \mathbb{E}_{d_a}\left[\sum_{j = 1}^{N(d_a)}\bar{\mathcal{P}}_{{ BS},j,k} \right]$.

\section{}
\label{app:CovP_LOS}
{\it Proof of Theorem~\ref{theo:CovP_LOS}:} Let index $1$ denote the serving BS and $z_k=\sqrt{d^2_k+h^2_B}$ the distance between the $k$-th BS and the typical UE. The probability that the SINR at the typical user is larger than a threshold $T$, in case of absence of beam-selection error and misalignment error is:
\begin{subequations}
\small{
\begin{align}
&\mathcal{T}_0 =\mathbb{P}\left(SINR_C\geq T\right)= \mathbb{P}\left(\frac{P_t K \Gamma z_1^{-\alpha_L} |f_1|^2}{{N_0} + P_t K g^2 \left(\sum_{i \in\xi_L \backslash \{1\}} z_i^{-\alpha_L}|f_i|^2 + \sum_{j \in\xi_N} z_j^{-\alpha_N}|f_j|^2\right)} \geq T\right) \nonumber \\
& = \mathbb{P}\left(|f_1|^2 \geq \frac{{T} {N_0} + P_t K g^2 \left(\sum_{i \in\xi_L \backslash \{1\}} z_i^{-\alpha_L}|f_i|^2 + \sum_{j \in\xi_N} z_j^{-\alpha_N}|f_j|^2\right)}{P_t K \Gamma z_1^{-\alpha_L}}\right)\nonumber  \\
& = \sum\limits_{n = 1}^{N_L}(-1)^{n+1} \; \binom {N_L}n {\mathbb{E}}\left[\exp\left(-\frac{n \eta_L T {N_0}}{P_t K \Gamma z_1{^{{-\alpha_L}}}} - \frac{n \eta_LTg^2\sum_{i \in\xi_L \backslash \{1\}} z_i^{-\alpha_L}|f_i|^2}{\Gamma z_1^{-\alpha_L} } - \frac{n \eta_L Tg^2\sum_{j \in\xi_N} z_j^{-\alpha_N}|f_j|^2}{\Gamma z_1^{-\alpha_L} }\right)\right] \nonumber \\
& = \sum\limits_{n = 1}^{N_L}(-1)^{n+1} \; \binom {N_L}n \exp\left(-\frac{n \eta_L T {N_0} }{P_t K \Gamma z_1{^{{-\alpha_L}}}}\right)\mathbb{E}_{|f_i|^2,\xi_L\backslash\{{1}\}}\left[\exp\left(- \frac{n \eta_L Tg^2\sum_{i \in\xi_L \backslash \{1\}} z_i^{-\alpha_L}|f_i|^2}{\Gamma z_1^{-\alpha_L} }\right)\right] \nonumber \\ &\hspace{5cm}\mathbb{E}_{|f_j|^2,\xi_N}\left[\exp\left(- \frac{n \eta_L Tg^2\sum_{j \in\xi_N} z_j^{-\alpha_N}|f_j|^2}{\Gamma z_1^{-\alpha_L} }\right)\right],  \nonumber
\end{align}
}
\end{subequations}
where $\Gamma = \gamma_{{ B}}(\theta_{{ B}})\gamma_{{ U}}(\theta_{{ U}})$ and $\eta_L = N_L (N_L!)^{-\frac{1}{N_L}}$~\cite{bai2015coverage}. Now,
\begin{subequations}
\small{
\begin{align}
&\mathbb{E}_{|f_i|^2,\xi_L\backslash\{{1}\}}\left[\exp\left(- \frac{n \eta_L Tg^2\sum_{i \in\xi_L \backslash \{1\}} z_i^{-\alpha_L}|f_i|^2}{ \Gamma z_1^{-\alpha_L} }\right)\right]\nonumber   = \mathbb{E}\left[\underset{i \in\xi_L \backslash \{1\}}{\prod} \mathbb{E}_{|f_i|^2}\left[\exp\left(- \frac{n \eta_L Tg^2 z_i^{-\alpha_L}|f_i|^2}{ \Gamma z_1^{-\alpha_L} }\right)\right]\right] \nonumber \\
& = \exp\left(\int_{d_1}^{d_S} 1 - \mathbb{E}_{|f_i|^2}\left[\exp\left(- \frac{n \eta_L Tg^2 (x^2 + h_B^2)^{\frac{-\alpha_L}{2}}|f_i|^2}{ \Gamma z_1^{-\alpha_L} }\right)\right]2\lambda x dx\right) \nonumber\\
&=  \exp\left(-2\lambda \int_{d_1}^{d_S}1- \frac{1}{\left( \frac{\eta_L Tg^2 (x^2 + h_B^2)^{\frac{-\alpha_L}{2}}|f_i|^2}{ N_L \Gamma z_1^{-\alpha_L} }\right)^{N_L}}x dx\right) \nonumber.
\end{align}
}
\end{subequations}
The NLOS case follows similarly.
% \begin{subequations}
% \begin{align}
% \scalemath{.9}{\mathbb{E}_{|f_j|^2,\xi_N}\left[\exp\left(- \frac{ Tg^2\sum_{j \in\xi_N} z_j^{-\alpha_N}|f_j|^2}{ \Gamma z_1^{-\alpha_L} }\right)\right] = \exp\left(-\int_{d_S}^{\infty}\frac{Tg^2(x^2 + h_B^2)^{\frac{-\alpha_N}{2}}}{ \Gamma z_1^{-\alpha_L} + Tg^2(x^2 + h_B^2)^{\frac{-\alpha_N}{2}}} 2\lambda x dx\right)\nonumber.}
% \end{align}
% \end{subequations}
To calculate $\mathcal{T}_{BS}$ and $\mathcal{T}_{MA}$, respectively in the events of beam-selection error and misalignment, we replace the values of $ \gamma_{{ B}}(\theta_{{ B}})$ and $\gamma_{{ U}}(\theta_{{ U}})$ with $g$ according to \eqref{eq:Gain}. Then, from the theorem of total probability, the SINR coverage at a distance $d_{1}$ is calculated. Conditioning on $d_1$ lying between $d_{L_i}$ and $d_{R_i}$ completes the proof.
\vspace{-0.3cm}
\bibliographystyle{IEEEtran}
	\bibliography{refer.bib}

\end{document}